\documentclass[10pt,a4paper]{article}
\usepackage{fullpage,graphicx,amssymb,amsmath,xspace,amsfonts,tabularx,rotating,url,float,color}
\usepackage[vlined,linesnumbered,ruled,algosection]{algorithm2e}
\usepackage{cyrillic}
\newcommand{\cyrrm}{\fontencoding{OT2}\selectfont\textcyrup}

\pdfpagebox5

\newcommand{\Reals}{\mathbb{R}}

\newcommand{\eps}{\varepsilon}
\def\andfrac#1/#2{%
   \leavevmode\kern.1em
   \raise.5ex\hbox{\the\scriptfont0 #1}\kern-.1em
   /\kern-.15em\lower.25ex\hbox{\the\scriptfont0 #2}}

\def\volume{\ensuremath{\mathrm{vol}}\xspace}
\def\relvolume{\ensuremath{\mathrm{fraction}}\xspace}
\def\succ#1{\ensuremath{\hat{#1}}\xspace}
\def\pred#1{\ensuremath{\check{#1}}\xspace}
\def\base{\ensuremath{b}}
\def\depth{\ensuremath{\ell}}
\def\TRGC{\ensuremath{3\textsc{rgc}}}
\def\BRGC{\ensuremath{2\textsc{rgc}}}
\def\kRGC{\ensuremath{\base\textsc{rgc}}}
\def\reps#1#2{\ensuremath{(#1)_{#2}}}
\def\takeout#1{\ensuremath{\begin{smallmatrix}#1\\\times\end{smallmatrix}}}
\def\reduction#1#2{\ensuremath{\begin{smallmatrix}#1=#2\\\times\end{smallmatrix}}}
\def\mirror#1{\ensuremath{\begin{smallmatrix}#1\\\ominus\end{smallmatrix}}}
\def\putin#1#2{\ensuremath{\begin{smallmatrix}#1=#2\\\wedge\end{smallmatrix}}}
\def\lift#1{\ensuremath{\mathrm{lift}_{#1}}\xspace}
\def\diaglift{\ensuremath{\mathrm{double}}\xspace}
\def\inv#1{\ensuremath{\overline{#1}}\xspace}
\def\identity{\ensuremath{id}\xspace}
\def\halffix{first-half\xspace}
\def\Halffix{First-half\xspace}
\def\lefthalf#1{\ensuremath{#1^{\dashv}}}
\def\righthalf#1{\ensuremath{#1^{\vdash}}}

\makeatletter
\newcommand\floatc@mybox[2]{\vbox{\hbadness10000
\moveleft3.4pt\vbox{\advance\hsize by6.8pt
\hrule \hbox to\hsize{\vrule\kern3pt
\vbox{\kern3pt\vbox{\advance\hsize by-6.8pt{\@fs@cfont #1} #2}\kern3pt}\kern3pt\vrule}}}}%
\newcommand\fs@mybox{\def\@fs@cfont{\bfseries}\let\@fs@capt\floatc@mybox
\def\@fs@pre{\setbox\@currbox\vbox{\hbadness10000
\moveleft3.4pt\vbox{\advance\hsize by6.8pt
\hrule \hbox to\hsize{\vrule\kern3pt
\vbox{\kern4.5pt\box\@currbox\kern4.5pt}\kern3pt\vrule}\hrule}}}%
\def\@fs@mid{}%
\def\@fs@post{}%
\let\@fs@iftopcapt\iftrue}
\makeatother
\floatstyle{mybox}
\newfloat{inset}{tb}{ins}
\floatname{inset}{Inset}

\newtheorem{lemma}{Lemma}
\newtheorem{corollary}{Corollary}
\newtheorem{theorem}{Theorem}
\newtheorem{definition}{Definition}
\newenvironment{proof}{Proof:}{\qed}
\def\squareforqed{\hbox{\rlap{$\sqcap$}$\sqcup$}}
\def\qed{\ifmmode\squareforqed\else{\unskip\nobreak\hfil
\penalty50\hskip1em\null\nobreak\hfil\squareforqed
\parfillskip=0pt\finalhyphendemerits=0\endgraf}\fi}

\SetFuncSty{textsc}
\SetDataSty{textit}
\SetCommentSty{textrm}
\SetKwFunction{Extract}{ExtractFirstDigit}
\SetKwData{Forward}{forward}
\SetKwData{Direction}{direction}
\SetKwData{Permutation}{axis}
\SetKwData{AltPermutation}{altAxis}
\SetKwData{Rotation}{rotation}
\SetKwData{NewRotation}{newRotation}
\SetKwData{Reverse}{reverse}
\SetKwData{Reflected}{reflected}
\SetKwData{NumberOfTwos}{numberOfTwos}
\SetKwData{pFirstDigit}{pDigit}
\SetKwData{qFirstDigit}{qDigit}
\SetKwData{iMinusOne}{iMinusOne}
\SetKwData{Rank}{rank}
\SetKw{True}{true}
\SetKw{False}{false}
\DontPrintSemicolon
\SetKwData{Empty}{empty}
\newcommand\Not{\ensuremath{\neg}}

\SetKw{XOR}{XOR}

\newcommand{\XSays}[3]{{\color{#2}
      {$\rule[-0.12cm]{0.2in}{0.5cm}$\fbox{\tt
            #1:} }%
      \itshape #3
      \marginpar{\color{#2}\tt #1}%
      \def\comment{#3}\def\empty{}\ifx\comment\empty\else
      {$\rule[0.1cm]{0.3in}{0.1cm}$\fbox{\tt
            end}$\rule[0.1cm]{0.3in}{0.1cm}$} \fi
   }%
}

\begin{document}
\title{Harmonious Hilbert curves\\and other extradimensional space-filling curves\footnote{This manuscript has been put on arXiv.org for early dissemination of the results. I have not found the time (yet) to prepare this manuscript (or parts of it) for submission to be published in a journal or a conference. I welcome any corrections and other comments which could help me to improve this manuscript and/or to disseminate the results further: please mail me at cs.herman@haverkort.net.}}
\author{%
Herman~Haverkort\thanks{Dept.\ of Computer Science, Eindhoven University of Technology, the Netherlands, cs.herman@haverkort.net}
}
\maketitle

\begin{abstract}
This paper introduces a new way of generalizing Hilbert's two-dimensional space-filling curve to arbitrary dimensions. The new curves, called \emph{harmonious Hilbert curves}, have the unique property that for any $d' < d$, the $d$-dimensional curve is compatible with the $d'$-dimensional curve with respect to the order in which the curves visit the points of any $d'$-dimensional axis-parallel space that contains the origin. Similar generalizations to arbitrary dimensions are described for several variants of Peano's curve (the original Peano curve, the coil curve, the half-coil curve, and the Meurthe curve). The $d$-dimensional harmonious Hilbert curves and the Meurthe curves have neutral orientation: as compared to the curve as a whole, arbitrary pieces of the curve have each of $d!$ possible rotations with equal probability. Thus one could say these curves are `statistically invariant' under rotation---unlike the Peano curves, the coil curves, the half-coil curves, and the familiar generalization of Hilbert curves by Butz and Moore.

In addition, prompted by an application in the construction of R-trees, this paper shows how to construct a $2d$-dimensional generalized Hilbert or Peano curve that traverses the points of a certain $d$-dimensional diagonally placed subspace in the order of a given $d$-dimensional generalized Hilbert or Peano curve.

Pseudocode is provided for comparison operators based on the curves presented in this paper.
\end{abstract}

\section{Introduction}
\label{sec:intro}

\paragraph{Space-filling curves}
A space-filling curve in $d$ dimensions is a continuous, surjective mapping from $\Reals$ to $\Reals^d$.
In the late 19th century Peano~\cite{Peano} described such mappings for $d = 2$ and $d = 3$. Since then, quite a number of space-filling curves have appeared in the literature, and space-filling curves have been applied in diverse areas such as spatial databases, load balancing in parallel computing, improving cache utilization in computations on large matrices, finite element methods, image compression, and combinatorial optimization~\cite{Bader}.

Space-filling curves are usually recursive constructions that map the unit interval $[0,1]$ to a subset of $\Reals^d$ that has measure larger than zero. In the case of Peano's two-dimensional curve, the subset of $\Reals^d$ that is `filled' is the unit square. Peano's curve is based on subdividing this unit square into a grid of $3 \times 3$ square cells, and simultaneously subdividing the unit interval into nine subintervals. Each subinterval is then matched to a cell; thus Peano's curve traverses the cells one by one in a particular order. The mapping from unit interval to unit square is refined by applying the procedure recursively to each subinterval-cell pair, so that within each cell, the curve makes a similar traversal (see Figure~\ref{fig:peano2d}(a)). The result is a fully-specified mapping from the unit interval to the unit square. A mapping from $\Reals$ to $\Reals^2$ could then be constructed by inverting the recursion, recursively considering the unit interval and the unit square as a subinterval and a cell of a larger interval and a larger square. A higher-dimensional version would be based on subdividing a $d$-dimensional hypercube into a grid of $3^d$ hypercubic cells.

Peano's curve has been the curve of choice for certain applications. However, in many other applications preference is given to curves based on subdividing squares (or hypercubes) into only $2^d$ squares (or hypercubes). The cell in which a given point $p$ lies can then be determined by inspecting the binary representations of the coordinates of $p$ bit by bit, and no divisions by three need to be computed.
\begin{figure}
\centering
\hbox to \hsize{\hfill
(a)~\includegraphics[width=0.4\hsize]{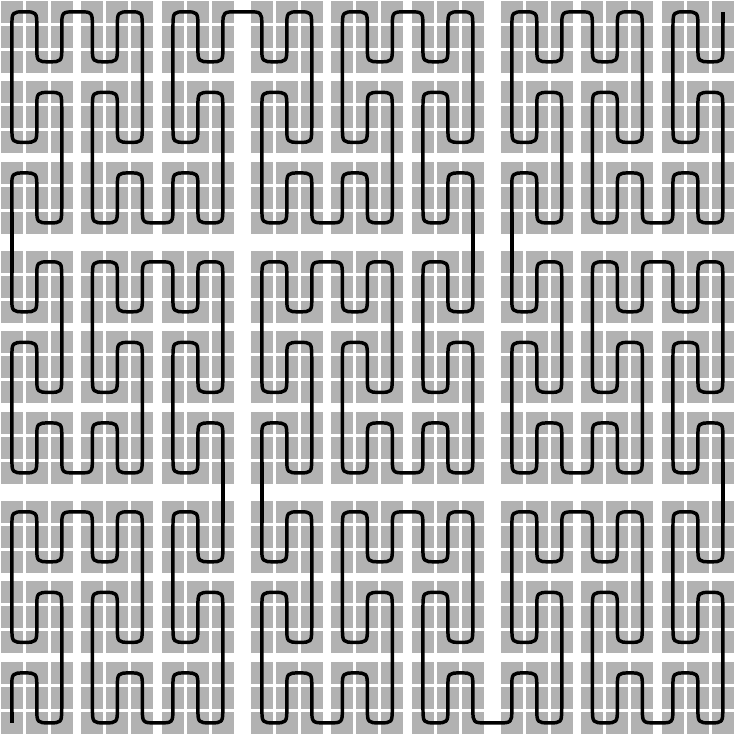}\hfill
(b)~\includegraphics[width=0.4\hsize]{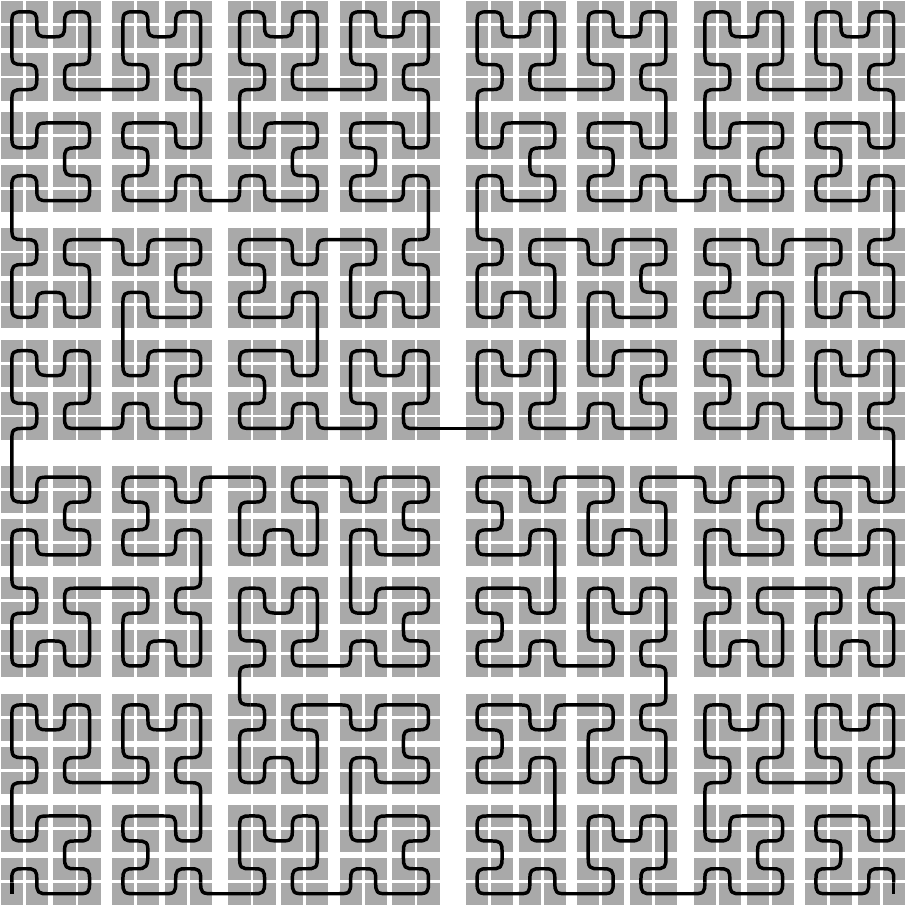}\hfill
}
\caption{(a) A sketch of Peano's space-filling curve.\quad (b) A sketch of Hilbert's space-filling curve.}
\label{fig:peano2d}\label{fig:hilbert2d}
\end{figure}
In response to Peano's publication, Hilbert~\cite{Hilbert} described a two-dimensional space-filling curve based on subdividing a square into four squares (Figure~\ref{fig:hilbert2d}(b)). However, he did not describe how to do something similar in higher dimensions. It is now well-established that it can be done; in fact there are many ways to define a three-dimensional curve based on subdividing a cube into eight octants~\cite{Alber,Haverkort3D}.

\paragraph{The goal: extradimensional space-filling curves}
In this paper we consider different ways of generalizing Hilbert curves, Peano curves and similar curves to higher dimensions. Let $P$ be a set of points on a $d'$-dimensional facet $U'$ of a $d$-dimensional unit hypercube $U$, such that the origin is a vertex of $U$ and $U'$. In this paper, we study generalizations of space-filling curves that have the following property: regardless of our choice of $P$ and $U'$, the order in which the points of $P$ are visited by the $d$-dimensional curve that fills $U$, is the same as the order in which these points are visited by the $d'$-dimensional curve that fills $U'$.

\begin{figure}
\centering
\hbox to \hsize{\hfill
\includegraphics[page=1,scale=1.15]{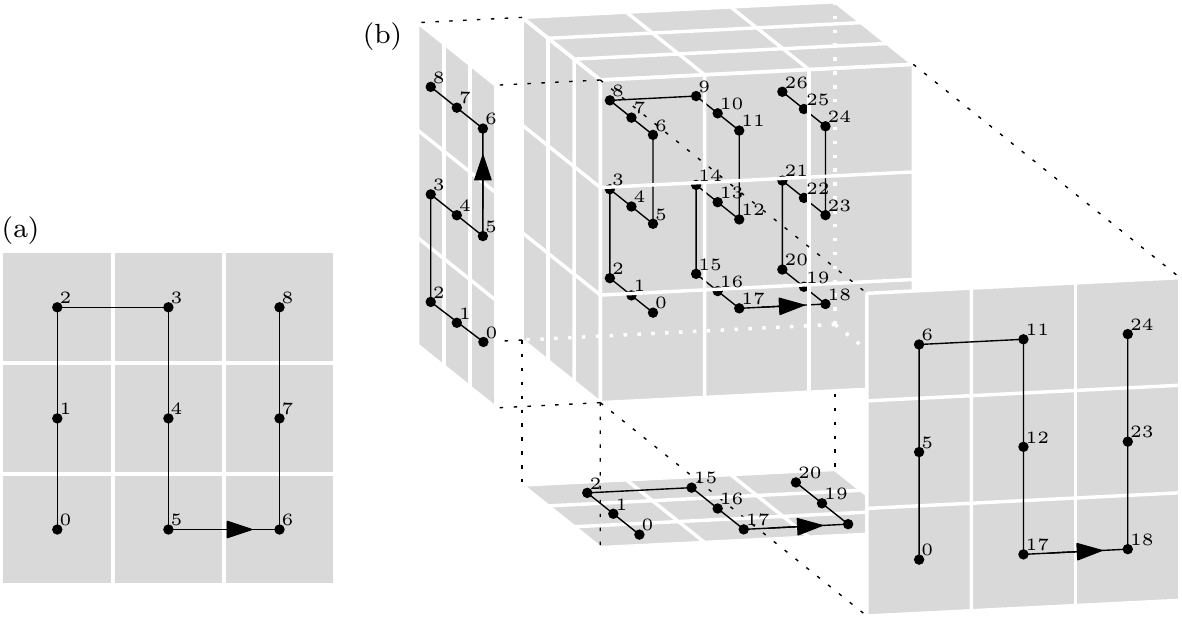}\hfill
}
\caption{The Peano curve in two and three dimensions. (a) Ordering of nine squares in 2D.\quad\break (b)~Ordering of 27 cubes in 3D, with the induced orderings on the front, left, and bottom side.}
\label{fig:peano-level1-visible-orders}
\end{figure}

\begin{figure}
\centering
\hbox to \hsize{\hfill
\includegraphics[page=2,scale=1.15]{visible-orders}\hfill
}
\caption{The Hilbert curve and the three-dimensional version of Butz-Moore. (a) Ordering of four squares in 2D.\quad (b) Ordering of eight cubes in 3D, with the induced orderings on the front, left, and bottom side.}
\label{fig:butzmoore-level1-visible-orders}
\end{figure}

\begin{figure}
\centering
\hbox to \hsize{\hfill
\includegraphics[page=3,scale=1.15]{visible-orders}\hfill
}
\caption{The Hilbert curve and the three-dimensional version of Butz-Moore. (a) Ordering of 16 squares in 2D.\quad (b) Ordering of 64 cubes in 3D, with the induced orderings on the front, left, and bottom side.}
\label{fig:butzmoore-level2-visible-orders}
\end{figure}

To get some intuition for what this means, consider Figure~\ref{fig:peano-level1-visible-orders}. Figure~\ref{fig:peano-level1-visible-orders}(a) shows the order in which the two-dimensional Peano curve traverses the nine subsquares of the unit square, and Figure~\ref{fig:peano-level1-visible-orders}(b) shows the order in which a three-dimensional Peano curve traverses the 27 subcubes of the unit cube. If we consider only the subcubes that appear on the front, left, or bottom face of the unit cube, then we see that on each face, these subcubes are visited in the order of the two-dimensional Peano curve. Figure~\ref{fig:butzmoore-level1-visible-orders}(a) shows the order in which the two-dimensional Hilbert curve traverses the four subsquares of the unit square, and Figure~\ref{fig:butzmoore-level1-visible-orders}(b) shows the order in which the three-dimensional generalization by Moore~\cite{Moore} (citing Butz~\cite{Butz}) traverses the eight subcubes of the unit cube. Again, we see that on each face, the subcubes are visited in the order of the two-dimensional curve. However, from these observations one cannot conclude that all points of the front, left or bottom face are visited in the order of the two-dimensional curve. Figure~\ref{fig:butzmoore-level2-visible-orders} shows the next level of refinement of the two-dimensional Hilbert curve and the three-dimensional Butz-Moore curve. The order in which the subcubes of the front, left or bottom face are visited no longer corresponds to the two-dimensional curve. Our goal in this paper is to identify generalizations of the Hilbert curve and several variants of the Peano curve, in which the correspondence is maintained for any level of refinement of the order in which squares and cubes are visited.

To make this statement more precise, we should first define the order in which points appear along the curve more precisely. Let $f$ be a space-filling curve, or more precisely, a continuous surjective function from $[0,1]$ to a unit hypercube $U$. Let a \emph{valid ordering} $\inv{f}$ of $f$ be any bijection from the unit hypercube $U$ to $[0,1]$, such that for any point $p \in P$ we have $f(\inv{f}(p)) = p$.

Let $\mu: \{0,...,d'-1\} \rightarrow \{0,...,d-1\}$, $d > d'$ be a strictly increasing injection, that is, for any $i, j \in \{0,...,d'-1\}$ with $i < j$ we have $\mu(i) < \mu(j)$. This function $\mu$ will be used as a function that selects $d'$ dimensions out of $d$, in the following way: given a $d'$-dimensional point $p = (p_0,...,p_{d'-1})$, we define $\lift\mu(p)$ as the $d$-dimensional point $(q_0,...,q_{d-1})$ such that $q_j = p_i$ when $j = \mu(i)$, and $q_j = 0$ when there is no $i$ such that $j = \mu(i)$. In other words, $\lift\mu(p)$ is obtained from $p$ by inserting zero coordinates at the positions not selected by $\mu$. For a set of points $P$, let $\lift\mu(P)$ be the set $\bigcup_{p \in P} \lift\mu(p)$.

\begin{definition}\label{def:extradimensional}
A $d$-dimensional space-filling curve $f$ is \emph{extradimensional to} a $d'$-dimensional space-filling curve $f'$ with co-domain $U'$, if the following holds for any valid ordering $\inv{f'}$ of $f'$ and any choice of $\mu$: there is a valid ordering $\inv{f}$ of $f$ such that for any pair of points $a,b\in U'$ we have $\inv{f'}(a) < \inv{f'}(b)$ if and only if $\inv{f}(\lift\mu(a)) < \inv{f}(\lift\mu(b))$.
\end{definition}

\begin{definition}\label{def:consistent}
An (infinite) set $F$ of space-filling curves is \emph{interdimensionally consistent}, if $F$ contains a unique $d$-dimensional space-filling curve $f_d$ for any integer $d \geq 1$, and $f_j \in F$ is extradimensional to $f_i \in F$ whenever $j > i$.
\end{definition}

Inverting the terminology, we could say that $f_{d'}$ is \emph{intradimensional} to $f_{d}$ if and only if $f_{d}$ is extradimensional to $f_{d'}$, and each curve of an interdimensionally consistent set could be regarded as an intradimensional curve of an infinitely-dimensional\footnote{I have not explored where the idea of infinitely-dimensional curves would lead us. I believe they might constitute a new class of ``monster curves'' (to use Gardner's words~\cite{Gardner}).} `curve'~$f_\infty$.

In the definition of extradimensionality, a $d$-dimensional point is constructed from a $d'$-dimensional point by inserting zero coordinates at the positions not selected by $\mu$. We will also consider \emph{\halffix-extradimensionality}, where a $2d$-dimensional point is constructed by appending $d$ zeros to the coordinates of a $d$-dimensional point, and \emph{diagonal-extradimensionality}, where a $2d$-dimensional point is constructed by concatenating two copies of the coordinate sequence of a $d$-dimensional point. In our previous work we found that \halffix- and diagonal-extradimensional space-filling curves can be useful in improving the performance and robustness of certain data structures based on space-filling curves~\cite{Haverkort4D}. 

\paragraph{The results}
This work provides descriptions of interdimensionally consistent generalizations of Hilbert's curve and several variants of Peano's curve (coil, half-coil, and Meurthe)
that caught my attention in previous work~\cite{Arrwwid,Haverkort2D} because of their excellent locality-preserving properties. Furthermore, this paper provides a general technique to derive \halffix- and diagonal-extradimensional curves from a given generalized Hilbert or Peano curve.
This enables the extension of our previous work on R-trees for rectangles to higher dimensions~\cite{Haverkort4D}.

It turns out that the new generalized Hilbert curves and the generalized Meurthe curves share another interesting property: they are, in some sense, statistically invariant under rotation. For an intuitive understanding of what this means, one could look at Figure~\ref{fig:peano2d}. At the highest level, the Peano curve shows more vertical steps than horizontal steps, and in recursion, it stays that way. Thus the curve has a clear orientation, which can be recognized immediately even if one sees only a small piece of the curve. This could be a disadvantage if the recognizable orientation somehow interacts negatively with the data in an application of the curve. On the other hand, at the smallest depicted level of detail of the Hilbert curve, all possible orientations of the curve occur approximately equally often. Therefore, from the looks of a small piece of the Hilbert curve one could not tell if the curve as a whole has been rotated or not: we say the curve has \emph{neutral orientation}. The harmonious Hilbert curves and the Meurthe curves maintain this property also in higher dimensions---unlike the Peano curves, coil curves, half-coil curves and the Butz-Moore generalization of the Hilbert curve.

\paragraph{How to read this paper}
Below, in Section~\ref{sec:definitions}, we fill first see how exactly we can define and describe space-filling curves like Hilbert's and Peano's. We will distinguish \emph{2-regular} curves, which are based on subdividing $d$-dimensional hypercubes into $2^d$ subcubes, and \emph{3-regular} curves, which are based on subdividing $d$-dimensional hypercubes into $3^d$ subcubes. In Section~\ref{sec:toolbox} we see some basic results that will come in useful in proving the claimed properties of the curves presented in the sections that follow. In Section~\ref{sec:ternary} we will present interdimensionally consistent generalizations of several 3-regular curves, and prove that among those, the generalized Meurthe curves have neutral orientation. We will also give pseudocode for comparison operators that decide which of any two given points comes first on the curve. The pseudocode may not be a recipe for highly optimized code in any particular programming language, but it may give some insight in how one may produce such code and, in addition, it reveals properties of the curves that would be difficult to see otherwise. In Section~\ref{sec:hilbert} we learn about the Butz-Moore generalization of Hilbert curves and about our harmonious Hilbert curves in detail. Again, pseudocode is included. Section~\ref{sec:composition} defines the concepts of \halffix- and diagonal-extradimensionality in detail, and explains how, given any $\base$-regular curve~$f$ (for $\base \in \{2,3\}$), one can construct a $\base$-regular curve $g$ that is \halffix- and/or diagonal-extradimensional to~$f$. The paper concludes with Section~\ref{sec:discussion} with a comparison between the results from the different sections of this paper and with open questions for possible further research.

In principle, Sections~\ref{sec:ternary}, \ref{sec:hilbert}~and~\ref{sec:composition} are independent from each other. The reader who wants to learn quickly how to get interdimensionally consistent, \halffix-extradimensional or diagonal-extradimensional curves, may read Section~\ref{sec:definitions} and then skip to any of the Sections \ref{sec:ternary}, \ref{sec:hilbert} or~\ref{sec:composition} right away. However, the reader who wants to understand the proofs in these sections, is advised to read Sections \ref{sec:definitions}, \ref{sec:toolbox}, \ref{sec:ternary} and~\ref{sec:hilbert} in order, because the proofs in these sections build up in complexity.

\section{Describing space-filling curves}
\label{sec:definitions}

\subsection{Graphical descriptions and their interpretation}
\label{sec:graphical}
In general, we can define an order (\emph{scanning order}) of points in $d$-dimensional space as follows. We give a set of rules, each of which specifies (i) how to subdivide a region in $d$-dimensional space into subregions; (ii) what is the order of these subregions; and (iii) for each subregion, which rule is to be applied to establish the order within that subregion. We also specify a unit region, and we indicate what rule is used to subdivide and order it.

\begin{figure}
\centering
\hbox to \hsize{\hfill
(a)\includegraphics[scale=0.7]{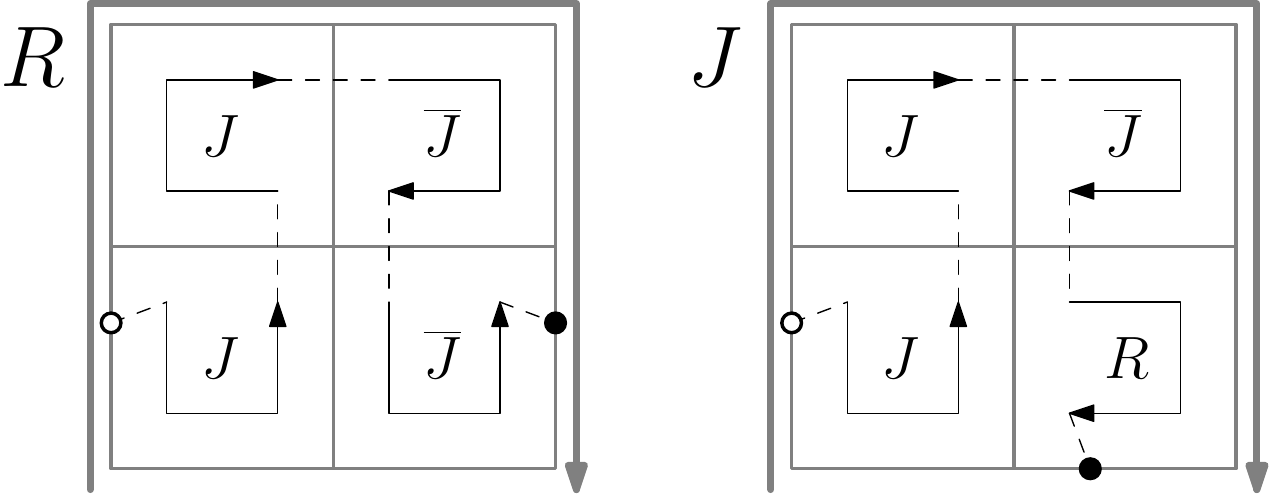}\hfill
(b)\includegraphics[scale=0.7]{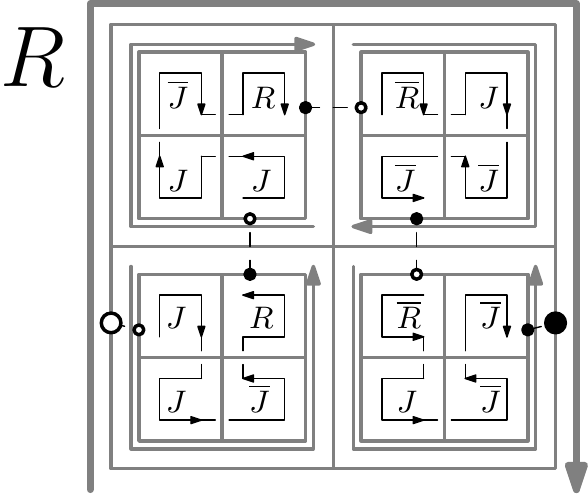}\hfill
}

\vskip\baselineskip
\hbox to \hsize{\hfill
(c)\quad\includegraphics[height=0.35\hsize]{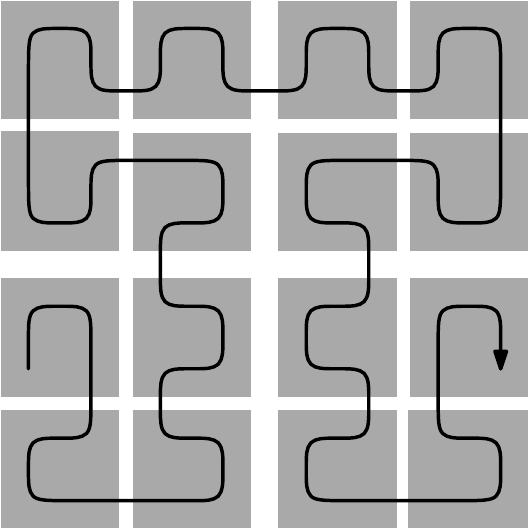}\hfill
(d)\quad\includegraphics[height=0.35\hsize]{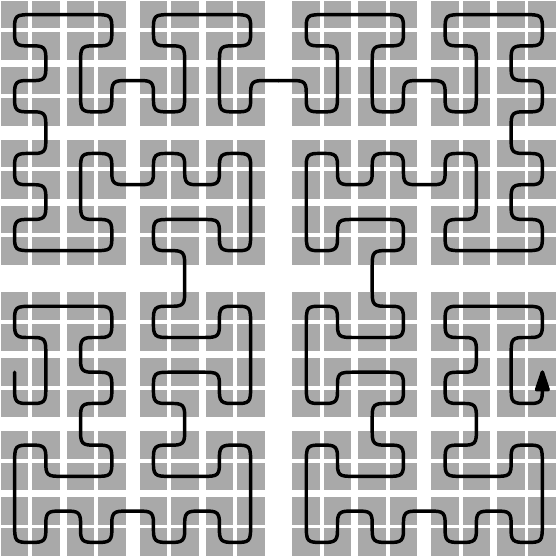}\hfill
}
\caption{%
(a) Definition of a scanning order corresponding to a variant of the $\beta\Omega$-curve.\quad
(b) Applying the rules recursively within \textsf{R}.\quad
(c) The resulting order of the $4 \times 4$-grid.\quad
(d) Order of the $16 \times 16$-grid. On each level, the last cell of each quadrant shares a vertex (and even an edge) with the first cell of the next quadrant.}
\label{fig:betaomega}
\end{figure}

Figure~\ref{fig:betaomega}(a) shows an example (the scanning order depicted corresponds to a section of the $\beta\Omega$-curve by Wierum~\cite{Wierum}). Each rule is identified by a letter, and pictured by showing a unit square, its subdivision into quadrants, the scanning order of the quadrants (by a directed curve along the outer vertices of the quadrants), and the rules applied to the quadrants (by letters). Within each quadrant, there is a scaled copy of the \emph{base pattern}: the curve along the outer vertices of the quadrants. Variations of the rules that consist of simply rotating or mirroring the order within quadrants are indicated by rotating or mirroring the pattern. Variations of rules that consist of reversing the order of cells within a quadrant are indicated by reversing the direction of the pattern and by an overscore above the letter identifying the rule\footnote{Reversals are among the standard transformations applied in the definitions of space-filling curves in the literature---see, for example, the Gosper flowsnake~\cite{Gardner}, arguably one of the prototypical space-filling curves along with the curves by Peano, Hilbert, Lebesgue~\cite{Lebesgue} and Sierpi\'nski~\cite{Bader}. In fact, reversals are so standard, that they are often not described explicitly, and it is left to the reader to infer the reversals from how the pieces of the curve are put together. I prefer to be explicit about reversals. To indicate a reversal, it is necessary to reverse the direction of the pattern \emph{and} to include the overscore. Figure~\ref{fig:betaomega}(a) illustrates this. Rule $J$ is asymmetric because different rules are applied to the first and the last quadrant. It is therefore important to distinguish between reflected, unreversed applications of $J$, as in the lower left quadrant of rule~$R$, and unreflected, reversed applications of $J$, as in the lower right quadrant. Since the base pattern of $J$ is symmetric, it is only the overscore that indicates which of these alternatives is meant.}.
Figure~\ref{fig:betaomega}(b) shows how applying the rules recursively, starting with rule \textsf{R}, results in an order of the subquadrants within each quadrant. Combined with the order of the quadrants, this gives an order of all squares in a $4 \times 4$ grid, which is sketched by the curve shown in Figure~\ref{fig:betaomega}(c). From the orientations of the base patterns in Figure~\ref{fig:betaomega}(b) and (c), one already gets a hint of the order of all squares in a $8 \times 8$ grid. By expanding the recursion further, one may order the cells of an arbitrarily fine grid.

The polygonal curve that connects the centres of these cells in order thus forms an arbitrarily fine approximation of the space-filling curve, which is more precisely defined as follows. At any level of recursion, let the grid consist of the cells $C_1,...,C_n$, indexed from $1$ to $n$ according to the scanning order, let $\volume[1,i] = \sum_{j=1}^{i} \volume(C_i)$ be the total volume of cells $C_1$ up to and including $C_i$, and define $\relvolume[1,i] = \volume[1,i]/\volume[1,n]$. Then the curve $f$ maps the interval $[\relvolume[1,i-1], \relvolume[1,i]]$ to $C_i$, for all $i \in \{1,...,n\}$. The image $f(q)$ of any point or interval $q$ in $[0,1]$ can now be obtained by taking the limit of the image of the intervals containing $q$ as the level of recursion goes to infinity. Thus the definition of a scanning order is also a definition of a space-filling curve.

In the definitions of space-filling curves, the letters indicating the rules may be omitted if there are no reversals and no two rules have the same base pattern modulo rotation and reflection---see, for example, Figure~\ref{fig:original-curves}.
Figures \ref{fig:2DWunderlich}, \ref{fig:3DMeurthe}, \ref{fig:3DHarmonious} and~\ref{fig:monotone} and Inset~\ref{box:variety} show further examples of definitions of two- and three-dimensional space-filling curves.

\paragraph{Entrance and exit gates}
When the scanning order is refined to an infinitely fine grid, the first and the last cell visited shrink to points. These points are the images of 0 and 1 of the space-filling curve defined by the scanning order, and they typically lie on the boundary of the unit region\footnote{It is not automatic that these points lie on the boundary, but space-filling curves are usually designed such that $f(0)$ and $f(1)$ do lie on the boundary---otherwise the curve cannot be continuous from one subregion into another subregion if each subregion is filled with a scaled-down copy of the curve.}. We will call $f(0)$ the \emph{entrance gate} and we will call $f(1)$ the \emph{exit gate} of the curve; in Figure~\ref{fig:betaomega} their locations are indicated by the white dots and the black dots, respectively. Note that these dots are only shown for clarity: the locations of the gates are a consequence of the definition of the scanning order, rather than a part of the definition.

\paragraph{Using space-filling curves}
For many practical applications of space-filling curves it is not necessary to actually draw a curve. It is often enough to be able to decide for any two given points $p$ and $q$, which of these appears first on the curve---that is, which point comes first in the scanning order. This can be decided by expanding the recursion only to the smallest depth at which $p$ and $q$ lie in different quadrants. A technical problem with this is that, down from some depth of recursion, a given point $p$ may always lie on the boundary of two or more quadrants. This may create ambiguity about which of two given points comes first. This ambiguity can be resolved by using a consistent tie-breaking rule. For example, in the pseudocode in this paper, we will always assign each point $p$ to the quadrant that lies to its upper right---that is, for a space-filling curve $f$ we define $\inv{f}(p)$ as the common limit of the values $t$ such that $f(t) = p + (\eps,...,\eps)$ as $\eps > 0$ approaches zero.

\subsection{Classes of space-filling curves considered in this paper}
\label{sec:classes}
There are many different possible generalizations of Peano's and Hilbert's curves to three and more dimensions. To be able describe the classes of curves we will be studying in this paper precisely, I will now introduce some terminology.

All space-filling curves studied in this paper are defined by a rule system as described above.

We will call a curve $\base$-\emph{regular} if the unit region of each rule is a $d$-dimensional unit hypercube, and each rule subdivides this hypercube into $\base^d$ smaller hypercubes of equal size. We adopt the convention that the unit hypercube is an axis-parallel hypercube that spans the interval $[0,1]$ in each dimension; thus, each of the smaller hypercubes has width~$1/\base$.

We will call a curve a \emph{mono-curve} if the defining rule system contains only a single rule.

We will call a curve \emph{order-preserving} if it is a mono-curve and the transformations of the single definining rule within the subregions are restricted to rotation and reflection---excluding reversal of the order in which the subregions are traversed.

We will call a scanning order and the space-filling curve defined by it \emph{vertex-continuous} if it has the following property: whenever the scanning order visits a set of hypercubes $C_1,...,C_n$, in that order, and we refine the scanning order in each of these hypercubes recursively to depth $\depth$ so that we obtain an order on $n \cdot \base^{\depth d}$ cells, for any $\depth \geq 1$, then the last cell within $C_{i-1}$ shares at least one point with the first cell within $C_i$, for all $1 < i \leq n$ (see, for example, Figures \ref{fig:peano2d} and~\ref{fig:betaomega}(d)). Note that if a scanning order is not vertex-continuous, then it does not directly define a \emph{continuous} mapping from the unit interval to a higher-dimensional space, and thus, the scanning order does not actually define a space-filling \emph{curve}.\footnote{A work-around would be to insert infinitely many connecting line segments to bridge the gaps. In effect, this is how Lebesgue makes a space-filling curve out of the non-vertex-continuous scanning order that is nowadays known as Z-order~\cite{Lebesgue}. However, such tricks come at the expense of decreased performance in applications~\cite{Haverkort2D}, and therefore they are not considered in this paper.} One of our main tasks in the following sections of this paper, will always be to prove that whatever is presented as a space-filling curve, is indeed vertex-continuous.

Inset~\ref{box:variety} shows some of the variety of vertex-continuous 2-regular mono-curves that exist.

\begin{inset}
\caption{What does it take to be a Hilbert curve?}
\footnotesize
Hilbert curves are often associated with 2-regular, order-preserving, symmetric mono-curves, that visit the subregions of the hypercube in the order of the binary reflected Gray code (see Section~\ref{sec:graycodes}), and have the entrance and exit gate located at the two endpoints of an edge of the hypercube. Indeed, Hilbert's original two-dimensional curve has all of these properties, but one may argue that they do not \emph{define} the curve. There is only one vertex-continuous 2-regular mono-curve, and this happens to be the order-preserving, symmetric, reflected-Gray-code-based, vertex-gated Hilbert curve.

In three-dimensional space, there is not a single vertex-continuous 2-regular mono-curve, but there are 10\,694\,807 of them---not counting rotated, reflected and/or reversed copies of the same curve~\cite{Haverkort3D}. We will call all of these curves \emph{mono-Hilbert curves}. Only 48 of them have all of the non-defining properties mentioned above.
The 48 order-preserving, symmetric, reflected-Gray-code-based, vertex-gated curves include those depicted in Figures (a), (b) and~(c) below. Figure~(a) shows the curve implemented by Moore~\cite{Moore}, based on the work of Butz. Curve~(b) is the only three-dimensional mono-Hilbert curve that is extradimensional to the two-dimensional Hilbert curve; it is the three-dimensional \emph{harmonious Hilbert curve} as defined in Section~\ref{sec:hilbert}. Curve~(c) has a very regular structure: it is not only composed of eight similar octants and of two symmetric halves, but also of four quarters (of two octants each) which are reflected, rotated and/or reversed copies of each other.

The remaining 10\,694\,759 mono-Hilbert curves include curves that follow an alternative Gray code (e.g.\ Figure~(d)) and curves that are not order-preserving (e.g.\ Figure~(e)). There is even one curve that has its entrance and exit gates in the interiors of two faces of the cube (Figure~(f)): in fact, this is the three-dimensional mono-Hilbert curve that has the best locality-preserving properties if measured according to the Euclidean dilation measure~\cite{Haverkort3D}. The curves in Figures (e) and~(f) are the only two mono-Hilbert curves such that on any level recursion, the distance between the centroids of the predecessor and the successor of a subregion is only $\sqrt{2}$ times the width of the subregion---the centroids of three consecutive subregions are never collinear.
There are also curves with `diagonal connections', where successive subregions only share an edge or a vertex: for example, see Curve~(g) (with a similar regular structure as Curve~(c)) and Curve~(h) (rotationally symmetric in the left-right axis, it may be more regular than it seems).

\hbox to \hsize{\hfill
(a)\includegraphics[width=0.21\hsize]{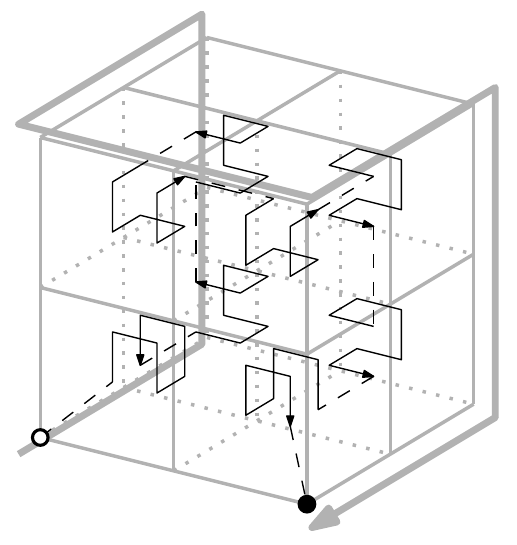}\hfill
(b)\includegraphics[width=0.21\hsize]{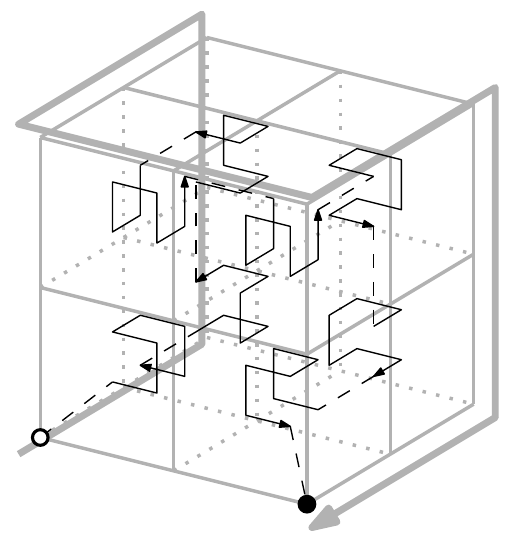}\hfill
(c)\includegraphics[width=0.21\hsize]{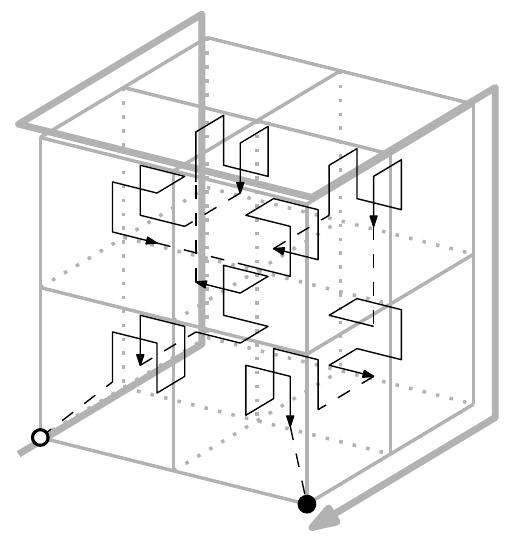}\hfill
(d)\includegraphics[width=0.21\hsize]{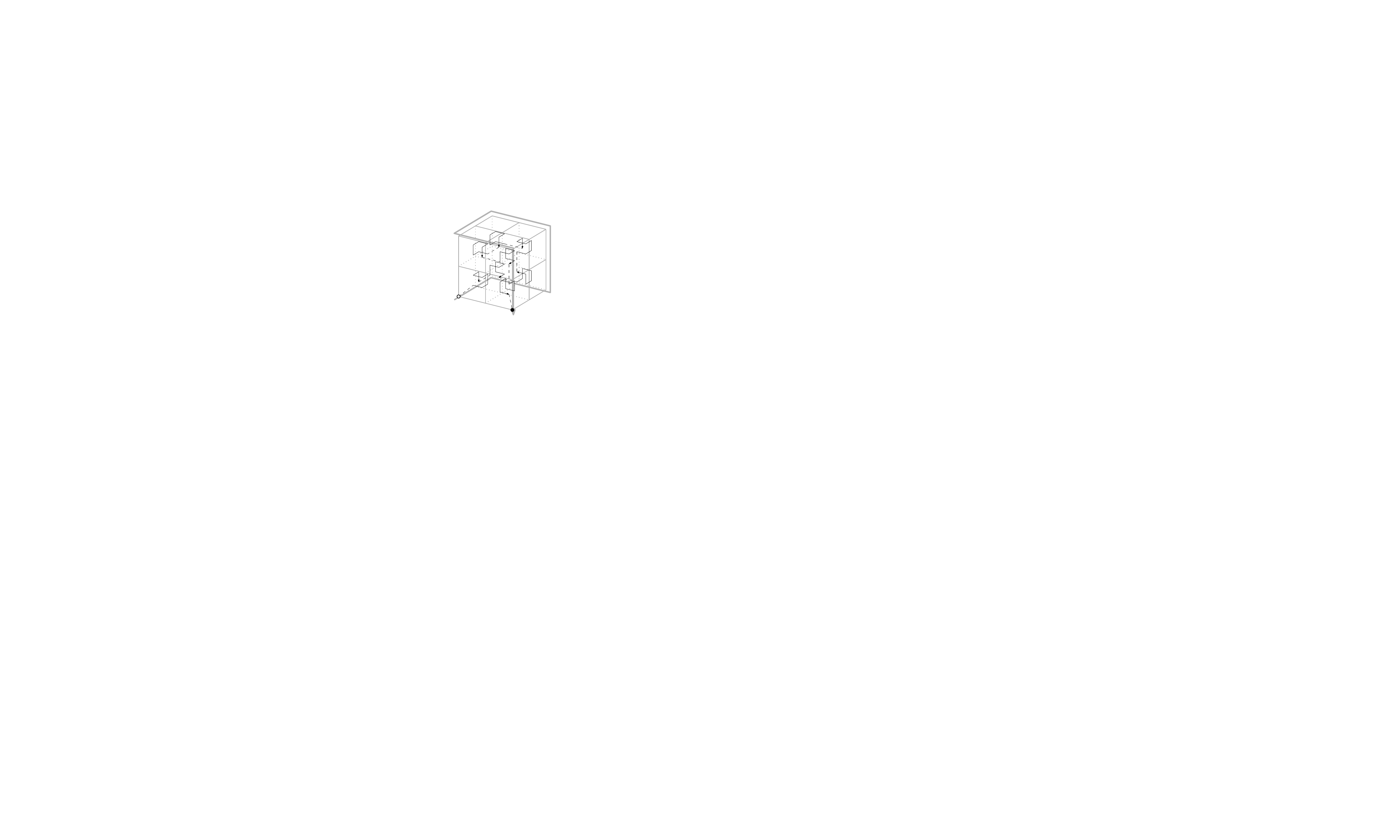}\hfill
}\hbox to \hsize{\hfill
(e)\includegraphics[width=0.21\hsize]{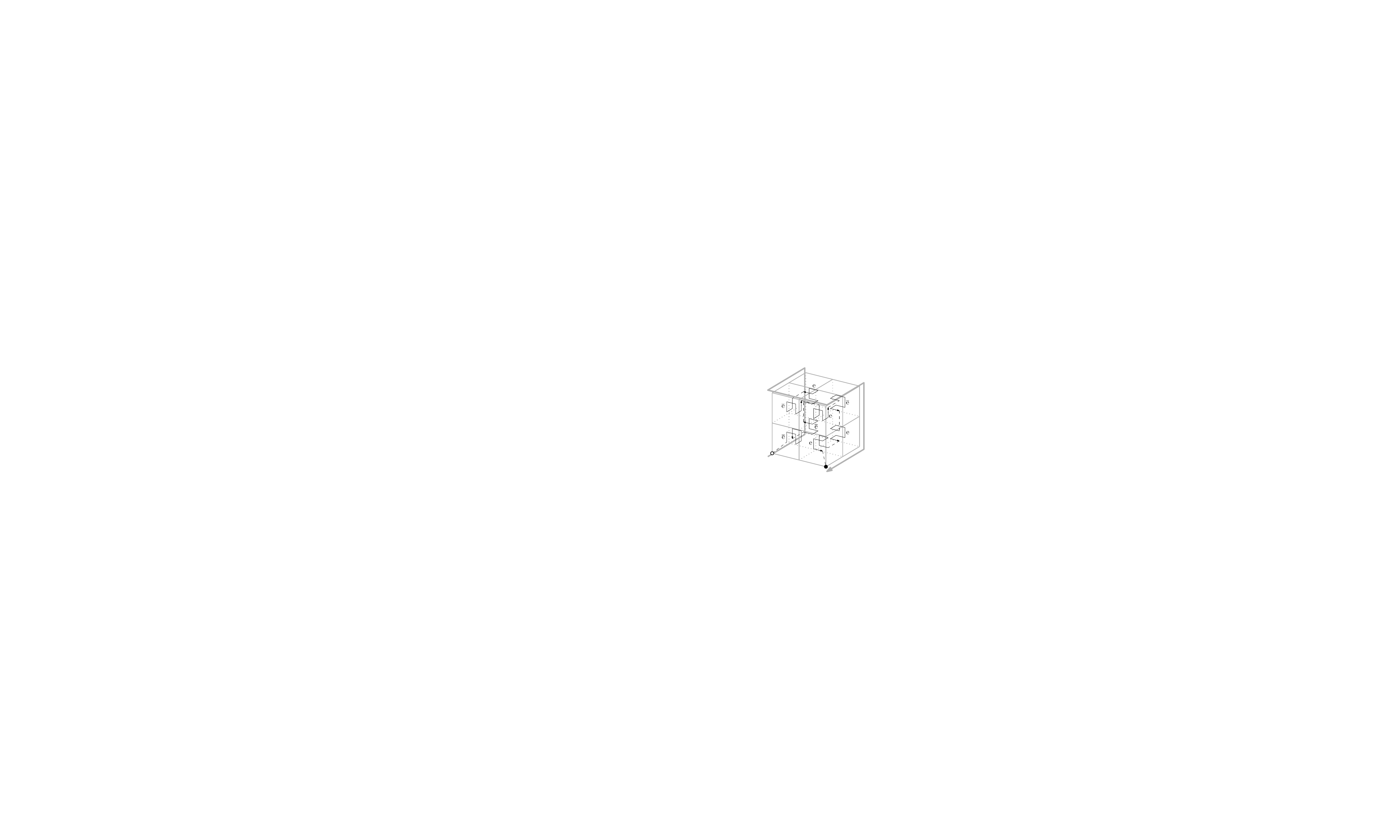}\hfill
(f)\includegraphics[width=0.21\hsize]{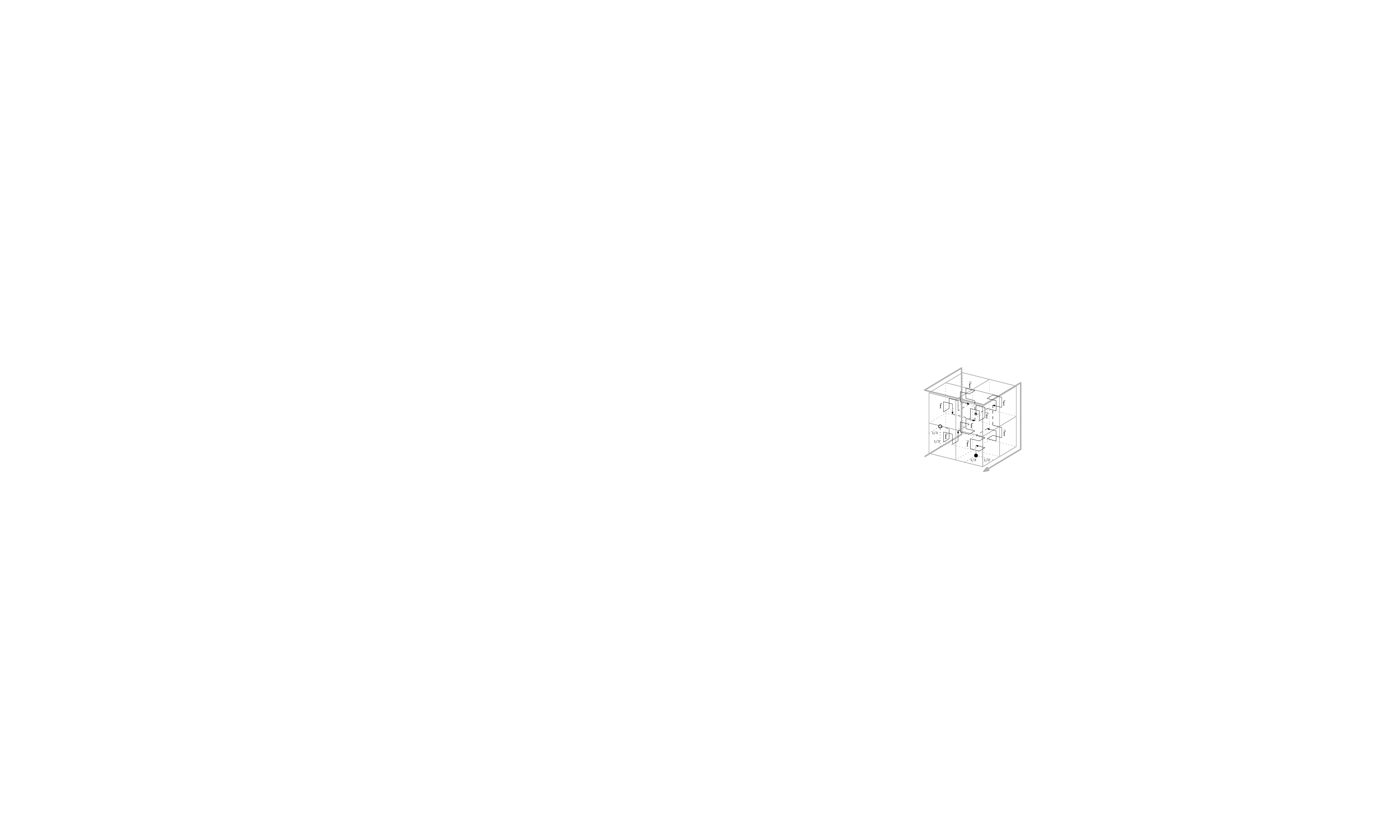}\hfill
(g)\includegraphics[width=0.21\hsize]{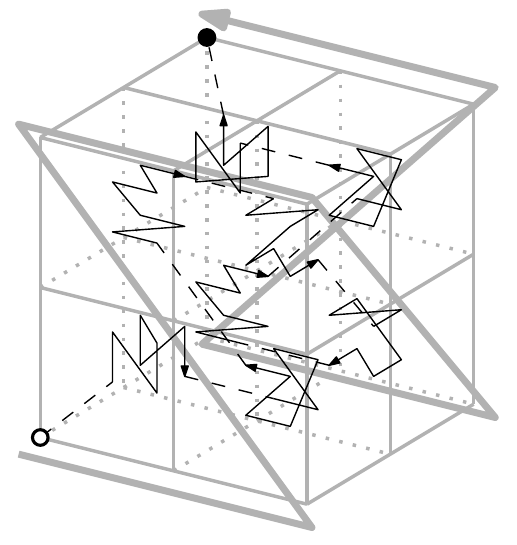}\hfill
(h)\includegraphics[width=0.21\hsize]{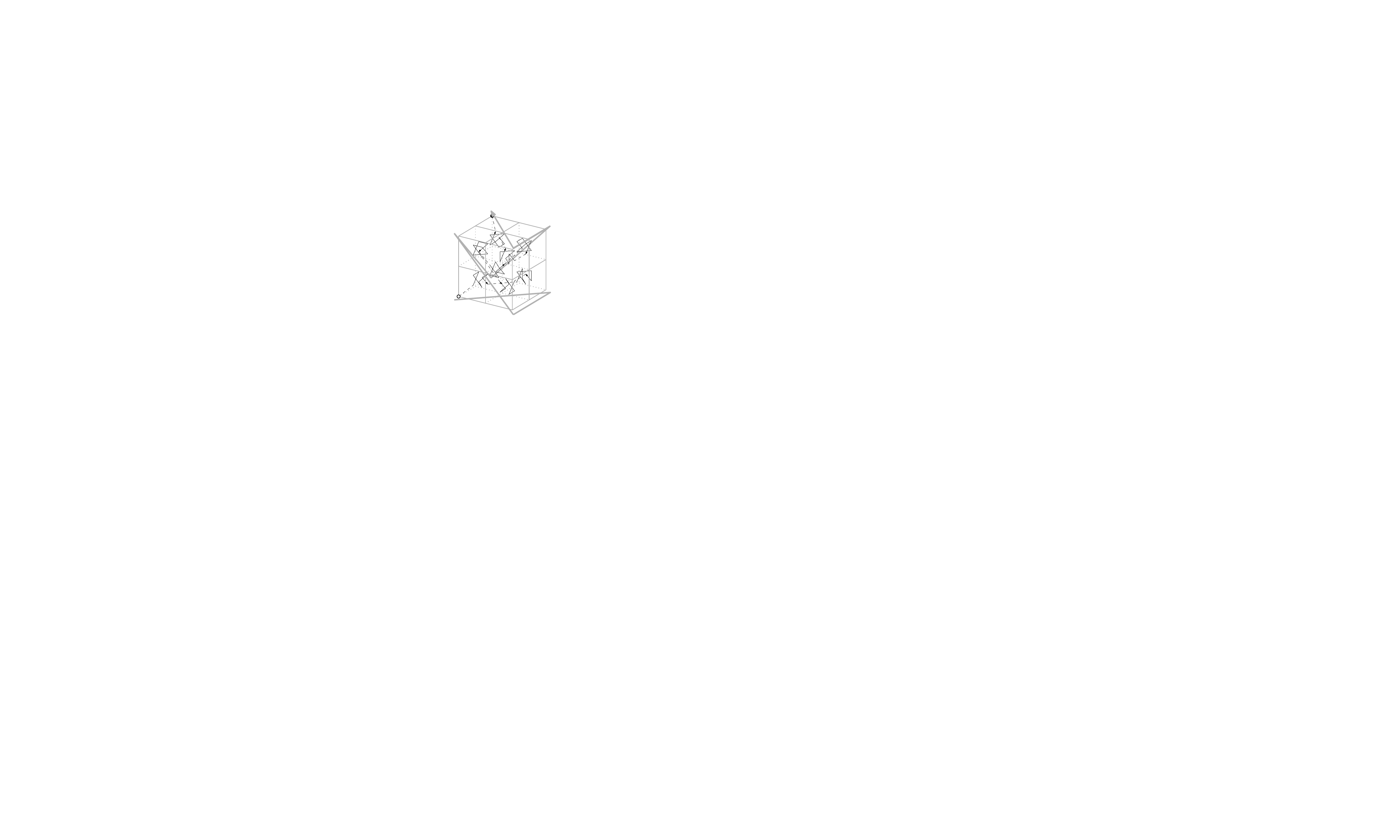}\hfill
}

\addvspace{.5\baselineskip}
Some three-dimensional vertex-continuous 2-regular mono-curves, with the names I gave them~\cite{Haverkort3D}.
(a)~A26.00.00, the three-dimensional Hilbert curve by Butz and Moore;\quad
(b)~A26.00.db, the three-dimensional harmonious Hilbert curve;\quad
(c)~A16.00.3c;\quad
(d)~A6.00.ff;\quad
(e)~A26.2b.b3;\quad
(f)~F;
(g)~B15.00.c3;\quad
(h)~B138.00.99 (not published before).%
\label{box:variety}
\end{inset}

\subsection{Numerical descriptions}
\label{sec:numerical}
Obviously, the graphical way of defining space-filling curves described above does not extend well to curves in more than three dimensions. Therefore, I will now introduce a concise numerical way of defining space-filling curves, tailored to $\base$-regular order-preserving curves.

In the numerical notation, we will use $d$-dimensional coordinate vectors and $d$-digit numbers. We use subscripts from 0 to $d-1$ to index the individual coordinates or digits, respectively. If $x$ is a $d$-digit number, $x_0$ is the \emph{most} significant and $x_{d-1}$ is the \emph{least} significant digit of $x$. By $\reps ad$ we denote the number that consists of $d$ times the digit $a$.

The \emph{rank} of each subregion is a base-$\base$ number from $\{0,...,\reps{\base-1}d\}$ that indicates the position of the subregion in the sorted order of subregions according to the rule that defines the curve: the first subregion has rank~0 and the last subregion has rank~$\reps{\base-1}d$. By $S(r)$ we denote the subregion with rank~$r$. For ease of notation we write $\pred{r}$ for $r-1$, and we write $\succ{r}$ for $r+1$.

The \emph{location} of $S(r)$ is denoted by $c(r)$, which is a $d$-digit base-$\base$ number indicating that the subregion's corner that is closest to the origin is located at $\frac1\base(c_0(r),...,c_{d-1}(r))$ and the subregion's corner that is farthest from the origin is located at $\frac1\base(c_0(r)+1,...,c_{d-1}(r)+1)$.

The curve within each subregion $S(r)$ can be obtained from the curve in the unit hypercube as a whole by first rotating it, then reflecting it, then scaling it with a factor $1/\base$ in all dimensions, and then translating it into place. I will now describe how rotations, reflections and translations are specified. The rotation is specified by a permutation $a(r)$, which is a $d$-digit base-$d$ number of which the digits form a permutation of $\{0,...,d-1\}$. This permutation specifies which axis $a_i(r)$ of the curve through the unit hypercube is rotated onto axis $i$ within subregion $r$. By $\inv{a}(r)$ we denote the inverse of this permutation, that is, $\inv{a}_i(r) = j$ if and only if $a_j(r) = i$. The reflections are specified by a $d$-digit binary number $m(r)$, where $m_i(r) = 1$ if and only if after rotation, the curve within subregion $r$ should be reflected in a plane orthogonal to axis~$i$. The required translation vector $o(r)$ is implied by the rotation, the reflections and the location of the subregion, and does not need to be specified separately.

More precisely, $a(r)$ and $m(r)$ define a transformation matrix $M(r)$ with rows and columns numbered from $0$ to $d-1$ and the following entries:\[\begin{array}{ll}
M(r)_{ij} = 1 & \hbox{if $j = a_i(r)$ and $m_i(r) = 0$;} \\
M(r)_{ij} = -1 & \hbox{if $j = a_i(r)$ and $m_i(r) = 1$;} \\
M(r)_{ij} = 0 & \hbox{if $j \neq a_i(r)$.}
\end{array}\]
By $\inv{M}(r)$ we denote the inverse of this matrix:\[\begin{array}{ll}
\inv{M}(r)_{ji} = 1 & \hbox{if $i = \inv{a}_j(r)$ and $m_i(r) = 0$;} \\
\inv{M}(r)_{ji} = -1 & \hbox{if $i = \inv{a}_j(r)$ and $m_i(r) = 1$;} \\
\inv{M}(r)_{ji} = 0 & \hbox{if $i \neq \inv{a}_j(r)$.}
\end{array}\]
The transformation $\tau(r): p \rightarrow \frac 1\base M(r) p + o(r)$ now maps each point $p$ of the curve through the unit hypercube to a point in subregion $S(r)$. Recall that the space-filling curve is a function $f$ from $[0,1]$ to the unit hypercube $[0,1]^d$. Considering the subregion with rank $r$, the function~$f$ maps the domain \hbox{$[r/\base^d,(r+1)/\base^d]$} to $S(r)$. The transformation $\tau(r)$ describes the relation between, on one hand, $f$~restricted to the domain $[r/\base^d,(r+1)/\base^d]$, and on the other hand, $f$~on the domain $[0,1]$: for $x \in [r/\base^d,(r+1)/\base^d]$ we have $f(x) = \tau(r)(f(s_r(x)))$, where $s_r(x) = \base^d x - r$ is the function that scales up the domain $[r/\base^d,(r+1)/\base^d]$ to $[0,1]$. Conversely, there is a valid ordering $\inv{f}$ of $f$ such that for any point $p$ in the interior of subregion $r$ we have: $\inv{f}(p) = \inv{s_r}\left(\inv{f}(\inv{\tau}(r)(p))\right)$, where $\inv{\tau}(r)(p) = \inv{M}(r) \base(p - o(r))$ and $\inv{s_r}(x) = (x + r) / \base^d$.

A curve can now be described by a table that lists, for each subregion $S(r)$, the location $c(r)$, the permutation $a(r)$, the reflections $m(r)$, and, redundantly (if desired for clarity), the entrance and/or exit gates of each subregion. Figure~\ref{fig:original-curves} shows the definitions of the two-dimensional curves by Hilbert and Peano in graphical notation and in table format.

\begin{figure}
\centering
\hbox to \hsize{\hfill
\includegraphics[width=\hsize]{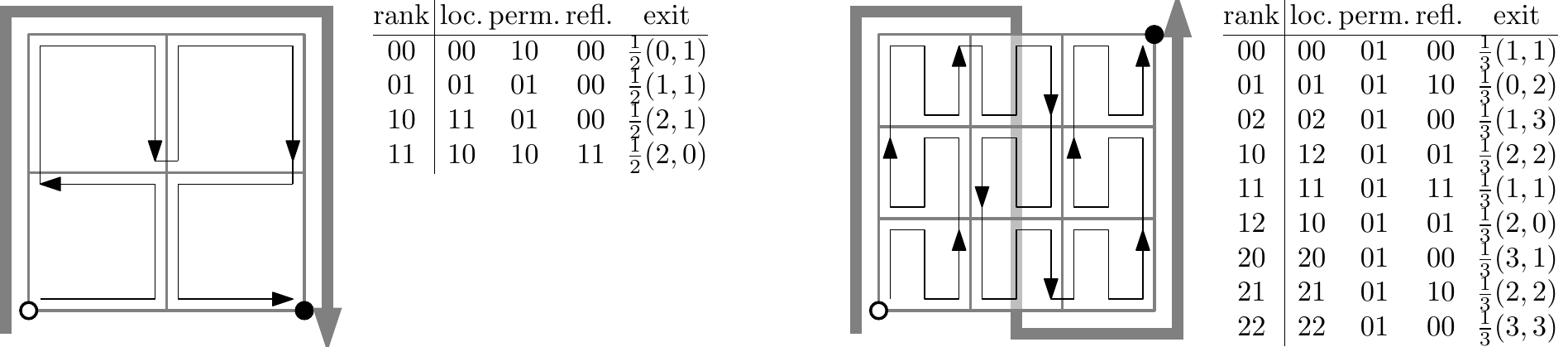}\hfill
}
\caption{Hilbert's curve (left) and Peano's curve (right) defined graphically and by means of a table.}
\label{fig:original-curves}
\end{figure}

\subsection{Implementing a comparison operator}
\label{sec:comparisonoperator}

We will know see how we can easily implement a comparison operator for an order-preserving $\base$-regular curve. It is not necessary to read this section to be able to understand the descriptions and proofs in the rest of this paper, and it is not necessary to read this section to be able to use the pseudocode presented in the rest of this paper. However, it may help in understanding why the pseudocode in Sections \ref{sec:ternary}--\ref{sec:composition} works.

Our goal in this section is the following: given two points $p, q$ in $[0,1]^d$, whose coordinates are given as base-$\base$ numbers, decide whether $\inv{f}(p) < \inv{f}(q)$. As an ordering $\inv{f}$ corresponding to $f$ we choose the ordering that assigns each point $p$ to the subregion that lies above/behind/to the right of it in all dimensions---that is, we define $\inv{f}(p)$ as the common limit of the values $t$ such that $f(t) = p + (\eps,...,\eps)$ as $\eps > 0$ approaches zero. This defines $\inv{f}$ only for points in $[0,1)^d$; it is not defined for points of which one or more coordinates are equal to~1, and our pseudocode will not deal with such points. One could define $\inv{f}$ also for points with one or more coordinates equal to 1, and one could easily adapt the algorithms presented in this paper to handle such coordinates, by representing the number 1 as 0.111... (if $\base = 2$) or 0.222... (if $\base = 3$).

The idea of our basic algorithm (Algorithm~\ref{alg:abstractoperator}) is as follows. The algorithm takes two points $p$ and~$q$, each given as a set of $d$ coordinates $(p[0],...,p[d-1])$ and $(q[0],...,q[d-1])$, respectively. Assume that each coordinate is in $[0,1)$ and it is given as the string of digits that constitute the fractional part of its representation as a base-$\base$ number. The algorithm first removes the first digit of the fractional part of each coordinate. This is done with a function $\Extract(x)$, which removes the first digit from the digit string $x$ and returns the removed digit---if \Extract\ is called on an empty string, it simply returns a zero. Now let $c_p$ and $c_q$ be the concatenated removed digits of $p$ and~$q$, respectively. Determine $r_p$ and $r_q$ such that $c(r_p) = c_p$ and $c(r_q) = c_q$; thus $r_p$ and $r_q$ are the ranks of the regions containing $p$ and $q$, respectively. If $r_p < r_q$, return true; if $r_p > r_q$, return false; otherwise, we apply the transformation $x \rightarrow \inv{M}(r_p)x$ to the points $p$ and $q$ that remained after removing the first digits of their coordinates, and recurse on the resulting points $p$ and $q$. In fact, in Algorithm~\ref{alg:abstractoperator} we do not apply the transformation right away, but instead we maintain a transformation matrix $T$ which is the composition of all transformations $\inv{M}(r_p)$ that should have been applied at all higher levels of recursion. We apply this composed transformation $T$ only to the digits extracted, on Line~\ref{alg:lazytransformation}.

\begin{algorithm}
\KwIn{Points $p = (p[0],...,p[d-1])$ and $q = (q[0],...,q[d-1])$, with coordinates in $[0,1)$, given by the digits of the fractional part in base-$\base$ representation}
\KwOut{\True\ if $\inv{f}(p) < \inv{f}(q)$, \False\ otherwise}
\BlankLine
$T \leftarrow$ identity matrix of size $d$\;
\Repeat{all remaining digit strings $p[0],...,p[d-1]$ and $q[0],...,q[d-1]$ are empty}{
  \For{$i \leftarrow 0$ \KwTo $d-1$}{
    $c_p[i] \leftarrow \Extract(p[i])$;
    $c_q[i] \leftarrow \Extract(q[i])$\;
  }
  $c_p \leftarrow T \cdot c_p$; $c_q \leftarrow T \cdot c_q$\;\label{alg:lazytransformation}
  add $\base-1$ to all elements of $c_p$ and $c_q$ that are negative\;\label{alg:correcttranslation}
  compute $r_p, r_q$ such that $c(r_p) = c_p$ and $c(r_q) = c_q$\;
  \lIf{$r_p < r_q$}{\Return \True}\ \lElseIf{$r_p > r_q$}{\Return \False}\;
  \BlankLine
  \tcp{$c_p = c_q$ and $r_p = r_q$}
  $T \leftarrow \inv{M}(r_p) \cdot T$\;
}
\Return\False\tcp*[r]{$p$ and $q$ are equal}
\caption{General comparison operator for a $\base$-regular curve: the main idea\label{alg:abstractoperator}}
\end{algorithm}

The correctness of Algorithm~\ref{alg:abstractoperator} is based on the following observations. To compare $\inv{f}(p) = \inv{s_r}\left(\inv{f}(\inv{\tau}(r)(p))\right)$ and $\inv{f}(q) =\inv{s_r}\left(\inv{f}(\inv{\tau}(r)(q))\right)$ we do not have to apply $\inv{s_r}$ if both points lie in the same region $S(r)$: we can simply compare $\inv{f}(\inv{\tau}(r)(p))$ to $\inv{f}(\inv{\tau}(r)(q))$. Algorithm~\ref{alg:abstractoperator} does this as follows. Removing the first digits of the coordinates implements the transformation $x \rightarrow \base(x - o'(r))$, where $o'(r)$ is the corner of $S(r)$ closest to the origin. Thus, assuming $o(r) = o'(r)$, subsequently applying $x \rightarrow \inv{M}(r)x$ implements $\inv{\tau}(r)$. If there are reflections, that is, negative entries in $\inv{M}(r)$, then the situation is slightly more complicated. The translation vector $o(r)$ should be the coordinates of the corner of region $S(r)$ that maps to the origin under the transformation $\inv{\tau}(r)$. Mapping $o'(r)$ to the origin instead, results in a translation of the unit hypercube $\inv{\tau}(r)(S(r))$ over a distance -1 in each of the reflected coordinates. Line~\ref{alg:correcttranslation} corrects this by adding $\base-1$ to each extracted digit of a negative coordinate---which amounts to adding $0.111... = 1$ (if $\base = 2$) or $0.222... = 1$ (if $\base = 3$).

Algorithm~\ref{alg:generaloperator} shows a version of Algorithm~\ref{alg:abstractoperator} which is more explicit about how to maintain and apply the transformation matrix $T$. Note that $T$ is always a matrix with exactly one non-zero entry in each row, and this entry is either $1$ or $-1$. Therefore, instead of maintaining a complete matrix, we can simply maintain where the non-zero entries are and whether they are $1$ or $-1$.

\begin{algorithm}
\KwIn{Points $p = (p[0],...,p[d-1])$ and $q = (q[0],...,q[d-1])$, with coordinates in $[0,1)$, given by the digits of the fractional part in base-$\base$ representation}
\KwOut{\True\ if $\inv{f}(p) < \inv{f}(q)$, \False\ otherwise}
\BlankLine
\lFor{$i \leftarrow 0$ \KwTo $d-1$}{$\Reflected[i] \leftarrow \False$\tcp*[r]{maintains which columns of $T$ have minus}}
\lFor{$i \leftarrow 0$ \KwTo $d-1$}{$\Permutation[i] \leftarrow i$\tcp*[r]{maintains which entry in each row of $T$ is non-zero}}\label{alg:general:identity}
initialize an array $\AltPermutation[0..d-1]$\tcp*[r]{for a new value of \Permutation under construction}\label{alg:general:altpermutation}
\Repeat{all remaining digit strings $p[0],...,p[d-1]$ and $q[0],...,q[d-1]$ are empty}{\label{alg:general:mainloop}
  \For{$i \leftarrow 0$ \KwTo $d-1$}{\label{alg:general:foralldimensions}
    $c_p[i] \leftarrow \Extract(p[\Permutation[i]])$;
    $c_q[i] \leftarrow \Extract(q[\Permutation[i]])$\;
    \If{$\Reflected[\Permutation[i]]$}{\label{alg:general:reflected}
      $c_p[i] \leftarrow (\base - 1 - c_p[i])$; $c_q[i] \leftarrow (\base - 1 - c_q[i])$
    }
  }
  compute $r_p, r_q$ such that $c(r_p) = c_p$ and $c(r_q) = c_q$\;
  \lIf{$r_p < r_q$}{\Return \True}\ \lElseIf{$r_p > r_q$}{\Return \False}\;
  \BlankLine
  \tcp{$c_p = c_q$ and $r_p = r_q$}
  \For{$i \leftarrow 0$ \KwTo $d-1$}{
    $j \leftarrow \inv{a}_i(r_p)$\;
    $\AltPermutation[i] \leftarrow \Permutation[j]$\;
    \lIf{$m_j(r_p) = 1$}{$\Reflected[\AltPermutation[i]] \leftarrow \Not \Reflected[\AltPermutation[i]]$}\;
  }
  swap $\Permutation$ and $\AltPermutation$\;
}\label{alg:general:endloop}
\Return\False\tcp*[r]{$p$ and $q$ are equal}
\caption{General comparison operator for a $\base$-regular curve: more detailed algorithm\label{alg:generaloperator}}
\end{algorithm}

\subsection{Reflected Gray codes}
\label{sec:graycodes}

Many variants of Peano's and Hilbert's curves are based on \emph{Gray codes}: sequences of base-$\base$ numbers of $d$ digits such that each number from $\{0,...,\base^d-1\}$ occurs exactly once and consecutive numbers differ in only one digit. More precisely, we require that the difference in this digit is always exactly one. When the numbers of such a Gray code are interpreted as the locations $c(r)$ of the regions $S(r)$ for $r = 0,...,\base^d-1$, the result is a space-filling curve in which each visited region $S(r)$ shares a $(d-1)$-dimensional face with its predecessor $S(\pred{r})$ (recall that we use $\pred{r}$ to denote $r-1$); for examples, see Figures (a)--(f) in Inset~\ref{box:variety}. Thus, each region fits snugly together with the previous region. This tends to result in space-filling curves with good locality-preserving properties, although by itself, it is not a guarantee~\cite{Haverkort3D}.

In many of the definitions of space-filling curves that follow, we will use a specific type of Gray codes called \emph{reflected Gray codes}. By $\kRGC^d$ we denote the \emph{reflected Gray code} with base~$\base$. This code is a sequence of $\base^d$ base-$\base$ numbers, containing each of the $\base^d$ different base-$\base$ numbers exactly once. The code is defined recursively as follows. Let $a \oplus \kRGC^d$ denote the sequence that is obtained by prefixing each number of $\kRGC^d$ with the digit $a$, let $\mathrm{rev}(\kRGC^d)$ denote $\kRGC^d$ in reverse order, and let $\mathrm{revifodd}(a,\kRGC^d)$ be $\kRGC^d$ if $a$ is even, and $\mathrm{rev}(\kRGC^d)$ if $a$ is odd. Then $\kRGC^1$ is the sequence $\langle 0,1,...,\base-1 \rangle$, and $\kRGC^{d+1}$, for $d \geq 1$, consists of the concatenation of the sequences $a \oplus \mathrm{revifodd}(a,\kRGC^d)$ for $a = 0,1,...,\base-1$ in order. For example, $\TRGC^1 = \langle 0,1,2\rangle$ and $\TRGC^2 = \langle 00,01,02,12,11,10,20,21,22\rangle$. We can index $\kRGC^d$ by $d$-digit base-$\base$ numbers. More precisely: we can consider $\kRGC^d$ to be a bijection between the $d$-digit base-$\base$ numbers and the $d$-digit base-$\base$ numbers, where $\kRGC^d(r)$ is the element $r$ (counting from zero) of the sequence $\kRGC^d$. Observe that $\kRGC^d(0) = 0$.

The reader may verify that by construction, two consecutive elements $\kRGC^d(\pred{r})$ and $\kRGC^d(r)$ differ in only one digit, and the difference in value of this digit is one.

Furthermore, $\kRGC^d(r)$ can be obtained from $r$ by going through the digits of $r$ in order from $r_0$ to $r_{d-1}$: whenever a digit $r_i$ is encountered such that $r_i$ is odd, we `reflect' all following digits, that is, we replace $r_j$ by $\base - 1 - r_j$ for all $j > i$. To understand why this works, consider that after having seen $r_i$, we have fixed the first $i+1$ digits of $\kRGC^d(r)$; if $\kRGC^d_i(r)$ is even, then the remaining digits must constitute the element of $\kRGC^{d-i-1}$ indexed by the base-$\base$ number $r_{i+1}...r_{d-1}$, whereas if $\kRGC^d_i(r)$ is odd, then the remaining digits must constitute the element of $\mathrm{rev}(\kRGC^{d-i-1})$ indexed by the base-$\base$ number $r_{i+1}...r_{d-1}$, or equivalently, the element of $\kRGC^{d-i-1}$ indexed by $r'_{i+1}...r'_{d-1}$, where $r'_j = \base-1 - r_j$.

Conversely, we compute $r$ from $\kRGC^d(r)$ with Algorithm~\ref{alg:graydecoder}.
For a $\base$-regular curve of which $c(r)$ is defined as $\kRGC^d(r)$, we can integrate Algorithm~\ref{alg:graydecoder} into Algorithm~\ref{alg:generaloperator} (which computes which of two points comes first along a space-filling curve) and obtain Algorithm~\ref{alg:RGCcurvesoperator}.

\begin{algorithm}
\KwIn{An element of $\kRGC^d$, given as an array $c = (c[0],...,c[d-1])$ of base-$\base$ digits}
\KwOut{$\Rank$ s.t. $\kRGC^d(\Rank) = c$, given as an array $\Rank[0..d-1]$ of base-$\base$ digits}
\BlankLine
$\Forward \leftarrow \True$\tcp*[r]{indicates whether we go through $\kRGC^{d-i}$ forwards or backwards}
\For{$i \leftarrow 0$ \KwTo $d-1$}{
\lIf{$\Forward = \True$}{$\Rank[i] = c[i]$}
\lElse{$\Rank[i] = (\base - 1 - c[i])$}\;
\lIf{$c[i]$ is odd}{$\Forward \leftarrow \Not \Forward$}
}
\caption{Gray-decoding algorithm for $\kRGC^d$\label{alg:graydecoder}}
\end{algorithm}

\begin{algorithm}
\KwIn{Points $p = (p[0],...,p[d-1])$ and $q = (q[0],...,q[d-1])$, with coordinates in $[0,1)$, given by the digits of the fractional part in base-$\base$ representation}
\KwOut{\True\ if $\inv{f}(p) < \inv{f}(q)$, \False\ otherwise}
\BlankLine
\lFor{$i \leftarrow 0$ \KwTo $d-1$}{$\Reflected[i] \leftarrow \False$\tcp*[r]{maintains which columns of $T$ have minus}}
\lFor{$i \leftarrow 0$ \KwTo $d-1$}{$\Permutation[i] \leftarrow i$\tcp*[r]{maintains which entry in each row of $T$ is non-zero}}
initialize an array $\AltPermutation[0..d-1]$\tcp*[r]{for a new value of \Permutation under construction}\label{alg:rgccurves:altpermutation}
\Repeat{all remaining digit strings $p[0],...,p[d-1]$ and $q[0],...,q[d-1]$ are empty}{\label{alg:rgccurves:digitloop}
  $\Forward \leftarrow \True$\tcp*[r]{indicates whether we go through $\kRGC^{d-i}$ forwards or backwards}\label{alg:rgccurves:initforward}
  \For{$i \leftarrow 0$ \KwTo $d-1$}{\label{alg:rgccurves:dimloop}
    $\pFirstDigit \leftarrow \Extract(p[\Permutation[i]])$; $\qFirstDigit \leftarrow \Extract(q[\Permutation[i]])$\;
    \If{$\Reflected[\Permutation[i]]$}{\label{alg:rgccurves:test}
      $\pFirstDigit \leftarrow (\base - 1 - \pFirstDigit)$; $\qFirstDigit \leftarrow (\base - 1 - \qFirstDigit)$\label{alg:rgccurves:reflectdigits}
    }
    \lIf{$\pFirstDigit < \qFirstDigit$}{\Return \Forward}\ 
    \lElseIf{$\pFirstDigit > \qFirstDigit$}{\Return \Not \Forward}\;\label{alg:rgccurves:return2}
    \lIf{\Forward}{$\Rank[i] \leftarrow \pFirstDigit$}\ \lElse{$\Rank[i] \leftarrow (\base - 1 - \pFirstDigit)$}\;\label{alg:rgccurves:calculaterank}
    \lIf{$\pFirstDigit$ is odd}{$\Forward \leftarrow \Not \Forward$}\label{alg:rgccurves:odddigit}
  }\label{alg:rgccurves:dimloopend}
  \For{$i \leftarrow 0$ \KwTo $d-1$}{
    $j \leftarrow \inv{a}_i(\Rank)$\;
    $\AltPermutation[i] \leftarrow \Permutation[j]$\;
    \lIf{$m_{j}(\Rank) = 1$}{$\Reflected[\AltPermutation[i]] \leftarrow \Not \Reflected[\AltPermutation[i]]$}\;\label{alg:rgccurves:reflect}
  }
  swap $\Permutation$ and $\AltPermutation$\;\label{alg:rgccurves:update}
}\label{alg:general:endloop}
\Return\False\tcp*[r]{$p$ and $q$ are equal}
\caption{Comparison operator for a $\base$-regular curve with $c(r) = \kRGC^d(r)$\label{alg:RGCcurvesoperator}}
\end{algorithm}

\section{Proof ingredients}
\label{sec:toolbox}

In this section we establish notation and basic results that are needed for the proofs in the following sections.
Readers who wish to learn about the final results and pseudocode for implementations only without studying the proofs, may skip this section.
More precisely, in Section~\ref{sec:operators} we set up notation common to most proofs; in Section~\ref{sec:visibleorders} we set up basic terminology and observations which will allow us to reason about interdimensional consistency; in Section~\ref{sec:orientations} we do the same for the concept of neutral orientation. In Section~\ref{sec:graycodeproperties} we establish basic properties of reflected Gray codes that we will need in the interdimensional-consistency proofs; this last section can also be skipped, to be read, if desired, when it is referred to later on.

\subsection{Notation for editing numbers, vectors and sequences}
\label{sec:operators}

\paragraph{The take-out operator}
If $a$ is a vector or a number, we will use $\takeout{i}a$ to denote the vector or number that results from removing the element or digit $a_i$ and renumbering the remaining elements consecutively, starting from zero. Thus, if $b = \takeout{i}a$, we have $b_j = a_j$ for $j < i$, and $b_j = a_{j+1}$ for $j \geq i$. When the take-out operator is applied to a permutation $a$, the remaining elements are numbered consecutively. Thus, we have:\[\begin{array}{ll}
\takeout{i}a_j = a_j & \hbox{if $0 \leq j < i$ and $a_j < a_i$}; \\
\takeout{i}a_j = a_j - 1 & \hbox{if $0 \leq j < i$ and $a_j > a_i$}; \\
\takeout{i}a_j = a_{j+1} & \hbox{if $i \leq j < d-1$ and $a_{j+1} < a_i$}; \\
\takeout{i}a_j = a_{j+1} - 1 & \hbox{if $i \leq j < d-1$ and $a_{j+1} > a_i$}. \\
\end{array}\]
For example, if $a = 30142$, then $\takeout{2}a = 2041$. The take-out operator can also be applied to sets and sequences: if $A$ is a set or sequence, then $\takeout{i}A$ is the set or sequence that results from applying $\takeout{i}$ to every member of the set or sequence. For example, if $A$ is the sequence of ternary numbers $\langle 010,000,210,222\rangle$, then $\takeout{2}A$ is the sequence $\langle 01,00,21,22\rangle$.

\paragraph{The reduce operator}
If $A$ is a set or sequence, then we use $\reduction{i}{j}A$ to denote the set or sequence that results from only selecting those elements $a \in A$ such that $a_i = j$, and applying $\takeout{i}$ to those elements. For example, if $A$ is the sequence of ternary numbers $\langle 010,000,210,222\rangle$, then $\reduction{2}{0}A$ is the sequence $\langle 01,00,21\rangle$.

\paragraph{The reflection operator}
If $a$ is a base-$\base$ number, then we use $\mirror{i}a$ to denote the number $a$ that results from replacing digit $a_i$ by its complement with respect to $\base - 1$. For example, if $a$ is the ternary number $20$, then $\mirror{1}a = 22$, and if $a$ is the binary number $1001$, then $\mirror{2}a = 1011$. Like the take-out operator, the reflection operator can also be applied to sets and sequences.

\paragraph{The insertion operator}
If $a$ is a vector or a number, we will use $\putin{i}{j}a$ to denote the vector or number that results from inserting an element or digit $a_i$ with value $j$, and re-indexing the original elements or digits $a_i,...$ as $a_{i+1},...$. For example, if $a$ is the ternary number 120, then $\putin{2}{1}a = 1210$. Like the take-out operator, the insertion operator can also be applied to sets and sequences.

\subsection{Visible orders of space-filling curves}
\label{sec:visibleorders}
Recall from Definition~\ref{def:consistent} that an (infinite) set $F$ of space-filling curves is \emph{interdimensionally consistent}, if $F$ contains a unique $d$-dimensional space-filling curve $f_d$ for any integer $d \geq 1$, and $f_j \in F$ is extradimensional to $f_i \in F$ whenever $j > i$. Observe that, to show that a given set $F$ of space-filling curves is interdimensionally consistent, we only need to show that each $f_j$ is extradimensional to $f_{j-1}$. Then it follows by induction that $f_j$ is also extradimensional to all $f_i$ with $j > i$.

To make the proofs in the rest of this paper easier to read, we introduce the following terminology. The \emph{front face} $F^0_i$ of a $d$-dimensional unit hypercube is the $(d-1)$-dimensional face that consists of the points $\{(q_0,...,q_d) \in [0,1]^d \mid q_i = 0\}$, and the \emph{back face} $F^1_i$ of a $d$-dimensional unit hypercube is the $(d-1)$-dimensional face that consists of the points $\{(q_0,...,q_{d-1}) \in [0,1]^d \mid q_i = 1\}$. With a \emph{face of a space-filling curve} $g$ we mean a face of the unit hypercube filled by~$g$.

The \emph{visible order} of a valid ordering $\inv{g}$ on a face $F^k_i$ of a $d$-dimensional space-filling curve $g$, is the function $\inv{f}: [0,1]^{d-1} \rightarrow [0,1]$ that is defined by the following two properties: (i)
for any pair of points $a,b \in F^k_i$ we have:\[
\inv{f}(\takeout{i}a) < \inv{f}(\takeout{i}b) \Leftrightarrow \inv{g}(a) < \inv{g}(b),\]
and (ii) the measure of the set $\{p \mid \inv{f}(p) \in [a,b]\}$ is equal to $b-a$.
The second property only serves to make $\inv{f}$ well-defined. The first property is what matters: this property defines the ordering of the points $\takeout{i}F^k_i$ induced by $\inv{f}$. If any valid ordering $\inv{f}$ of a space-filling curve $f$ can be defined as the visible order on $F^k_i$ of some valid ordering $\inv{g}$ of $g$, then we say that $g$ \emph{shows} $f$ on face $F^k_i$. 

Using the above terminology, we get the following:
\begin{lemma}\label{lem:goal}
A set $F$ of $\base$-regular space-filling curves, containing a $d$-dimensional space-filling curve $f_d$ for any $d \geq 1$, is interdimensionally consistent if and only if each curve $f_d$ (for $d > 1$) shows $f_{d-1}$ on each front face.
\end{lemma}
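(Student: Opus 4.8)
The plan is to prove the biconditional in two stages: first peel off the inductive reduction already noted after Definition~\ref{def:consistent}, and then reduce the remaining ``extradimensional to $f_{d-1}$'' condition to the visible-order statement by a direct unwinding of definitions. As a preliminary step I would record that $F$ is interdimensionally consistent if and only if each $f_d$ (for $d>1$) is extradimensional to $f_{d-1}$. One direction is immediate from Definition~\ref{def:consistent}; the converse is the induction remarked upon in the text, which amounts to transitivity of extradimensionality. I would justify that transitivity by factoring: given a strictly increasing injection selecting $i$ coordinates out of $k$, one factors it through an intermediate selection of $j$ coordinates whose image contains $\mathrm{im}(\mu)$, so that $\lift\mu=\lift{\mu_2}\circ\lift{\mu_1}$ and the two order-equivalences compose. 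It then suffices to show, for each $d>1$, that $f_d$ is extradimensional to $f_{d-1}$ if and only if $f_d$ shows $f_{d-1}$ on every front face.

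The heart of the argument is a bookkeeping identity available only in the borderline case $d'=d-1$. Here a strictly increasing injection $\mu\colon\{0,\dots,d-2\}\to\{0,\dots,d-1\}$ is determined by the single index $i$ it omits, and these indices are in bijection with the front faces $F^0_i$. For such $\mu$, the map $\lift\mu$ inserts a zero in position $i$, hence carries $[0,1]^{d-1}$ bijectively onto $F^0_i$, with $\takeout{i}$ as its inverse there: $\takeout{i}(\lift\mu(a'))=a'$ for every $a'$, and $\lift\mu(\takeout{i}a)=a$ for every $a\in F^0_i$. Substituting $a=\lift\mu(a')$ and $b=\lift\mu(b')$ into property~(i) of the visible order turns the statement ``$\inv{f_{d-1}}$ is the visible order of $\inv{f_d}$ on $F^0_i$'' into precisely the order-equivalence demanded by Definition~\ref{def:extradimensional} for this $\mu$.

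With this identity in hand I would match the quantifiers. ``$f_d$ shows $f_{d-1}$ on $F^0_i$'' says that for every valid ordering $\inv{f_{d-1}}$ there is a valid ordering $\inv{f_d}$ realizing it as the visible order on $F^0_i$, while extradimensionality via $\mu$ says that for every $\inv{f_{d-1}}$ there is a valid ordering $\inv{f_d}$ obeying the same order-equivalence. Quantifying over all $i$ on one side and over all admissible $\mu$ on the other—these ranges coinciding by the bijection above—yields the desired equivalence, and combining it with the preliminary reduction finishes the proof.

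The one point that needs genuine care is upgrading order-equivalence to exact equality of visible orders. Property~(i) only pins down an order relation, whereas ``shows'' requires a given valid ordering $\inv{f_{d-1}}$ to equal the visible order on the nose. I would close this gap by noting that the visible order is \emph{uniquely} determined by $\inv{f_d}$ through properties (i) and (ii) jointly, and that any valid ordering $\inv{f_{d-1}}$ already satisfies the measure condition~(ii): since $f_{d-1}$ is constructed so that equal subintervals of $[0,1]$ map to cells of equal volume, any section $\inv{f_{d-1}}$ is measure-preserving in the sense required, so $\{p\mid\inv{f_{d-1}}(p)\in[a,b]\}$ has measure $b-a$. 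Consequently, whenever $\inv{f_d}$ satisfies the order-equivalence for $\mu$, its visible order on $F^0_i$ agrees with $\inv{f_{d-1}}$ in both order and measure and hence equals it. Verifying property~(ii) is the main obstacle, though it is routine; everything else is a direct definition-chase.
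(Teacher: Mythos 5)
Your proposal is correct and follows essentially the route the paper intends: the paper gives no explicit proof of this lemma, relying on the induction observation stated just before Section~3.2 together with the definitional correspondence between front faces and the coordinate-omitting injections $\mu$, which is exactly what you spell out. Your extra care about upgrading the order-equivalence to equality with the visible order (via the measure condition, which any valid ordering of a $\base$-regular curve satisfies automatically by the volume-proportional parameterization) fills in a detail the paper dismisses with ``the second property only serves to make $\inv{f}$ well-defined.''
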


\subsection{Orientations}
\label{sec:orientations}
Recall from Section~\ref{sec:numerical} that for each of the $\base^d$ regions $S(r)$ in a $d$-dimensional $\base$-regular mono-curve, a permutation $a(r)$ defines the rotation component of the transformation that maps the curve through the unit hypercube to the curve in region $S(r)$. When we expand the recursion, we find subregions within regions. We can identify each such region by a sequence $R$ in the following way. Let $\mathrm{prefix}(R)$ be $R$ without its last element, let $\mathrm{last}(R)$ be the last element of $R$, and let $\langle\rangle$ be the empty sequence. Then $S(\langle\rangle)$ is the $d$-dimensional unit hypercube, and $S(R)$ is region $S(\mathrm{last}(R))$ within region $S(\mathrm{prefix}(R))$. The rotation component $a(R)$ of the transformation that maps the curve through the unit hypercube to the curve within $S(R)$, is now the permutation $a(\mathrm{last}(R))$ applied to the permutation $a(\mathrm{prefix}(R))$, where $a(\langle\rangle)$ is the identity permutation.

We say that a $\base$-regular mono-curve has neutral orientation, if, in the limit for increasing recursion depth $\depth$, all $d!$ permutations are equally frequent among the permutations $a(R)$ of the regions identified by sequences $R$ of length $\depth$. In other words, if $R$ is a long random sequence of $d$-digit base-$\base$ numbers, then the probability that $a(R)$ is any particular permutation should be $1/d!$.

\begin{lemma}\label{lem:neutral}
A $d$-dimensional $\base$-regular mono-curve has neutral orientation if and only if there is a number~$\depth$, such that each of the $d!$ possible permutations of $d$ numbers can be constructed as the composition of $\depth$ permutations from the set $\{a(0),...,a(\reps{\base-1}{d})\}$.
\end{lemma}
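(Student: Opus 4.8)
The plan is to read off, for a uniformly random sequence $R=\langle r_1,\dots,r_\depth\rangle$ of length $\depth$, the permutation $a(R)$ as the position after $\depth$ steps of a random walk on the symmetric group $G=S_d$ (so $|G|=d!$), and then to reduce the claim to the standard convergence theory for such walks. Write $S=\{a(0),\dots,a(\reps{\base-1}{d})\}$ for the set of permutations attached to the $\base^d$ regions, and let $S^\depth$ denote the set of all products of $\depth$ elements of $S$; the right-hand side of the lemma is then exactly the assertion that $S^\depth=G$ for some $\depth$. By the definition of $a(R)$ in Section~\ref{sec:orientations}, $a(R)$ is the composition of the $a(r_i)$ along $R$, so for uniformly random $R$ it is distributed as the $\depth$-fold convolution $\mu^{*\depth}$, where $\mu$ is the measure on $G$ given by $\mu(g)=\base^{-d}\,|\{r : a(r)=g\}|$, which has support $S$. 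Neutral orientation is then precisely the statement that $\mu^{*\depth}(\pi)\to 1/d!$ for every $\pi\in G$. I will use two elementary facts: $\mathrm{supp}(\mu^{*\depth})=S^\depth$ (a permutation arises in $\depth$ steps iff it is a product of $\depth$ elements of $S$), and the order in which the factors $a(r_i)$ are composed is irrelevant here, since ranging over all choices yields the same set $S^\depth$ and the same convolution.

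For the direction assuming the combinatorial condition, I would fix $\depth_0$ with $S^{\depth_0}=G$ and consider the transition matrix $P$ of the walk, $P(x,y)=\mu(yx^{-1})$. Its $\depth_0$-th power is entrywise positive, because $P^{\depth_0}(x,y)>0$ exactly when $yx^{-1}\in S^{\depth_0}=G$; hence $P$ is primitive. Moreover $P$ is doubly stochastic, since each $a(r)$ is a bijection, so its unique stationary distribution is uniform. By Perron--Frobenius (equivalently, the convergence theorem for finite irreducible aperiodic Markov chains), the powers of a primitive stochastic matrix converge to the rank-one matrix all of whose rows equal the stationary distribution; therefore $\mu^{*\depth}(\pi)=P^\depth(\mathrm{id},\pi)\to 1/d!$ for every $\pi$, which is exactly neutral orientation.

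For the converse I would argue by contraposition. Suppose $S^\depth\ne G$ for every $\depth$. Then for each $\depth$ we may pick $\pi_\depth\in G\setminus S^\depth$; since $G$ is finite but $\depth$ ranges over infinitely many values, some fixed permutation $\pi^\ast$ satisfies $\pi^\ast=\pi_\depth$ for infinitely many $\depth$. For all those $\depth$ we have $\pi^\ast\notin S^\depth=\mathrm{supp}(\mu^{*\depth})$, so $\mu^{*\depth}(\pi^\ast)=0$ along an infinite subsequence. As $1/d!>0$, the sequence $\mu^{*\depth}(\pi^\ast)$ cannot converge to $1/d!$, so the curve does not have neutral orientation. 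Together with the previous paragraph this gives the stated equivalence.

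I expect the main obstacle to be expository rather than mathematical: pinning down the identification of $a(R)$ with $\mu^{*\depth}$ (including the harmless indifference to the order of composition and the fact that repetitions of the $a(r)$ are allowed), and then invoking the correct black box. The single point that genuinely relies on the structure of the problem, rather than on generic Markov-chain facts, is that the limiting distribution is exactly \emph{uniform}; this is where I use that every $a(r)$ is a permutation, which makes the walk doubly stochastic. Note also that the pigeonhole step in the converse sidesteps any explicit analysis of periodicity or coset structure, so no separate treatment of the periodic case is required.
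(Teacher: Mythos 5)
Your proof is correct and takes essentially the same route as the paper: both model $a(R)$ as a random walk on $S_d$ whose transition matrix is doubly stochastic, identify primitivity of that matrix with the condition $S^{\depth}=G$, and invoke Perron--Frobenius to get convergence to the uniform distribution. Your contrapositive argument for the converse (pigeonholing a permutation outside $\mathrm{supp}(\mu^{*\depth})$ for infinitely many $\depth$) is in fact slightly cleaner than the paper's appeal to ``$P^\depth u$ converges iff $P$ is regular,'' since it targets convergence to $1/d!$ directly rather than mere convergence, but it is not a different approach.
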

\begin{proof}
Let all possible permutations be numbered from $1$ to $d!$, where permutation 1 is the identity permutation.
Let $P$ be the $d! \times d!$ matrix defined by $P_{ij} = z/\base^d$ if exactly $z$ out of $\base^d$ regions $S(r)$ have the permutation that, applied after permutation $i$, results in permutation $j$. Note that for each permutation $a(r)$ used in $z$ regions $S(r)$, we put an entry with value $z/k^d$ in each row $i$ (namely at $P_{ij}$, where $j$ is the permutation that results from applying $a(r)$ to permutation $i$) and in each column $j$ (namely at $P_{ij}$, where $i$ is the permutation that results from applying $\inv{a}(r)$ to permutation $j$). Thus each row sums up to one and each column sums up to one. Let $u$ be the $d!$-vector with $u_1 = 1$ and $u_i = 0$ for all $i > 1$.

Now, if we choose a random region identified by a sequence $R$ of length $\depth$, the probability that it has permutation $i$ is given by element $i$ of the vector $P^\depth u$. As $\depth$ goes to infinity, $P^\depth u$ converges if and only if $P$ is a regular square matrix, that is, if and only if there is a $k$ such that all elements of $P^k$ are strictly positive. Note that all entries in the first column of $P^k$ are strictly positive if and only if each permutation can be constructed from the identity permutation as the composition of $k$ permutations from the set $\{a(0),...,a(\reps{\base-1}{d})\}$; in this case each permutation can actually be constructed from any other permutation and therefore all entries in all columns of $P^k$ are strictly positive.

What is left to prove is that if there is a $k$ such that all entries of $P^k$ are strictly positive, then $\lim_{\depth\rightarrow\infty} P^\depth u$ converges to a vector of which all elements have the value $1/d!$. Note that the vector to which $\lim_{\depth\rightarrow\infty} P^\depth u$ converges is the unique vector $v$ that satisfies $P v = v$. Since all rows of $P$ sum up to one, the vector of which all elements are $1/d!$ is a solution to this equation. The lemma follows.
\end{proof}

\subsection{Properties of reflected Gray codes}
\label{sec:graycodeproperties}

Recall from Section~\ref{sec:graycodes} that the reflected Gray code $\kRGC^d$ is defined as follows: $\kRGC^1$ is the sequence $\langle 0,1,...,\base-1 \rangle$, and $\kRGC^{d+1}$, for $d \geq 1$, consists of the concatenation of the sequences $a \oplus \mathrm{revifodd}(a,\kRGC^d)$ for $a = 0,1,...,\base-1$ in order. As noted before, two consecutive elements $\kRGC^d(\pred{r})$ and $\kRGC^d(r)$ differ in only one digit, and the difference in value of this digit is one. In fact, we observe that the difference is in the most significant digit in which the base-$\base$ number $r$ differs from $\pred{r}$:
\begin{lemma}\label{lem:changingdigit}
Let $C$ be a reflected Gray code with base~$\base$, indexed by base-$\base$ numbers.
If $C_i(\pred{r}) \neq C_i(r)$, then $i$ has the smallest value such that $\pred{r}_i \neq r_i$.
\end{lemma}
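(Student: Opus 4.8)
The plan is to work from the explicit ``encoder'' description of the reflected Gray code given just before the lemma: to obtain $C(r)$ from $r$, one scans the digits $r_0,\dots,r_{d-1}$ from most to least significant while maintaining a reflection flag, setting $C_i(r)=r_i$ or $C_i(r)=\base-1-r_i$ according to whether an even or odd number of reflections has been triggered so far. Writing $\rho_i(r)\in\{0,1\}$ for this flag at position $i$ (so $\rho_0(r)=0$, and $\rho_i(r)$ is the parity of the reflections triggered at positions below $i$), the crucial structural fact is that $\rho_i(r)$, and hence the digit $C_i(r)$, depends only on the more significant digits $r_0,\dots,r_{i-1}$ together with $r_i$ itself, never on the less significant ones. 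This forward dependency is exactly what lets the two codes $C(r)$ and $C(\pred{r})$ be compared position by position.

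Let $k$ be the smallest index with $\pred{r}_k \neq r_k$; by definition $r$ and $\pred{r}$ agree in every position $j<k$. I would first show, by a short induction on $j=0,1,\dots,k-1$, that the flags coincide, $\rho_j(r)=\rho_j(\pred{r})$, and consequently $C_j(r)=C_j(\pred{r})$: the flag entering position $j$ is the same for both numbers, and since $r_j=\pred{r}_j$ the same map (identity or complement) produces the same Gray digit, which then triggers the same flag update for position $j+1$. In particular the flags entering position $k$ are equal, $\rho_k(r)=\rho_k(\pred{r})$. Note that this step is insensitive to the precise rule for updating the flag (whether it is keyed on the rank digit or on the emitted Gray digit), because $r$ and $\pred{r}$ agree on everything relevant below~$k$.

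Next I would examine position $k$ itself. Both $C_k(r)$ and $C_k(\pred{r})$ are obtained from $r_k$ and $\pred{r}_k$, respectively, by the \emph{same} map---either the identity or $x\mapsto\base-1-x$, as dictated by the common flag $\rho_k$. This map is injective, and $r_k\neq\pred{r}_k$ by the choice of $k$, so $C_k(r)\neq C_k(\pred{r})$. Thus $k$ is a position in which the two codes differ. Finally I would invoke the fact already recorded in the text---that two consecutive elements of $\kRGC^d$ differ in exactly one digit---to conclude that this single differing position must be $k$, which is precisely the claim.

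The main obstacle, and the only place any care is needed, is the agreement of the reflection flags at position $k$: everything rests on $\rho_k$ being a function of the high-order digits $r_0,\dots,r_{k-1}$ alone, on which $r$ and $\pred{r}$ are identical. If one preferred a self-contained argument not relying on the ``exactly one digit'' property, one would instead have to prove that the codes also agree in all positions $i>k$; there both the digit changes ($r_i=0$ versus $\pred{r}_i=\base-1$, since $k$ marks the last borrow of the decrement) and the flag flips, so one must check that these two changes cancel. I expect that cancellation argument to be correct but more delicate, so I would prefer the shorter route through the known one-digit-difference property.
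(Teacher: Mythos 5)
Your proposal is correct, but there is nothing in the paper to compare it against: Lemma~\ref{lem:changingdigit} is stated as a bare observation (``In fact, we observe that\dots'') immediately after the digit-scanning description of $\kRGC^d$, and no proof is given. Your argument is a sound way to fill that gap, and it uses exactly the ingredients the paper puts on the table. The two points you identify as needing care are indeed the right ones, and you handle both correctly: the reflection flag entering position $i$ depends only on the digits at positions $0,\dots,i-1$ (so the flags for $r$ and $\pred{r}$ agree up to and including the first position $k$ where the rank digits differ, regardless of whether the flag is triggered by the rank digit or the emitted Gray digit --- a distinction that genuinely matters for $\base=2$, where complementation flips parity), and the map applied at position $k$ is an injection, so $C_k(r)\neq C_k(\pred{r})$. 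The only dependency worth flagging is your appeal to the fact that consecutive code words differ in exactly one digit; the paper also asserts this without proof (``The reader may verify that by construction\dots''), so leaning on it is consistent with the paper's level of rigour, but a fully self-contained proof would require the cancellation argument you sketch in your last paragraph (at every position $i>k$ the decrement's borrow changes $r_i$ from $0$ to $\base-1$ while the flag entering position $i$ flips, and the two effects cancel). As written, your shorter route is valid and is the natural proof of the statement.
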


We will now explain some observations about reflected Gray codes with base 2 and 3 that are relevant to Sections \ref{sec:ternary} and~\ref{sec:hilbert} of this paper.

\paragraph{Ternary reflected Gray codes ($\base = 3$)}
Recall that $\kRGC^d(r)$ can be obtained from $r$ by going through the digits of $r$ in order from $r_0$ to $r_{d-1}$: whenever a digit $r_i$ is encountered such that $r_i = 1$, we `reflect' all following digits, that is, we replace $r_j$ by $\base - 1 - r_j$ for all $j > i$.
Note that if $\base = 3$, this operation leaves the digits 1 unaffected. Thus, $\TRGC^d_i(r) = 1$ if and only if $r_i = 1$, and $\TRGC^d_i(r) \in \{0,2\}$ if and only if $r_i \in \{0,2\}$.
\begin{lemma}\label{lem:ternarydigitremoval}
(i) If $r_i$ is even, then $\takeout{i}\TRGC^d(r) = \TRGC^{d-1}(\takeout{i}r)$.
(ii) If $k$ is even, $\reduction{i}{k}\TRGC^d = \TRGC^{d-1}$.
\end{lemma}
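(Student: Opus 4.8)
The plan is to run everything off the digit-reflection characterization of $\TRGC^d$ recalled in the paragraph just before the lemma. Write $\pi(r,j)$ for the parity of the number of indices $j' < j$ with $r_{j'}=1$; then $\TRGC^d_j(r)=r_j$ when $\pi(r,j)$ is even and $\TRGC^d_j(r)=2-r_j$ when $\pi(r,j)$ is odd. The decisive feature of base $3$ is that $x\mapsto 2-x$ fixes the digit $1$ and only swaps $0$ and $2$; hence a position carries a $1$ in $r$ if and only if it carries a $1$ in $\TRGC^d(r)$, and in particular $r_i$ is even exactly when $\TRGC^d_i(r)$ is even. So ``even digits never trigger a reflection,'' and this is what makes deletion of an even digit commute with Gray coding.

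For part (i) I would fix $r$ with $r_i$ even and compare $\takeout{i}\TRGC^d(r)$ with $\TRGC^{d-1}(\takeout{i}r)$ digit by digit. For an output position $j<i$ the deleted index $i$ lies strictly to the right, so the governing prefixes of $r$ and of $\takeout{i}r$ coincide, giving $\pi(\takeout{i}r,j)=\pi(r,j)$ and hence equal digits. For an output position $j\ge i$, the prefix governing $\TRGC^{d-1}(\takeout{i}r)_j$ is exactly the prefix governing $\TRGC^d(r)_{j+1}$ with the single digit $r_i$ removed; since $r_i$ is even it is not a $1$ and contributes nothing to the reflection count, so $\pi(\takeout{i}r,j)=\pi(r,j+1)$. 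The same reflection is then applied to the same digit $r_{j+1}$, so the two codewords agree after the index shift, which is precisely $\takeout{i}\TRGC^d(r)=\TRGC^{d-1}(\takeout{i}r)$.

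For part (ii) the new issue is not the value of each selected codeword but the order in which the selected codewords appear along the sequence $\TRGC^d$. Let $r=(\TRGC^d)^{-1}(c)$ be the rank of $c$. Since $c_i$ even is equivalent to $r_i$ even, every codeword with $c_i=k$ (for even $k$) is covered by part (i), which I would rewrite as $(\TRGC^{d-1})^{-1}(\takeout{i}c)=\takeout{i}r$: deleting digit $i$ sends the selected codeword of rank $r$ to the element of $\TRGC^{d-1}$ of rank $\takeout{i}r$. There are exactly $3^{d-1}$ selected ranks, because the length-$i$ prefix of $r$ fixes $\pi(r,i)$ and therefore forces the unique value of $r_i$ realizing $c_i=k$, leaving the other $d-1$ digits free. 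Hence it suffices to prove that $r\mapsto\takeout{i}r$ is order-preserving on the selected ranks: then, as $r$ increases through the selection, $\takeout{i}r$ runs through $0,1,\dots,3^{d-1}-1$ in order and $\reduction{i}{k}\TRGC^d=\TRGC^{d-1}$ follows.

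The hard part, and the one place where the ``$c_i=k$'' selection genuinely differs from an ``$r_i=k$'' selection, is this monotonicity, since selecting by codeword value does not fix a single value of $r_i$ (depending on $\pi(r,i)$ it forces $r_i=k$ or $r_i=2-k$). I would handle it by case analysis on the first position $p$ where two selected ranks $r,r'$ with $\takeout{i}r<\takeout{i}r'$ disagree: if $p<i$ then $r,r'$ already disagree at $p$ in the same direction and whatever happens at position $i$ lies further right, so $r<r'$; if the first disagreement occurs at a position $\ge i$ then $r,r'$ share their length-$i$ prefix, hence share $\pi(\cdot,i)$ and therefore the forced digit $r_i=r'_i$, so the comparison is again decided at the shifted position in the same direction, again giving $r<r'$. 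This establishes the monotonicity and completes the lemma.
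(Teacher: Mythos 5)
Your proposal is correct and follows essentially the same route as the paper: part (i) rests on the observation that even digits are fixed by the reflection $x\mapsto 2-x$ and contribute nothing to the reflection count, and part (ii) combines this with the fact that the prefix $r_0\ldots r_{i-1}$ forces the value of $r_i$, so that $r\mapsto\takeout{i}r$ is an order isomorphism from the selected ranks onto all $(d-1)$-digit ternary numbers. The paper phrases the monotonicity step as the selected ranks forming $3^i$ bundles of consecutive values ordered by their length-$i$ prefixes, whereas you argue via the first position of disagreement; these are the same fact presented differently, with yours being somewhat more explicit.
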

\begin{proof}
Under the above mentioned step-by-step transformation that transforms $r$ into $\TRGC^d(r)$, even digits remain even and do not affect other digits. Therefore we have: if $r_i \in \{0,2\}$ (and therefore, $\TRGC^d_i(r) \in \{0,2\}$), then $\takeout{i}\TRGC^d(r) = \TRGC^{d-1}(\takeout{i}r)$. This establishes part (i) of the lemma.

Furthermore, note that each prefix of $\TRGC^d(r)$ only depends on the corresponding prefix of $r$. Therefore, selecting all pairs $(r,\TRGC^d(r))$ such that $\TRGC^d_i(r) = k$, for certain values $i \in \{0,...,d-1\}$ and $k \in \{0,2\}$, results in selecting $\base^i$ bundles with consecutive values of $r$, such that the values of $r$ in each bundle are exactly all values that have a common prefix of $i+1$ digits, and different bundles differ in at least one of the first $i$ digits of $r$. Thus, replacing each selected pair $(r,\TRGC^d(r))$ by $(\takeout{i}r,\takeout{i}\TRGC^d(r)) = (\takeout{i}r,\TRGC^{d-1}(\takeout{i}r))$ results in a sequence ordered by $\takeout{i}r$, and thus, the second elements of the pairs constitute exactly the sequence $\TRGC^{d-1}$ in order. This establishes part (ii) of the lemma.
\end{proof}
Observe that $\TRGC^d(\reps 2d) = \reps 2d$.

\paragraph{Binary reflected Gray codes ($\base = 2$)}
Recall again that $\kRGC^d(r)$ can be obtained from $r$ by going through the digits of $r$ in order from $r_0$ to $r_{d-1}$: whenever a digit $r_i$ is encountered such that $r_i = 1$, we `reflect' all following digits, that is, we replace $r_j$ by $\base - 1 - r_j$ for all $j > i$. For ease of notation, define $r_{-1} = 0$.
If $\base=2$, one can verify easily by induction that the end result is the following:
\begin{lemma}\label{lem:XORconstruction}
$\BRGC^d_i(r) = |r_i - r_{i-1}|$.
\end{lemma}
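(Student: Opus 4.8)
The plan is to prove the identity by induction on the dimension $d$, working from the recursive definition of $\BRGC^{d}$ rather than from the informal step-by-step reflection procedure, since the recursion makes the bookkeeping exact and sidesteps the subtlety that in base~$2$ a reflection turns odd digits into even ones. Throughout I would write $g = \BRGC^{d}(r)$ and keep the convention $r_{-1} = 0$, noting that for $i = 0$ the claimed formula reads $g_0 = |r_0 - r_{-1}| = r_0$, so part of the work is to confirm that this matches the leading digit produced by the recursion. The base case $d = 1$ is immediate: $\BRGC^{1} = \langle 0,1\rangle$ gives $\BRGC^{1}_0(r) = r_0 = |r_0 - r_{-1}|$.

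For the inductive step I would unfold $\BRGC^{d+1}$ as the concatenation, over $a = 0,1$, of the blocks $a \oplus \mathrm{revifodd}(a,\BRGC^{d})$. Reading off the leading digit, the first digit of $\BRGC^{d+1}(r)$ is exactly $a = r_0$, which settles $g_0 = |r_0 - r_{-1}|$. The remaining digits $\takeout{0}\BRGC^{d+1}(r)$ are governed by the block containing~$r$: when $r_0 = 0$ they form $\BRGC^{d}(\takeout{0}r)$, and when $r_0 = 1$ they form $\BRGC^{d}$ evaluated at the digit-wise complement of $\takeout{0}r$, because reversing a sequence of length $2^{d}$ complements every digit of its index. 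Using that $\takeout{0}\BRGC^{d+1}(r)_{i-1} = \BRGC^{d+1}_i(r)$, I would then apply the induction hypothesis to the inner $\BRGC^{d}$ and translate the resulting absolute differences back into differences of the original digits of~$r$.

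For the interior positions $i \geq 2$ the two parity cases behave identically: in the $r_0 = 1$ case each index digit is replaced by its complement, and since $|(1-r_i)-(1-r_{i-1})| = |r_i - r_{i-1}|$, the induction hypothesis delivers $g_i = |r_i - r_{i-1}|$ unchanged. The place requiring care, and the main obstacle, is the seam at $i = 1$, the first digit handled by the inner $\BRGC^{d}$, whose formula invokes the reduced number's boundary convention $r_{-1}=0$ rather than the taken-out leading digit $r_0$. Here the cases diverge and must be checked separately: for $r_0 = 0$ the induction hypothesis yields $g_1 = r_1 = |r_1 - 0| = |r_1 - r_0|$, while for $r_0 = 1$ it yields $g_1 = 1 - r_1 = |r_1 - 1| = |r_1 - r_0|$ (using $r_1 \in \{0,1\}$). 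Verifying that the internal $r_{-1}=0$ convention, together with the complement imposed by the odd block, reproduces precisely $|r_1 - r_0|$ in both parities is the crux; once this boundary digit is confirmed, the interior positions follow uniformly and the induction closes.
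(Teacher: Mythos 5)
Your proof is correct, and it takes a different route from the paper's. The paper offers essentially no written proof: it says only that ``one can verify easily by induction that the end result is the following,'' where the intended induction is over the digit position $i$, starting from the informal reflection procedure (digit $\BRGC^d_i(r)$ equals $r_i$ or $1-r_i$ according to whether an even or odd number of reflections has been triggered so far) and observing that the parity of the number of reflections applied before position $i$ is the parity of $g_0+\dots+g_{i-1}$, which telescopes to $r_{i-1}$ once the formula is assumed for smaller indices. You instead induct on the dimension $d$ directly from the recursive block definition $\BRGC^{d+1} = \bigl(a \oplus \mathrm{revifodd}(a,\BRGC^d)\bigr)_{a=0,1}$, using that reversing a block of length $2^d$ complements every digit of the index, that complementation preserves $|s_j - s_{j-1}|$ at interior positions, and checking the seam $i=1$ separately in both parities of $r_0$. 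This is the right place to concentrate the effort, and your two seam computations ($g_1 = r_1$ when $r_0=0$ and $g_1 = 1-r_1$ when $r_0=1$, both equal to $|r_1-r_0|$) are exactly what is needed. What your approach buys is self-containedness: it does not rely on the digit-reflection procedure, which the paper itself only justifies informally, and it makes explicit the one subtlety the paper glosses over in base $2$ (reflection flips parity, unlike in base $3$). The paper's intended argument is shorter once the reflection procedure is taken for granted, but yours is the more rigorous derivation from the primary definition.
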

This leads to the following lemma:
\begin{lemma}\label{lem:binarydigitremoval}
(i) If $r_i = r_{i-1}$ or $i = d-1$, then $\takeout{i}\BRGC^d(r) = \BRGC^{d-1}(\takeout{i}r)$. If $i < d-1$ and $r_i \neq r_{i-1}$, then $\takeout{i}\BRGC^d(r) = \mirror{i}\BRGC^{d-1}(\takeout{i}r)$.
(ii) For $0 \leq i < d$ we have $\reduction{i}{0}\BRGC^d = \BRGC^{d-1}$; for $0 \leq i < d-1$ we have $\reduction{i}{1}\BRGC^d = \mirror{i}\BRGC^{d-1}$. 
\end{lemma}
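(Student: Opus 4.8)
The plan is to prove part (i) by a direct digit-by-digit comparison built on Lemma~\ref{lem:XORconstruction}, and then to bootstrap part (ii) from part (i) together with a monotonicity property of the take-out operator.

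For part (i), I would write $g = \BRGC^d(r)$, so that $g_j = |r_j - r_{j-1}|$ for every $j$ (with $r_{-1} = 0$), and set $s = \takeout{i}r$, so that $s_j = r_j$ for $j < i$ and $s_j = r_{j+1}$ for $j \geq i$. Comparing the digits of $\takeout{i}g$ with those of $\BRGC^{d-1}(s)$, whose $j$-th digit is $|s_j - s_{j-1}|$, one finds that for every position $j \neq i$ the two numbers agree automatically: for $j < i$ both equal $|r_j - r_{j-1}|$, and for $j > i$ both equal $|r_{j+1} - r_j|$ after the re-indexing induced by $\takeout{i}$. The only position that needs care is $j = i$, which survives in $\takeout{i}g$ only when $i < d-1$; there $\takeout{i}g$ carries $g_{i+1} = |r_{i+1} - r_i|$ while $\BRGC^{d-1}(s)$ carries $|r_{i+1} - r_{i-1}|$. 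In binary these two values coincide when $r_i = r_{i-1}$ and are complementary when $r_i \neq r_{i-1}$, which is exactly the claimed dichotomy: in the first case (as well as whenever $i = d-1$, where no position~$i$ survives) we obtain $\BRGC^{d-1}(\takeout{i}r)$, and in the second we obtain $\mirror{i}\BRGC^{d-1}(\takeout{i}r)$.

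For part (ii), I would read both clauses through the lens of part (i). The selection $\reduction{i}{0}$ keeps exactly the $r$ with $\BRGC^d_i(r) = |r_i - r_{i-1}| = 0$, i.e.\ $r_i = r_{i-1}$; by part (i) each surviving element reduces to $\BRGC^{d-1}(\takeout{i}r)$. Likewise $\reduction{i}{1}$ (for $i < d-1$) keeps the $r$ with $r_i \neq r_{i-1}$, and by part (i) each reduces to $\mirror{i}\BRGC^{d-1}(\takeout{i}r)$. It then remains to show that, as $r$ runs in increasing order through either selection set, $\takeout{i}r$ runs in increasing order through all $2^{d-1}$ numbers of $d-1$ digits. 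The key observation is that $\takeout{i}$ is order-preserving on each of these two sets: if $r < r'$ both satisfy the constraint and first differ at the most significant position $j$, then $j \neq i$, because agreement on positions $0,\dots,i-1$ forces $r_{i-1} = r'_{i-1}$ and hence, via the constraint, $r_i = r'_i$; with the differing position off $i$, removing digit~$i$ leaves the most-significant-difference comparison intact, so $\takeout{i}r < \takeout{i}r'$. Since each selection set has exactly $2^{d-1}$ elements, this monotone injection is a bijection onto the $(d-1)$-digit numbers in their natural order, whence $\reduction{i}{0}\BRGC^d = \langle \BRGC^{d-1}(0), \BRGC^{d-1}(1), \dots\rangle = \BRGC^{d-1}$ and $\reduction{i}{1}\BRGC^d = \mirror{i}\BRGC^{d-1}$.

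The one genuinely delicate step is the bookkeeping at position $j = i$: in part (i) it is the single digit whose behaviour splits the result into the two cases, and in part (ii) it is exactly the position the constraint must rule out in order to secure monotonicity of $\takeout{i}$. Everything else is routine re-indexing of digits, so I expect the write-up to hinge entirely on handling position~$i$ correctly, together with remembering the boundary convention $r_{-1} = 0$ and the special role of $i = d-1$.
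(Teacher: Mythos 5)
Your proposal is correct and follows essentially the same route as the paper: part (i) by reading off the Gray-code digits via Lemma~\ref{lem:XORconstruction} and isolating what happens at position $i$, and part (ii) by combining part (i) with the fact that, on each selection set, $\takeout{i}$ maps the ranks order-preservingly and bijectively onto the $(d-1)$-digit ranks. The only cosmetic difference is that the paper justifies this order-preservation by the prefix-bundle structure of the selection set (borrowing the argument from the proof of Lemma~\ref{lem:ternarydigitremoval}), whereas you argue directly that the most significant differing digit of two selected ranks can never be $i$; both establish the same fact.
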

\begin{proof}
From Lemma~\ref{lem:XORconstruction} we get that the removal of a single digit $r_i$ does not affect the remaining digits in any of the following cases: (a) $r_i = r_{i-1}$ (and thus, $\BRGC^d_i(r) = 0$); and (b) $i = d-1$ (because there are no further digits). Thus, in these cases we would also have $\takeout{i}\BRGC^d(r) = \BRGC^{d-1}(\takeout{i}r)$. On the other hand, if we remove a digit $r_i$ such that $i < d-1$ and $r_i \neq r_{i-1}$ (and thus, $\BRGC^d_i(r) = 1$), the next digit of the Gray code changes. Thus we get $\takeout{i}\BRGC^d(r) = \mirror{i}\BRGC^{d-1}(\takeout{i}r)$.

Following the same reasoning as above in the proof of Lemma~\ref{lem:ternarydigitremoval}, selecting all pairs $(r,\BRGC^d(r))$ such that $\BRGC^d_i(r) = 0$ (for any $i$) or such that $\BRGC^d_{d-1}(r) = 1$ and replacing each selected pair $(r,\BRGC^d(r))$ by $(\takeout{i}r,\takeout{i}\BRGC^d(r)) = (\takeout{i}r,\BRGC^{d-1}(\takeout{i}r))$ results in the sequence $\BRGC^{d-1}$ in order. On the other hand, if we remove a digit $r_i$ such that $\BRGC^d_i(r) = 1$, the next digit of the Gray code changes. Thus, in this case we get: $\takeout{i}\BRGC^d(r) = \mirror{i}\BRGC^{d-1}(\takeout{i}r)$; selecting all pairs $(r,\BRGC^d(r))$ such that $\BRGC^d_i(r) = 1$ (for any $i < d-1$) and replacing each selected pair $(r,\BRGC^d(r))$ by $(\takeout{i}r,\takeout{i}\BRGC^d(r)) = (\takeout{i}r,\mirror{i}\BRGC^{d-1}(\takeout{i}r))$ results in the sequence $\mirror{i}\BRGC^{d-1}$ in order.
\end{proof}
Observe that $\BRGC^d(\reps 1d) = 1\reps 0{d-1}$.

\section{Extradimensional 3-regular curves}
\label{sec:ternary}

In this section we study 3-regular vertex-continuous mono-curves, as defined in Section~\ref{sec:classes}. We will call such curves \emph{mono-Wunderlich curves}, after Wunderlich~\cite{Wunderlich}.
Besides Peano's curve, there are many other two-dimensional mono-Wunderlich curves: Figure~\ref{fig:2DWunderlich} shows a selection of order-preserving mono-Wunderlich curves. Four of these curves we will generalize to higher dimensions: Peano's curve, the coil curve~\cite{Haverkort2D} (Luxburg's variation~1~\cite{Luxburg}), Luxburg's variation~2~\cite{Haverkort2D,Luxburg}---which we will henceforth call \emph{half-coil}---and the Meurthe curve~\cite{Haverkort2D}.

\begin{figure}
\centering
\hbox to \hsize{\hfill
\includegraphics[width=\hsize]{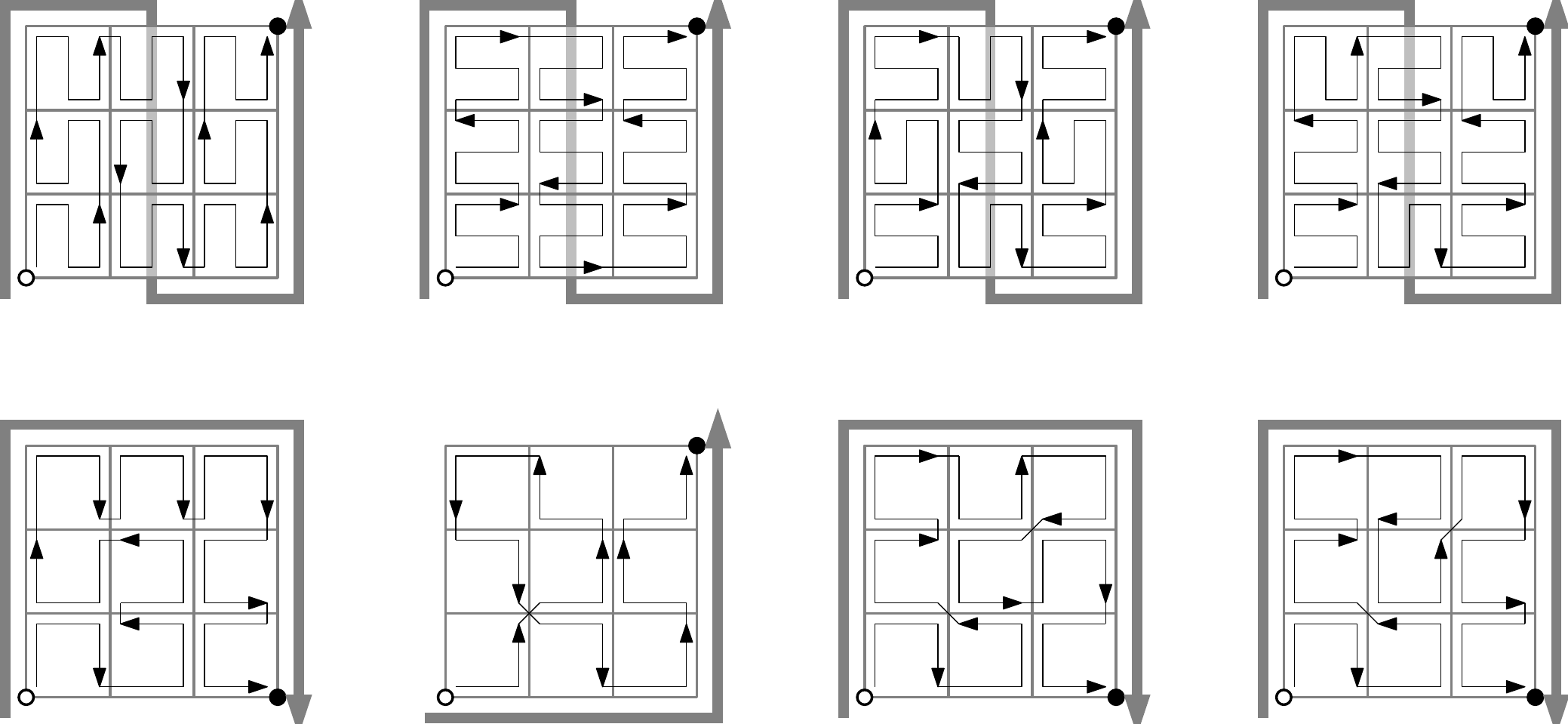}\hfill
}
\caption{Some examples of order-preserving mono-Wunderlich curves. On top we see four curves that fit Wunderlich's Serpentine framework, from left to right: Peano, coil, half-coil, and Meurthe. Below, from left to right: the {\cyrrm {Ya}}- or mirrored R-curve (Wunderlich's Meander), and three unnamed curves.}
\label{fig:2DWunderlich}
\end{figure}

\subsection{Interdimensionally consistent mono-Wunderlich curves}
Each curve described in this section is defined by a single recursive rule. Each curve traverses the $3^d$ subcubes of the unit hypercube in the order of the ternary reflected Gray code explained in Section~\ref{sec:graycodes}, that is, for each region $S(r)$ we have:
\[c(r) = \TRGC_d(r).\]
In recursion, this results in each subcube being traversed from one corner to a corner that is opposite of it in all dimensions. To make sure that these curve fragments connect up properly, we define the reflections $m(r)$ as follows:
\[m_i(r) = (\mathrm{digitsum}(r) - r_i) \bmod 2.\]
With these definitions of $c(r)$ and $m(r)$ we get a proper mono-Wunderlich curve, regardless of our choice of the permutations $a(r)$ (Lemma~\ref{lem:ternarycontinuous} in Section~\ref{sec:proofternarycontinuous} proves this). As we will prove in Section~\ref{sec:proofternary}, the following choices for $a(r)$ result in interdimensionally consistent sets of mono-Wunderlich curves:\begin{itemize}
\item generalized Peano curves: $a(r)$ is simply the identity permutation ($a_i(r) = i$), regardless of $r$;
\item generalized coil curves: $a(r)$ is the reversal permutation ($a_i(r) = d-1-i$), regardless of $r$;
\item generalized half-coil curves: $a(r)$ is the identity permutation if $r$ is odd, and the reversal permutation if $r$ is even;
\item generalized Meurthe curves:
  when $r_i \in \{0,1\}$, then $a_i(r)$ is the number of zeros and ones among $r_{i+1},...,r_{d-1}$;
  when $r_i = 2$, then $a_i(r)$ is $d-1$ minus the number of twos in $r_{i+1},...,r_{d-1}$.
  Alternatively, we can define $a(r)$ through its inverse $\inv{a}(r)$: the inverse is constructed from the identity permutation by moving all indices $i$ such that $r_i \in \{0,1\}$ to the front and reversing their order.
\end{itemize}
Table~\ref{tab:3reg3Dsets} shows the definitions of the three-dimensional versions of each of these curves.
The inverse permutations are shown explicitly only for the Meurthe curve; for the other curves, each permutation is its own inverse.
For illustration, Figure~\ref{fig:3DMeurthe} shows the graphical definition of the three-dimensional Meurthe curve.
Of the curves described above, 
in three and more dimensions only the Meurthe curves have neutral orientation---we prove this in Section~\ref{sec:proofternaryneutral}. The main results of Section~\ref{sec:ternary} can therefore be summarized as follows:

\begin{theorem}
The Peano, coil, and half-coil curves constitute interdimensionally consistent sets of order-preserving 
3-regular vertex-continuous mono-curves.
The Meurthe curves constitute an interdimensionally consistent set of order-preserving 3-regular vertex-continuous mono-curves with neutral orientation.
\end{theorem}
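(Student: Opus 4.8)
The plan is to treat the four families uniformly wherever possible, since they share the same location map $c(r) = \TRGC^d(r)$ and the same reflections $m_i(r) = (\mathrm{digitsum}(r) - r_i) \bmod 2$, and differ only in the permutations $a(r)$. Three of the required properties are cheap: each family is a mono-curve given by a single rule, the rule uses only rotation and reflection (no reversal), so the curves are order-preserving; each rule subdivides into $3^d$ equal subcubes, so the curves are $3$-regular; and the $d$-dimensional member is unique by construction. Vertex-continuity depends only on $c(r)$ and $m(r)$ and not on the choice of $a(r)$, so it is disposed of once and for all by Lemma~\ref{lem:ternarycontinuous}. Hence the real work is (a) interdimensional consistency of all four families, and (b) neutral orientation of the Meurthe family.

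For (a), by Lemma~\ref{lem:goal} it suffices to show that $f_d$ shows $f_{d-1}$ on every front face. I would set this up as an induction on recursion depth, proving the slightly stronger statement that, for every axis $\ell$, the curve shows $f_{d-1}$ on \emph{both} the front face $F^0_\ell$ and the back face $F^1_\ell$. Both are needed because, when one recurses into a face-touching subcube $S(r)$, the global face $q_\ell = 0$ corresponds inside the template to the face orthogonal to axis $a_\ell(r)$, which is its front or back face according to whether $m_\ell(r) = 0$ or $1$. The top-level subcubes meeting such a face are exactly those with $\TRGC_\ell(r) = k$ for $k \in \{0,2\}$; since $k$ is even, Lemma~\ref{lem:ternarydigitremoval}(ii) gives $\reduction{\ell}{k}\TRGC^d = \TRGC^{d-1}$, so after deleting coordinate $\ell$ these subcubes are visited in precisely the order of the $(d-1)$-dimensional location map. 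It then remains to match the transformations: writing $r' = \takeout{\ell}r$ for the rank of the corresponding $(d-1)$-dimensional subcube (consistent because $r_\ell$ is even and Lemma~\ref{lem:ternarydigitremoval}(i) gives $\takeout{\ell}\TRGC^d(r) = \TRGC^{d-1}(\takeout{\ell}r)$), I would prove the two key identities
\[\takeout{\ell}m(r) = m(\takeout{\ell}r) \quad\text{and}\quad \takeout{\ell}a(r) = a(\takeout{\ell}r).\]
The reflection identity is a one-line parity computation: since $m_i(r)$ is the parity of $\mathrm{digitsum}(r) - r_i$ and $r_\ell$ is even, deleting coordinate $\ell$ changes neither $\mathrm{digitsum}(r) \bmod 2$ nor the individual $r_i \bmod 2$ for $i \neq \ell$. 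The permutation identity is immediate for Peano (take-out of the identity is the identity) and for coil (take-out of the reversal is the reversal), and follows for half-coil because an even $r_\ell$ leaves the parity of $r$ unchanged under take-out, so the identity/reversal alternation is preserved.

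The main obstacle is the permutation identity $\takeout{\ell}a(r) = a(\takeout{\ell}r)$ for the Meurthe family, where $a(r)$ is defined through counts of zeros-and-ones and of twos in the suffix $r_{i+1}\ldots r_{d-1}$. Here I would argue directly from the counting definition, splitting on whether the deleted even digit $r_\ell$ equals $0$ (a ``zero-or-one'' index) or $2$ (a ``two'' index): deleting such a digit shifts the suffix counts of exactly the indices to its left in a way that is absorbed by the renumbering built into the take-out operator, while indices to its right are unaffected. Equivalently, I would verify it through the inverse description---$\inv{a}(r)$ lists the indices $i$ with $r_i \in \{0,1\}$ in reverse order, followed by the indices with $r_i = 2$ in forward order---and check that removing index $\ell$ together with its value from this list yields exactly $\inv{a}(\takeout{\ell}r)$. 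This is a finite but fiddly bookkeeping argument, and getting the interaction between take-out renumbering and the suffix counts exactly right is where the proof must be careful. Once all face orders and transformations match, the strengthened induction closes and Lemma~\ref{lem:goal} yields interdimensional consistency for all four families.

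For (b), I would invoke Lemma~\ref{lem:neutral}: neutral orientation holds once the permutations $\{a(0),\ldots,a(\reps 2 d)\}$ generate the full symmetric group $S_d$ and contain the identity (the latter guaranteeing that a single depth $\depth$ works, and indeed $a(\reps 2 d)$ is the identity). To exhibit generators I would use ranks $r$ all of whose digits equal $2$ except one, so that the ``zero-or-one'' index set is a singleton $\{\ell\}$; by the inverse description $\inv{a}(r)$ is then the cycle that moves index $\ell$ to the front, which for $\ell = 1$ is the transposition $(0\;1)$ and for $\ell = d-1$ is a full $d$-cycle in which $0$ and $1$ are adjacent. A standard fact then shows that such a transposition together with that full cycle generate $S_d$, so (passing to inverses, which generate the same group) the forward permutations $a(r)$ generate $S_d$ as well, and Lemma~\ref{lem:neutral} gives neutral orientation for the Meurthe curves.
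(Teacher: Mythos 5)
Your proposal is correct and follows essentially the same route as the paper: vertex-continuity via Lemma~\ref{lem:ternarycontinuous}, consistency via Lemma~\ref{lem:goal} with an induction strengthened to cover both front and back faces, the base case from Lemma~\ref{lem:ternarydigitremoval}(ii), the same two matching identities $\takeout{\ell}m(r)=m(\takeout{\ell}r)$ and $\takeout{\ell}a(r)=a(\takeout{\ell}r)$ (with the Meurthe case handled through the inverse-permutation description exactly as in the paper), and neutral orientation via Lemma~\ref{lem:neutral} using the identity at $a(\reps{2}{d})$ to pad to a fixed depth. The only cosmetic difference is that for neutral orientation you cite the standard fact that an adjacent transposition and a full $d$-cycle generate $S_d$, whereas the paper explicitly assembles all adjacent transpositions from the same two generators and gives the bound $(d+1)d(d-1)/2$; both arguments are sound.
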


\begin{figure}
\centering
\hbox to \hsize{\hfill
\includegraphics[scale=0.85]{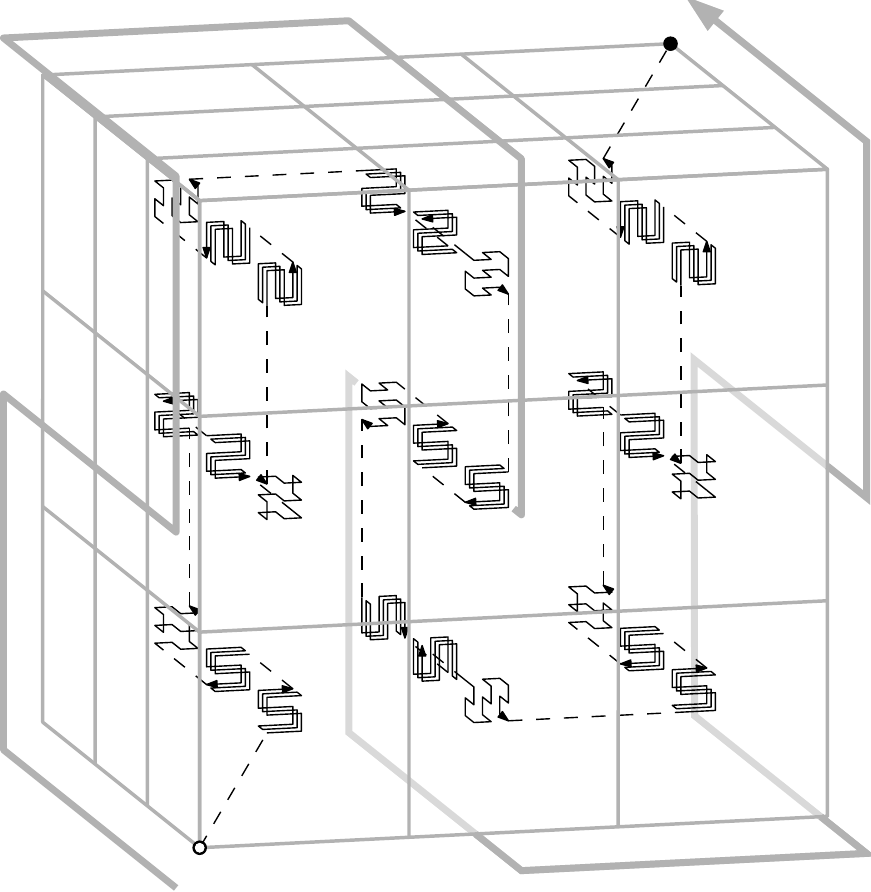}\hfill
}
\caption{The three-dimensional Meurthe curve.}
\label{fig:3DMeurthe}
\end{figure}

\begin{table}
\label{tab:3reg3Dsets}
\caption{Three-dimensional mono-Wunderlich curves from interdimensionally consistent sets}
\addvspace{.5\baselineskip}\centering\footnotesize
\begin{tabular}{@{}c@{\,}|@{\,}c@{ }c@{/}c@{/}c@{/}c@{ }c@{ }c@{ }c||c@{\,}|@{\,}c@{ }c@{/}c@{/}c@{/}c@{ }c@{ }c@{ }c@{}}
rank      & loc. & \multicolumn{4}{c}{permutation} & inv. & refl. & exit &
rank      & loc. & \multicolumn{4}{c}{permutation} & inv. & refl. & exit \\
          &      & Pea & coil & $\frac12$cl & Mt & Mt & & &
          &      & Pea & coil & $\frac12$cl & Mt & Mt & & \\
\hline
000 & 000 & 012&210&210&210 & 210 & 000 & $\frac13(1,1,1)$ & 112 & 112 & 012&210&210&102 & 102 & 110 & $\frac13(1,1,3)$ \\
001 & 001 & 012&210&012&210 & 210 & 110 & $\frac13(0,0,2)$ & 120 & 102 & 012&210&012&120 & 201 & 011 & $\frac13(2,0,2)$ \\
002 & 002 & 012&210&210&102 & 102 & 000 & $\frac13(1,1,3)$ & 121 & 101 & 012&210&210&120 & 201 & 101 & $\frac13(1,1,1)$ \\
010 & 012 & 012&210&012&210 & 210 & 101 & $\frac13(0,2,2)$ & 122 & 100 & 012&210&012&012 & 012 & 011 & $\frac13(2,0,0)$ \\
011 & 011 & 012&210&210&210 & 210 & 011 & $\frac13(1,1,1)$ & 200 & 200 & 012&210&210&210 & 210 & 000 & $\frac13(3,1,1)$ \\
012 & 010 & 012&210&012&102 & 102 & 101 & $\frac13(0,2,0)$ & 201 & 201 & 012&210&012&210 & 210 & 110 & $\frac13(2,0,2)$ \\
020 & 020 & 012&210&210&120 & 201 & 000 & $\frac13(1,3,1)$ & 202 & 202 & 012&210&210&102 & 102 & 000 & $\frac13(3,1,3)$ \\
021 & 021 & 012&210&012&120 & 201 & 110 & $\frac13(0,2,2)$ & 210 & 212 & 012&210&012&210 & 210 & 101 & $\frac13(2,2,2)$ \\
022 & 022 & 012&210&210&012 & 012 & 000 & $\frac13(1,3,3)$ & 211 & 211 & 012&210&210&210 & 210 & 011 & $\frac13(3,1,1)$ \\
100 & 122 & 012&210&012&210 & 210 & 011 & $\frac13(2,2,2)$ & 212 & 210 & 012&210&012&102 & 102 & 101 & $\frac13(2,2,0)$ \\
101 & 121 & 012&210&210&210 & 210 & 101 & $\frac13(1,3,1)$ & 220 & 220 & 012&210&210&120 & 201 & 000 & $\frac13(3,3,1)$ \\
102 & 120 & 012&210&012&102 & 102 & 011 & $\frac13(2,2,0)$ & 221 & 221 & 012&210&012&120 & 201 & 110 & $\frac13(2,2,2)$ \\
110 & 110 & 012&210&210&210 & 210 & 110 & $\frac13(1,1,1)$ & 222 & 222 & 012&210&210&012 & 012 & 000 & $\frac13(3,3,3)$ \\
111 & 111 & 012&210&012&210 & 210 & 000 & $\frac13(2,2,2)$ & \\
\end{tabular}
\end{table}

\begin{inset}
\caption{How the interdimensionally consistent mono-Wunderlich curves were found.}
\footnotesize
\label{box:howfoundternary}
It was not straightforward to guess the permutations for the interdimensionally consistent generalizations of the mono-Wunderlich curves from the two-dimensional curves, not even for the coil and half-coil curves. In two dimensions there are only two numbers to permute and one cannot tell the difference between, for example, a reversal, a left rotation, and a right rotation, or the difference between the identity permutation and the permutation that reverses the whole sequence except the first element. Even in three dimensions, one cannot tell the difference between a complete reversal and a simple swap of the first and the last element.

To get started, I worked from the assumption that I could require more than what is strictly necessary: I tried to ensure that the three-dimensional curves show the two-dimensional curves on \emph{all} faces instead of only on the front faces. Thus it was possible to deduce what the permutations in the eight corner regions had to be, and to some extent, also what the permutations in other regions had to be. Having a hypothesis for the necessary permutations in three dimensions, and with the experience gained with the harmonious Hilbert curves that are explained in Section~\ref{sec:hilbert}, it became possible to see the patterns emerge.

Those who are familiar with our previous work in two dimensions may miss two generalizations, namely interdimensionally consistent generalizations of Wunderlich's Serpentine~011\,010\,110 and R-order (Wunderlich's Meander)~\cite{Haverkort2D}. These curves seem to be considerably more difficult to generalize in an interdimensionally consistent way than the Peano, coil, half-coil and Meurthe curves. The above approach, unfortunately, results in inconsistent requirements on the permutations for the three-dimensional generalizations of Serpentine~011\,010\,110 and R-order. One would therefore have to aim for a solution in which some of the back faces do \emph{not} show the two-dimensional curve, or, for example, a mirrored version of it. \end{inset}

\subsection{Implementation of comparison operators}
\label{sec:implementternary}

In this section we discuss how to implement comparison operators for each of the four sets of curves presented above. We will first explain how to do this for points in the unit hypercube, then explain how to extend the operators to points with arbitrary non-negative coordinates, and finally how to extend the operators to points with arbitrary real coordinates. The main purpose of this section is to clarify how such comparison operators can be implemented, and to enable us to discover more properties of the curves by analysing the algorithms. For real implementations, one would have to implement the \Extract function, one would have to implement an efficient check to determine if no more digits remain to be processed, and one may consider further optimizations, such as look-up tables for permutations.

\LinesNotNumbered
\RestyleAlgo{plain}

\paragraph{Modifications for all curves}
For points in the unit hypercube we can simply fill in the details of Algorithm~\ref{alg:RGCcurvesoperator}.

To start with, we substitute ``2'' for ``$b - 1$''.

Second, we implement the reflections. The condition $m_j(\Rank) = 1$ translates to: $\Rank[j]$ is odd and $\Rank$ has even digit sum, or $\Rank[j]$ is even and $\Rank$ has odd digit sum. Therefore we implement the reflections correctly if we replace the condition on Line~\ref{alg:rgccurves:reflect} by ``\Rank has odd digit sum'', and replace Line~\ref{alg:rgccurves:odddigit} by:

\medskip
\begin{algorithm}[H]
\If{$\pFirstDigit$ is odd}{$\Forward \leftarrow \Not \Forward$; $\Reflected[\Permutation[i]] \leftarrow \Not \Reflected[\Permutation[i]]$}
\end{algorithm}

\medskip\noindent
Even better, we move Line~\ref{alg:rgccurves:initforward} to just before Line~\ref{alg:rgccurves:digitloop}, and remove Line~\ref{alg:rgccurves:reflect} altogether. The only effect of this is that in some iterations of the main loop the values of \Forward and all entries of \Reflected are `wrong', but since all of them are wrong, the `error' made on Line~\ref{alg:rgccurves:reflectdigits} will be undone on Line \ref{alg:rgccurves:return2} or~\ref{alg:rgccurves:calculaterank} (whichever is applicable).

A last simplification we make, is that we will work with reflected versions of \pFirstDigit and \qFirstDigit whenever \Forward is \False. This is implemented by changing the condition in Line~\ref{alg:rgccurves:test} to $(\Reflected[\Permutation[i]] = \Forward)$; changing the return values on Line~\ref{alg:rgccurves:return2} to \True and \False, respectively; and changing Line~\ref{alg:rgccurves:calculaterank} to simply $\Rank[i] \leftarrow \pFirstDigit$. This may change the value of $\pFirstDigit$ as used on Line~\ref{alg:rgccurves:odddigit}, but it does not change the parity of $\pFirstDigit$, which is all that matters.

Thus we get Algorithm~\ref{alg:ternarycurvesoperator}.

\LinesNumbered
\RestyleAlgo{ruled}
\begin{algorithm}
\lFor{$i \leftarrow 0$ \KwTo $d-1$}{$\Reflected[i] \leftarrow \False$}\;
\lFor{$i \leftarrow 0$ \KwTo $d-1$}{$\Permutation[i] \leftarrow i$\tcp*[r]{maintains which entry in each row of $T$ is non-zero}}\label{alg:ternarycurves:identity}
initialize an array $\AltPermutation[0..d-1]$\tcp*[r]{for a new value of \Permutation under construction}\label{alg:ternarycurves:altpermutation}
$\Forward \leftarrow \True$\;
\Repeat{all remaining digit strings $p[0],...,p[d-1]$ and $q[0],...,q[d-1]$ are empty}{
  \For{$i \leftarrow 0$ \KwTo $d-1$}{\label{alg:ternarycurves:dimloop}
    $\pFirstDigit \leftarrow \Extract(p[\Permutation[i]])$; $\qFirstDigit \leftarrow \Extract(q[\Permutation[i]])$\;
    \If{$\Reflected[\Permutation[i]] = \Forward$}{
      $\pFirstDigit \leftarrow (2 - \pFirstDigit)$; $\qFirstDigit \leftarrow (2 - \qFirstDigit)$\label{alg:ternarycurves:reflect}
    }
    \lIf{$\pFirstDigit < \qFirstDigit$}{\Return \True}\ \lElseIf{$\pFirstDigit > \qFirstDigit$}{\Return \False}\;
    $\Rank[i] \leftarrow \pFirstDigit$\;\label{alg:ternarycurves:updaterank}
    \If{$\pFirstDigit$ is odd}{
      $\Forward \leftarrow \Not \Forward$;
      $\Reflected[\Permutation[i]] \leftarrow \Not \Reflected[\Permutation[i]]$\label{alg:ternarycurves:odddigit}
    }\label{alg:ternarycurves:dimloopend}
  }
  \lFor{$i \leftarrow 0$ \KwTo $d-1$}{$\AltPermutation[i] \leftarrow \Permutation[\inv{a}_i(\Rank)]$}\;\label{alg:ternarycurves:rotate}
  swap $\Permutation$ and $\AltPermutation$\;\label{alg:ternarycurves:update}
}\label{alg:general:endloop}
\Return\False\tcp*[r]{$p$ and $q$ are equal}
\caption{Framework for implementation of Peano, coil, half-coil and Meurthe curves\label{alg:ternarycurvesoperator}}
\end{algorithm}

\LinesNotNumbered
\RestyleAlgo{plain}

\paragraph{Peano curves}
Since $\inv{a}_i(\Rank) = i$, regardless of the value of $\Rank$, we can simply omit Lines \ref{alg:ternarycurves:identity}, \ref{alg:ternarycurves:altpermutation}, \ref{alg:ternarycurves:updaterank}, \ref{alg:ternarycurves:rotate} and~\ref{alg:ternarycurves:update} from Algorithm~\ref{alg:ternarycurvesoperator}, and replace each occurrence of ``$\Permutation[i]$'' by ``$i$''.

\paragraph{Coil curves}
Since $\inv{a}_i(\Rank) = d-1-i$, regardless of the value of $\Rank$, we can implement the updating of $\Permutation$ as follows. Omit Lines \ref{alg:ternarycurves:updaterank} and~\ref{alg:ternarycurves:rotate}, and replace Line~\ref{alg:ternarycurves:altpermutation} by:

\medskip
\begin{algorithm}[H]
\lFor{$i \leftarrow 0$ \KwTo $d-1$}{$\AltPermutation[i] \leftarrow (d-1-i)$}
\end{algorithm}

\paragraph{Half-coil curves}
In half-coil curves, the permutation is reversed if and only if \Rank is even, that is, if and only if \Forward remains unchanged. We implement this as follows. Omit Lines \ref{alg:ternarycurves:updaterank} and~\ref{alg:ternarycurves:rotate}. Before Line~\ref{alg:ternarycurves:dimloop}, add:

\medskip
\begin{algorithm}[H]
$\Direction \leftarrow \Forward$\;
\end{algorithm}

\medskip\noindent Replace Line~\ref{alg:ternarycurves:update} by:

\medskip
\begin{algorithm}[H]
\lIf{\Forward = \Direction}{swap \Permutation and \AltPermutation}
\end{algorithm}

\paragraph{Meurthe curves}
Replace Line~\ref{alg:ternarycurves:rotate} by:

\medskip
\begin{algorithm}[H]
\SetKwData{NextTwo}{indexForNextTwo}
\SetKwData{NextNonTwo}{indexForNextZeroOrOne}
$\NextTwo \leftarrow d$; \lFor{$i \leftarrow 0$ \KwTo $d-1$}{\lIf{$\Rank[i] = 2$}{decrement \NextTwo}}\;
$\NextNonTwo \leftarrow \NextTwo - 1$\;
\For{$i \leftarrow 0$ \KwTo $d-1$}{
  \If{$\Rank[i] = 2$}{
    $\AltPermutation[\NextTwo] \leftarrow \Permutation[i]$; increment \NextTwo
  }
  \Else($\Rank[i]$ is 0 or 1){
    $\AltPermutation[\NextNonTwo] \leftarrow \Permutation[i]$; decrement \NextNonTwo
  }
}
\end{algorithm}

\medskip\noindent Of course the first \textbf{for} loop can be avoided by integrating it into the loop on Line~\ref{alg:ternarycurves:dimloop}--\ref{alg:ternarycurves:dimloopend} of Algorithm~\ref{alg:ternarycurvesoperator}.
When implementing this curve, make sure that the algorithm applies $\inv{a}$ (like the above piece of code does), not $a$. A mistake is easily made and can be hard to detect: as can be seen in Table~\ref{tab:3reg3Dsets}, in the three-dimensional Meurthe curve $a(r)$ equals $\inv{a}(r)$ for most, but not all, values of $r$.

\paragraph{Points with non-negative coordinates}
We can extend our implementation so that it can also compare points outside the unit hypercube. To this end, we should extend the space-filling curve outside the unit hypercube. This can be done by considering the unit hypercube as the first subregion of a larger cube. The easiest is to consider it as the first subregion of the first subregion, because for all four sets of curves, the first subregion of the first subregion is neither reflected nor rotated. Thus, the outcome of a comparison between two points that lie in the unit hypercube already, remains the same. Applying this idea recursively, we get the following algorithm:

\medskip
\begin{algorithm}[H]
\lWhile{any coordinate of $p$ or $q$ is at least 1}{divide all coordinates of $p$ and $q$ by 9}\;
run Algorithm~\ref{alg:ternarycurvesoperator} on $p$ and $q$
\end{algorithm}

\paragraph{Points with negative coordinates}
The above implementation does not work for points with negative coordinates. We could try to solve this by considering the unit hypercube to be the subregion in the centre of a large cube. However, in that case the entrance point of the larger cube does not lie in the origin anymore, and we loose the property that the curve traverses points in an axis-parallel hyperplane through the origin in the same order as the lower-dimensional curve. We can solve this as follows.

Note that, for the coordinates common to all points of an axis-parallel hyperplane through the origin, the implementations of the Peano, coil and half-coil curves effectively do not distinguish between the digits zero and two: in Line~\ref{alg:ternarycurves:reflect}, zeros may become twos and vice versa, but as these digits remain even and remain the same between $p$ and $q$, this does not affect the decisions taken by the algorithm, and they do not affect \Forward and \Reflected. This is also the case for the implementation of the Meurthe curves: a digit zero or two for coordinate $i$ only affects where this coordinate ends up between the other coordinates in the permutation on Line~\ref{alg:ternarycurves:rotate}, but it does not affect the permutation of the other coordinates relative to each other. This is all that matters, since coordinate $i$, being shared by $p$ and $q$, is not going to decide which of $p$ or $q$ comes first in the order in any case. Therefore, all four sets of mono-Wunderlich curves presented in this section remain consistent even if the origin of the coordinate system is translated into the hypercube by any vector whose coordinates, in ternary representation, only contain zeros and twos.

Therefore, to allow for negative coordinates, we could consider, for $k = 0,1,2,...$, each hypercube $U_k = [\frac{1}{4}-\frac{1}{4}81^k,\frac{1}{4}+\frac{3}{4}81^k)^d$ to be the last subregion of the first subregion of the last subregion of the first subregion of the hypercube $U_{k+1} = [\frac{1}{4}-\frac{1}{4}81^{k+1},\frac{1}{4}+\frac{3}{4}81^{k+1})^d$ (note that the last of the first of the last of the first subregion is never reflected nor rotated, in all four sets of curves). Using ternary numbers, we write the coordinates of $U_k$ as $[\frac{1}{11}-\frac{1}{11}10\,000^k,\frac{1}{11}+\frac{10}{11}10\,000^k)^d$. We will continue to use ternary numbers in the rest of this paragraph until further notice. Define $s_0(x) = x$ and let $s_1(x) = (x+202)/10\,000$ be the function that maps the coordinate range $[\frac1{11}-\frac1{11} 10\,000^k, \frac1{11} + \frac{10}{11} 10\,000^k]$ of $U_k$ to the coordinate range $[\frac1{11}-\frac1{11} 10\,000^{k-1}, \frac1{11} + \frac{10}{11} 10\,000^{k-1}]$ of $U_{k-1}$. Let $s_k(x)$, for $k > 1$, be given by $s_k(x) = s_1(s_{k-1}(x))$. We can now define a local coordinate system for each hypercube $U_k$, were $s_k(x)$ gives the value in the local coordinate system that corresponds to $x$ in the global coordinate system.
The origin of the unit hypercube $U_0$ now lies at position $(s_k(0), s_k(0), ...)$ in the local coordinate system of hypercube $U_k$.
Note that $s_k(0)$ is the number $0.\reps{0202}{k}$. Thus, the origin lies at coordinates that only contain zeros and twos in the local coordinate system of each hypercube $U_k$, and hence, the space-filling curve filling $U_k$ is extradimensional to the lower-dimensional space-filling curves with respect to this origin.

The following algorithm (with numbers in decimal format again) zooms out to the smallest hypercube $U_k$ that contains the points to be compared, while converting the coordinates to the local coordinate system of $U_k$:

\medskip
\begin{algorithm}[H]
\While{any coordinate of $p$ or $q$ is less than zero or more than one}{
  add 20 to all coordinates of $p$ and $q$\tcp*[r]{adding the ternary number 202}
  divide all coordinates of $p$ and $q$ by 81\tcp*[r]{dividing by the ternary number 10\,000}
}
run Algorithm~\ref{alg:ternarycurvesoperator} on $p$ and $q$
\end{algorithm}

\medskip\noindent
Note that this solution produces different results than the solution for points with non-negative coordinates that was presented before.

\subsection{Proof of vertex-continuity}
\label{sec:proofternarycontinuous}

\begin{lemma}\label{lem:ternarycontinuous}
The generalizations of the Peano, coil, half-coil and Meurthe curves defined above, indeed constitute mono-Wunderlich curves.
\end{lemma}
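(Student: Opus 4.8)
The plan is to reduce vertex-continuity to a statement about the entrance and exit corners of consecutive subcubes, and then to verify that statement by a direct digit computation driven by the reflection rule $m_i(r) = (\mathrm{digitsum}(r) - r_i) \bmod 2$. For a subcube $S(r)$ write $\mathrm{en}(r) = \tau(r)(f(0))$ for its entrance corner and $\mathrm{ex}(r) = \tau(r)(f(1))$ for its exit corner. My first step is to pin down the gates of the whole cube: since $c(0) = \TRGC^d(0) = 0$ and $m(0) = 0$, the map $\tau(0)$ is a permutation composed with scaling by $1/\base$ that fixes the origin, so the fixed point $f(0)$ is the corner $\mathbf 0$; symmetrically, using $\TRGC^d(\reps 2d) = \reps 2d$ and $m(\reps 2d) = 0$, I get $f(1) = \mathbf 1$, so the unit cube is traversed between two diagonally opposite corners.

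Second, I would read off $\mathrm{en}(r)$ and $\mathrm{ex}(r)$ from $\tau(r)(p) = \frac1\base M(r)p + o(r)$. Because $o(r)$ is fixed by the requirement that $\tau(r)$ maps $[0,1]^d$ onto $S(r)$, a one-line calculation shows that in coordinate $i$ the entrance sits at the low face $\frac1\base c_i(r)$ when $m_i(r)=0$ and at the high face $\frac1\base(c_i(r)+1)$ when $m_i(r)=1$, while the exit sits at the opposite face. Two consequences follow at once: $\mathrm{en}(r)$ and $\mathrm{ex}(r)$ are diagonally opposite corners of $S(r)$ (so every subcube is traversed corner-to-opposite-corner), and these corners depend only on $c(r)$ and $m(r)$, not on the permutation $a(r)$ --- which is exactly why the conclusion will hold for all four permutation choices simultaneously.

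The heart of the proof is to show $\mathrm{ex}(r') = \mathrm{en}(r)$ for every $r \geq 1$, where $r' = \pred{r}$. Let $k$ be the position where the Gray codes $c(r')$ and $c(r)$ differ; by Lemma~\ref{lem:changingdigit} this is the most significant position in which the ranks $r'$ and $r$ differ, i.e.\ where the ternary carry from incrementing $r'$ stops, so $r_j = r'_j$ for $j<k$, $r_k = r'_k + 1$, and $r_j = 0$, $r'_j = 2$ for $j>k$. For the coordinates $i \neq k$ the two subcubes share the same location digit, so $\mathrm{ex}_i(r')$ and $\mathrm{en}_i(r)$ agree iff $m_i(r') \neq m_i(r)$; substituting the carry pattern gives $\mathrm{digitsum}(r) - \mathrm{digitsum}(r') \equiv 1 \pmod 2$ and $r_i \equiv r'_i \pmod 2$ (trivially for $i<k$, and because $0 \equiv 2$ for $i>k$), whence $m_i$ flips as required. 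For the single coordinate $i=k$ the subcubes are stacked, and I need $m_k(r') = m_k(r)$ with the common value matching the sign of the Gray-code change; the same substitution yields $m_k(r) = m_k(r') = (\sum_{j<k} r_j) \bmod 2$, and comparing the ternary-Gray reflection state of the shared prefix shows the Gray-code digit increases exactly when this parity is $0$ and decreases exactly when it is $1$ --- precisely the matching that makes the adjacent faces of $S(r')$ and $S(r)$ meet at the shared corner.

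Finally I would assemble vertex-continuity. Since each curve is defined by a single self-similar rule, the entrance/exit-corner formula and the connection identity transfer verbatim to every subcube at every depth. Refining the scanning order inside each visited region $C_i$ to depth $\depth$, the diagonal-traversal property applied recursively guarantees that the last cell of $C_{i-1}$ carries $\mathrm{ex}(C_{i-1})$ as one of its corners and the first cell of $C_i$ carries $\mathrm{en}(C_i)$; since consecutive regions connect, $\mathrm{ex}(C_{i-1}) = \mathrm{en}(C_i)$ and the two cells share that point. I expect the digit bookkeeping in the $i=k$ case --- correctly relating the value of $m_k$ to the direction of the ternary-Gray-code change --- to be the only genuinely delicate step; everything else is either a direct matrix computation or an appeal to self-similarity.
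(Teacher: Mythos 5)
Your proposal is correct and follows essentially the same route as the paper's proof: establish that the gates of the whole cube are at $\mathbf{0}$ and $\mathbf{1}$, reduce vertex-continuity to matching the reflection bits $m_i(\pred{r})$ and $m_i(r)$ according to whether the location digits agree or differ (your split into $i=k$ versus $i\neq k$ is the paper's case split (i)/(ii)/(iii)), and verify the match by the same digit-sum parity computation, including the observation that the argument never uses $a(r)$. The "delicate step" you flag — relating the parity of $\sum_{j<k} r_j$ to the direction of the Gray-code change at position $k$ — is exactly how the paper handles its cases (i) and (iii), and your treatment of it is sound.
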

\begin{proof}
All we have to prove is that the curves are vertex-continuous.

Recall that the curves start in the region with location $\reps 0d$, and end in the region with location $\reps 2d$. Thus, for the first and the last subregion $r$, we have $m_i(r) = 0$ for all $0 \leq i < d$. As a result, in the recursion within $S(0)$, the curve will always start in the region closest to the origin, and in the recursion within $S(\reps 2d)$, the curve will always end in the region farthest from the origin. Thus the entrance gate of the curve is at the origin, and the exit gate is at the corner of the $d$-dimensional hypercube that is farthest from the origin.

Thus, for the entrance gate of a subregion $S(r)$ to be connected to the exit gate of the previous subregion $S(\pred{r})$, region $S(r)$ should be mirrored in all dimensions with respect to $S(\pred{r})$, except in the one dimension in which the locations of $S(r)$ and $S(\pred{r})$ differ. More precisely, (i) if $c_i(\pred{r}) < c_i(r)$, we should have $m_i(\pred{r}) = m_i(r) = 0$; (ii) if $c_i(\pred{r}) = c_i(r)$, we should have $m_i(\pred{r}) + m_i(r) = 1$, and (iii) if $c_i(\pred{r}) > c_i(r)$, we should have $m_i(\pred{r}) = m_i(r) = 1$.

First consider case (i): $c_i(\pred{r}) < c_i(r)$. Since $c_i(\pred{r}) \neq c_i(r)$, we have that $i$ has the smallest value such that $\pred{r}_i \neq r_i$, and therefore, $r_i = \pred{r}_i + 1$. We have $\pred{r}_0...\pred{r}_{i-1} = r_0...r_{i-1}$. Moreover, since $c_i(\pred{r}) < c_i(r)$, we have, by definition, $\TRGC_i(\pred{r}) < \TRGC_i(r)$, and the number of reflections caused by the digits $r_0,...,r_{i-1}$, that is, the number of odd digits among $\TRGC_0(r),...,\TRGC_{i-1}(r)$, must be even. Since $\TRGC_j(r)$ is odd if and only if $r_j$ is odd, for any $j \in \{0,...,d-1\}$, this implies that the number of odd digits among $r_0,...,r_{i-1}$ must be even. Therefore, $\sum_{j=0}^{i-1}\pred{r}_j = \sum_{j=0}^{i-1}r_j \equiv 0 \pmod 2$. By definition, $r = \pred{r} + 1$, and therefore, since $r$ and $\pred{r}$ are ternary numbers, we have $\pred{r}_{i+1}...\pred{r}_{d-1} = \reps2{d-1-i}$ and $r_{i+1}...r_{d-1} = \reps0{d-1-i}$, so $\sum_{j=i+1}^{d-1}\pred{r}_j \equiv \sum_{j=i+1}^{d-1}r_j \equiv 0 \pmod 2$. Recall that $m_i(r)$ is defined as $(\mathrm{digitsum}(r) - r_i) \bmod 2$, so we have $m_i(\pred{r}) = (\sum_{j=0}^{i-1}\pred{r}_j + \sum_{j=i+1}^{d-1}\pred{r}_j) \bmod 2 = 0 = (\sum_{j=0}^{i-1}r_j + \sum_{j=i+1}^{d-1}r_j) \bmod 2 = m_i(r)$, QED.

In case (ii), $c_i(\pred{r}) = c_i(r)$, we know that $i$ is \emph{not} the smallest value $j$ such that $r_j \neq \pred{r}_j$. We have either $i < j$ and $r_i = \pred{r}_i$, or $i > j$ and $\pred{r}_i = 2$ and $r_i = 0$; in both cases $r_i \equiv \pred{r}_i \pmod 2$. Note that for ternary numbers, $r = \pred{r} + 1$ implies $\mathrm{digitsum}(r) \equiv \mathrm{digitsum}(\pred{r}) + 1 \pmod 2$. Thus we have $m_i(r) \equiv \mathrm{digitsum}(r) - r_i \equiv \mathrm{digitsum}(\pred{r}) + 1 - r_i \equiv \mathrm{digitsum}(\pred{r}) + 1 - \pred{r}_i \equiv m_i(\pred{r}) + 1 \pmod 2$, which implies $m_i(r) = (m_i(\pred{r}) + 1) \bmod 2$, QED.

Third, consider case (iii): $c_i(\pred{r}) > c_i(r)$. The proof is analogous to case (i), except that now, since $c_i(\pred{r}) > c_i(r)$, the number of odd digits among $r_0...r_{i-1}$ must be odd, and thus we get $m_i(\pred{r}) = (\sum_{j=0}^{i-1}\pred{r}_j + \sum_{j=i+1}^{d-1}\pred{r}_j) \bmod 2 = 1 = (\sum_{j=0}^{i-1}r_j + \sum_{j=i+1}^{d-1}r_j) \bmod 2 = m_i(r)$, QED.
\end{proof}

\subsection{Proof of interdimensional consistency}
\label{sec:proofternary}
In this section we will prove that the above mentioned generalizations of the Peano, coil, half-coil and Meurthe curves are interdimensionally consistent. Arguments to this effect were also given in the paragraph of negative coordinates in Section~\ref{sec:implementternary}. However, those arguments depend on the specific tie-breaking rule for points that lie on the boundary between regions, that is, the choice of $\inv{f}$, that is implemented in the pseudocode in this paper. Moreover, we need to develop a different proof technique, because in Section~\ref{sec:hilbert}, a look at the pseudocode will not reveal the secrets of the curves as easily. Therefore, in the present section, we will prove interdimensional consistency directly from the definitions of the curves, rather than from the implementation in pseudocode. I will first describe the part of the proof that is common to all four generalizations, and after that we will discuss the details of each separate generalization.

Consider any of the four sets of space-filling curves defined above, containing a $d$-dimensional space-filling curve $f_d$ for any $d \geq 1$. As observed in Lemma~\ref{lem:goal}, to prove that the set is interdimensionally consistent, it is sufficient to show that each curve $f_d$ (for $d > 1$) shows $f_{d-1}$ on each front face. In fact, we will prove something stronger, namely that each curve $f_d$ shows $f_{d-1}$ on \emph{each} face.

Below, we write $f$ for $f_d$, and by $c$, $a$, $m$, $o$, $M$, and $\tau$ we denote the location, permutation, reflection, translation, transformation matrix and transformation functions that define $f$.
We write $f'$ for $f_{d-1}$, and by $c'$, $a'$, $m'$, $o'$, $M'$ and $\tau'$ we denote the location, permutation, reflection, translation, transformation matrix and transformation functions that define $f'$.
Our proof that each curve $f$ shows $f'$ on each face goes by induction on increasing level of refinement. More precisely, for a level of refinement $\ell$, we consider the recursive balanced subdivision of the $(d-1)$-dimensional unit hypercube $U'$ into $3^{\ell(d-1)}$ regions, and prove the following induction hypothesis: the visible order of $f$ on any face $F' = F^k_i$ visits these regions in the same order as $f'$.

\paragraph{Base case $\ell=1$}
As a base case, consider refinement level $\ell = 1$, and a face $F' = F^k_i$. Recall that $f$ is defined based on subdivision of a $d$-dimensional unit hypercube---which we will denote $U$---into $3^d$ regions $S(0),...,S(\reps 2d)$. This induces a subdivision of $F'$ into $3^{d-1}$ regions $F'(0),...,F'(\reps 2{d-1})$, each of which is a $(d-1)$-dimensional face of one of the regions $S(0),...,S(\reps 2d)$. Let $F'(0),...,F'(\reps 2{d-1})$ be indexed in the order in which the $d$-dimensional regions that contain them appear in $S(0),...,S(\reps 2d)$. Thus, there is a monotonously increasing function $\sigma: \{0,...,\reps 2{d-1}\} \rightarrow \{0,...,\reps 2d\}$ such that $F'(r')$ is a face of $S(\sigma(r'))$. Note that for $r' \in \{0,...,\reps 2{d-1}\}$, the locations of the regions $F'(r')$ within $U'$ are given by $\takeout{i}c(\sigma(r')) = \takeout{i}\TRGC^d(\sigma(r'))$. Furthermore, the co-domain of $\sigma$ consists of exactly those values $r$ such that $c_i(r) = \TRGC^d_i(r) = 2k$. Thus, the locations of the regions $F'(r')$ within $U'$, in order of increasing~$r'$, form the sequence $\reduction{i}{2k}\TRGC^d$, which, by Lemma~\ref{lem:ternarydigitremoval}, is exactly the sequence $\TRGC^{d-1} = c'$. Hence, the visible order of $f$ on any face $F' = F^k_i$ visits the regions of refinement level $\ell = 1$ in the same order as~$f'$.

\paragraph{Induction}
We will now prove that if the induction hypothesis holds for refinement level $\ell-1$, then it also holds for refinement level $\ell$. Again, consider a face $F' = F^k_i$, subdivided into $3^{d-1}$ top-level regions $F'(0),...,F'(\reps 2{d-1})$, each subdivided into $3^{(\ell-1)(d-1)}$ smaller regions. By the analysis of the base case, the $3^{d-1}$ top-level regions $F'(0),...,F'(\reps 2{d-1})$ of $F'$ are traversed in the correct order. What is left to prove, is that for each $r' \in \{0,...,\reps 2{d-1}\}$, the visible order of the region $F'(r')$ traverses its recursive subdivision into $3^{(\ell-1)(d-1)}$ smaller regions in the correct order.

The transformation $\tau(r)$ that transforms the curve $f$ through $U$ into the section of $f$ through $S(\sigma(r'))$, also transforms a face $F$ of $U$ into $F'(r')$. Denote this face $F$ by $F^*(r')$, denote $a_i(\sigma(r'))$ by $j$, and observe that we have:\[F^*(r') = F^{|k - m_i(\sigma(r'))|}_{j}.\]
Note that by the inductive assumption $F^*(r')$ shows the correct order $f'$ of its subregions of level $\ell-1$ (which map to subregions of level $\ell$ on $F'$); all that is left to prove is that the transformation $\tau(\sigma(r'))$ that operates on $S(\sigma(r'))$ transforms $f'$ on $F^*(r')$ correctly into $f'$ on $F'(r')$, that is, we need to show \[\takeout{i}\tau(\sigma(r'))(\putin{j}{2k}U') = \tau'(r')(U'),\]
where $\putin{j}{2k}U'$ is the operation that inserts a coordinate $p_j = 2k$ in the coordinate sequence of each point $p \in U'$. Recall that $\tau(r)$ is the function $p \rightarrow \frac 1k M(r) p + o(r)$, so what we have to prove is:
\[\takeout{i}(\frac 1k M(\sigma(r')) (\putin{j}{2k}p) + \takeout{i}o(\sigma(r'))) = \frac 1k M'(r') p + o'(r').\]
By the definition of $s(r')$ the translations $o(\sigma(r'))$ and $o'(r')$ are such that $F^*(r')$ is mapped to $F'(r')$, so we do not need to worry about those. Our challenge is to show that the matrices $M(\sigma(r'))$ and $M'(r')$ match, that is, $M'(r')$ must equal $M(\sigma(r'))$ with row $i$ and column $j$ removed. Given the definition of these matrices, we can separate this challenge into two subproblems. First, the reflections should match:\begin{equation}\label{eq:matchreflections}
\takeout{i}m(\sigma(r')) = m'(r'),
\end{equation}
and second, the permutations should match:\begin{equation}\label{eq:matchpermutations}
\takeout{i}a(\sigma(r')) = a'(r').
\end{equation}

We start with the reflections. To prove is Equation~\ref{eq:matchreflections}.
Let $r$ be $\sigma(r')$. By definition of $r = \sigma(r')$, we have $c(r) = \putin{i}{2k}c(r')$, and hence, $\TRGC^d_i(r) = 2k$ and $\takeout{i}\TRGC^d(r) = \TRGC^{d-1}(r')$. Recall that $\TRGC^d_i(r)$ is even if and only if $r_i$ is even. Therefore, $r_i$ is even, and by Lemma~\ref{lem:ternarydigitremoval}, part~(i), we have:
\begin{equation}\label{eq:reductionofsigma}
r' = \takeout{i}r = \takeout{i}\sigma(r'),
\end{equation}
and:
\begin{equation}\label{eq:maintainrankparity}
r' \equiv r = \sigma(r') \pmod 2.
\end{equation}
By the definition of the curves, $m'_j(r') \equiv \mathrm{digitsum}(r') - r'_j \pmod 2$, and because $r'$ is a ternary number, we have $\mathrm{digitsum}(r') \equiv r' \pmod 2$. Now we have, for $j < i$,
$m'_j(r') \equiv \mathrm{digitsum}(r') - r'_j \equiv r' - r'_j \equiv \sigma(r') - \sigma_j(r') \equiv m_j(\sigma(r')) \pmod 2$, and for $j \geq i$,
$m'_j(r') \equiv \mathrm{digitsum}(r') - r'_j \equiv r' - r'_j \equiv \sigma(r') - \sigma_{j+1}(r') \equiv m_{j+1}(\sigma(r')) \pmod 2$.
Thus $m'(r') = \takeout{i}m(\sigma(r'))$, QED.

Now all that is left to prove is that the permutation components are correct, that is, we need to prove Equation~\ref{eq:matchpermutations}. Below we prove this separately for each of the four sets of curves described above.

\paragraph{Peano curves}
Since $a'(r')$ and $a(\sigma(r'))$ maintain the order of all coordinate axes regardless of $r'$ and $\sigma(r')$, we always have $a'(r') = \takeout{i}a(\sigma(r'))$, QED.

\paragraph{Coil curves}
Since $a'(r')$ and $a(\sigma(r'))$ reverse the order of all axes regardless of $r'$ and $\sigma(r')$, we always have $a'(r') = \takeout{i}a(\sigma(r'))$, QED.

\paragraph{Half-coil curves}
The permutation $a'(r')$ reverses the order of all axes if $r$ is odd, and maintains the order of all axes if $r'$ is even.
The permutation $a(\sigma(r'))$ reverses the order of all axes if $\sigma(r')$ is odd, and maintains the order of all axes if $\sigma(r')$ is even. By Equation~\ref{eq:maintainrankparity} we have that $r'$ is odd if and only if $\sigma(r')$ is odd, and thus, the permutations $a'(r')$ and $a(\sigma(r'))$ match, that is, $a'(r') = \takeout{i}a(\sigma(r'))$, QED.

\paragraph{Meurthe curves}
We will again use Equation~\ref{eq:reductionofsigma}. Recall that the respective inverses of $a'(r')$ and $a(r)$ (where $r = \sigma(r')$) can be constructed from the identity permutation by moving all indices $h$ such that $r'_h$, or $r_h$, respectively, is 0 or 1 to the front, and reversing the order of these indices. Obviously, inserting a digit in $r'$ at position $i$ to obtain $r = \sigma(r')$, moving the index $i$ in $\inv{a}(r)$, and subsequently removing index $i$ from $\inv{a}(r)$ (which is what $\takeout{i}a(r)$ does), does not affect the order of the remaining indices relative to each other. Thus we have $a'(r') = \takeout{i}a(\sigma(r'))$, QED.

\subsection{Orientation of the interdimensionally consistent mono-Wunderlich curves}
\label{sec:proofternaryneutral}

By Lemma~\ref{lem:neutral}, a $d$-dimensional mono-Wunderlich curve has neutral orientation if and only if there is a number $\depth$, such that each of the $d!$ possible permutations of $d$ numbers can be constructed as the composition of $\depth$ permutations from $a$. Trivially, each 1-dimensional curve has neutral orientation. Below we analyse whether the $d$-dimensional curves for $d \geq 2$ have neutral orientation.

\paragraph{Peano curves} In the Peano curves, $a$ only contains the identity permutation; therefore no other permutation can be constructed and the Peano curves do \emph{not} have neutral orientation.

\paragraph{Coil curves} In the coil curves, $a$ only contains the reversal permutation. As a result, there is no $\depth$ such that each permutation can be constructed as the composition of $\depth$ permutations from $a$, not even if $d = 2$: if $\depth$ is odd, only the reversal permutation can be constructed, and if $\depth$ is even, only the identity permutation can be constructed. Therefore the coil curves do \emph{not} have neutral orientation.

\paragraph{Half-coil curves} In the half-coil curves, $a$ only contains the identity and the reversal permutation. As a result, there is no $\depth$ such that each permutation can be constructed as the composition of $\depth$ permutations from $a$, unless $d = 2$. Thus, the two-dimensional half-coil curve has neutral orientation, but the higher-dimensional half-coil curves have not.

\paragraph{Meurthe curves} In the Meurthe curves, $a(00\reps{2}{d-2})$ is the permutation that swaps the first two elements, and $a(\reps{2}{d-1}0)$ is the permutation that rotates the whole sequence one step to the right. Thus, we can swap any two adjacent elements by composing at most $d+1$ permutations: $d$ rotations $a(\reps{2}{d-1}0)$ and one swap $a(00\reps{2}{d-2})$, placed between the rotations at the point where the two elements to be swapped have been rotated to the first two positions. Since any permutation can be created from any other by at most $d(d-1)/2$ swaps of adjacent elements, we get that any permutation can be created from any other by composing at most $(d+1)d(d-1)/2$ permutations $a(00\reps{2}{d-2})$ and $a(\reps{2}{d-1}0)$. However, to satisfy the condition in Lemma~\ref{lem:neutral}, there should be a depth $\depth$ such that each permutation can be created from the identity permutation by composing \emph{exactly} $\depth$ permutations. In the Meurthe curves, this can indeed be done for $\depth = (d+1)d(d-1)/2$, since any shorter sequence of permutations can be completed by the identity permutation found at $a(\reps{2}{d})$. If follows that the Meurthe curves have neutral orientation.

Note that the analysis given above is by no means tight: in reality much less than $(d+1)d(d-1)/2$ permutations from $a$ may suffice to produce all possible permutations. For example, in four dimensions two permutations suffice, rather than $(d+1)d(d-1)/2 = 30$.

\section{Extradimensional 2-regular curves}
\label{sec:hilbert}

In this section we study 2-regular vertex-continuous mono-curves, as defined in Section~\ref{sec:classes}. We will call such curves \emph{mono-Hilbert curves}~\cite{Haverkort3D}. In two dimensions, the properties that define a mono-Hilbert curve form a minimal sufficient set of properties that defines Hilbert's original curve (see Figure~\ref{fig:original-curves}). Therefore one can argue, although it is ultimately a matter of taste, that in higher dimensions these properties are necessary and sufficient for a curve to be considered a true generalization of a Hilbert curve~\cite{Haverkort3D}. Below we will discuss two sets of mono-Hilbert curves in arbitrary dimensions: the generalization implemented by Moore~\cite{Moore} (citing Butz~\cite{Butz}), and a new generalization called \emph{harmonious Hilbert curves}.

\subsection{Standard Hilbert curves}
\label{sec:standard}
We will first describe a class of Hilbert curves in higher dimensions, in which the order of the subregions and the connection points between them follow a very regular pattern. We call these curves \emph{standard Hilbert curves} (by lack of a better name).

Each standard Hilbert curve is defined by a single recursive rule. Each curve traverses the $2^d$ subcubes of the unit hypercube in the order of the binary reflected Gray code explained in Section~\ref{sec:graycodes}, that is, for each region $S(r)$ we have:
\[c(r) = \BRGC_d(r).\]
Standard Hilbert curves differ in their permutation functions $a$, but $\inv{a}_0$ is fixed, as follows (recall that we write $\pred{r}$ for $r - 1$, and we write $\succ(r)$ for $r + 1$):\[\begin{array}{ll}
a_{d-1}(r) = 0 & \hbox{if $r = 0$ (the first subregion) or $r = \reps 1d$ (the last subregion)};\\
a_i(r) = 0     & \hbox{if $0 < r < \reps 1d$, $r$ is odd and $c_i(r) \neq c_i(\succ{r})$};\\
a_i(r) = 0     & \hbox{if $0 < r < \reps 1d$, $r$ is even and $c_i(r) \neq c_i(\pred{r})$}.\\
\end{array}\]
Reflections are defined as follows:\[\begin{array}{ll}
m_i(r) = c_i(r)        & \hbox{if $r = 0$};\\
m_i(r) = 1 - c_i(r)    & \hbox{if $r > 0$ and $i = d-1$};\\
m_i(r) = c_i(\pred{r}) & \hbox{if $r > 0$ and $i < d-1$}.\\
\end{array}\]

\noindent We consider two sets of standard Hilbert curves:\begin{itemize}
\item Butz-Moore curves, as implemented by Moore~\cite{Moore}, based on the work of Butz~\cite{Butz}: here $\inv{a}_j(r) = (\inv{a}_0(r) + j) \bmod d$.
\item our new \emph{harmonious Hilbert curves}:
  when $r_i \neq r_{d-1}$, then $a_i(r)$ is the number of digits among $r_{i+1},...,r_{d-1}$ that differ from $r_{d-1}$;
  when $r_i = r_{d-1}$, then $a_i(r)$ is $d-1$ minus the number of digits among $r_0,...,r_{i-1}$ that equal $r_{d-1}$.
  Alternatively, we can define $a(r)$ through its inverse $\inv{a}(r)$: the inverse is constructed from the identity permutation by moving all indices $i$ such that $r_i \neq r_{d-1}$ to the front and reversing their order, and reversing the order of the remaining indices as well.
\end{itemize}

\begin{figure}
\centering
\hbox to \hsize{\hfill
\includegraphics[scale=0.85]{A26-00-db}\hfill\includegraphics[scale=0.85]{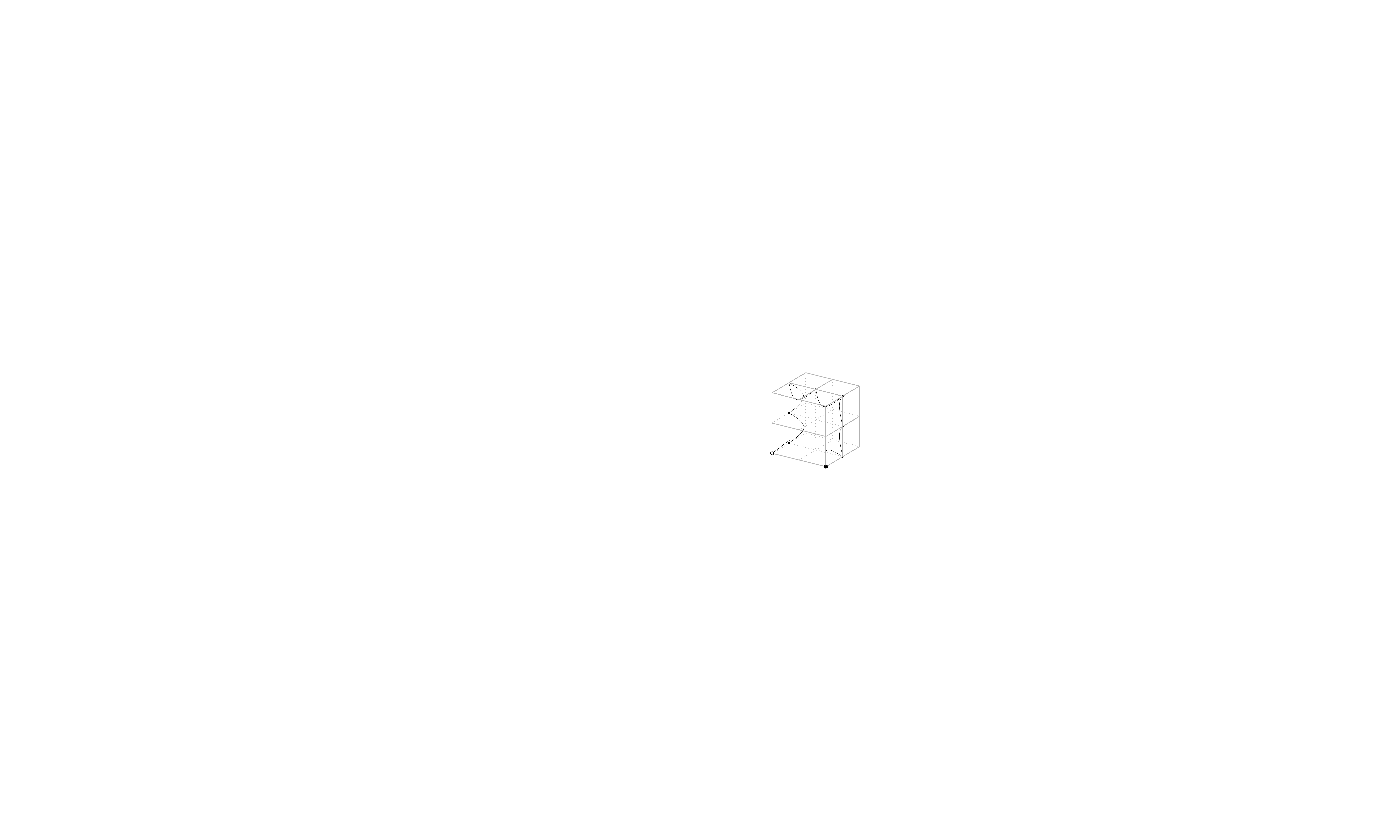}\hfill
}
\caption{The three-dimensional harmonious Hilbert curve. Left: definition. Right: location of the gates between the subregions.}
\label{fig:3DHarmonious}
\end{figure}

Table~\ref{tab:2reg5Dsets} shows the definitions of the five-dimensional versions of each of these curves. For illustration, Figure~\ref{fig:3DHarmonious} shows the graphical definition of the three-dimensional harmonious Hilbert curve.
The harmonious Hilbert curves are interdimensionally consistent and have neutral orientation (as we will see below), the Butz-Moore curves are not interdimensionally consistent (as observed in Section~\ref{sec:intro}) and do not have neutral orientation. The main result of Section~\ref{sec:hilbert} can be summarized as follows:

\begin{theorem}
The harmonious Hilbert curves constitute an interdimensionally consistent set of order-preserving 
2-regular vertex-continuous mono-curves with neutral orientation.
\end{theorem}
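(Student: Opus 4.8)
The statement bundles four claims: that the harmonious Hilbert curves are (a) order-preserving 2-regular mono-curves, (b) vertex-continuous, (c) interdimensionally consistent, and (d) of neutral orientation. Claim (a) is immediate from the construction, since each curve is given by a single recursive rule on a hypercube subdivided into $2^d$ equal subcubes and the transformations involve only the rotations $a(r)$ and reflections $m(r)$ with no order-reversal. The real work is in (b), (c) and (d), and my plan is to mirror the development of Section~\ref{sec:proofternary} and Section~\ref{sec:proofternaryneutral}, substituting the binary Gray-code machinery of Lemmas~\ref{lem:XORconstruction} and~\ref{lem:binarydigitremoval} for the ternary machinery used there.

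For vertex-continuity I would first show that $f(0)$ is the origin and $f(1)$ a vertex adjacent to it along an edge: since $c(0)=\reps 0d$ and $m_i(0)=c_i(0)=0$, the first subregion is unreflected, so recursively the entrance gate is pinned at $\reps 0d$; a symmetric computation using $c(\reps 1d)=1\reps 0{d-1}$ and the reflection rule for $r>0$ locates the exit gate. The heart of the argument is then, exactly as in Lemma~\ref{lem:ternarycontinuous}, to verify that the exit gate of $S(\pred r)$ coincides with the entrance gate of $S(r)$ for every $0<r\le\reps 1d$. By Lemma~\ref{lem:changingdigit}, $c(\pred r)$ and $c(r)$ differ in a single coordinate, and the three reflection cases ($r=0$; $i=d-1$; $i<d-1$ with $m_i(r)=c_i(\pred r)$) are designed precisely so that $S(r)$ is the mirror image of $S(\pred r)$ in every coordinate except the one in which the locations differ, while the fixed value of $\inv a_0$ aligns that moving coordinate. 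I would carry this out case by case, as in the ternary proof.

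For interdimensional consistency, Lemma~\ref{lem:goal} reduces the task to showing that each $f_d$ shows $f_{d-1}$ on every front face $F^0_i$, and I would prove this by induction on the refinement level $\depth$, following the template around Equations~\ref{eq:matchreflections} and~\ref{eq:matchpermutations}. The crucial structural fact is that on a front face $c_i(r)=\BRGC^d_i(r)=0$ forces $r_i=r_{i-1}$ by Lemma~\ref{lem:XORconstruction}, so by Lemma~\ref{lem:binarydigitremoval}(i) the deletion of coordinate $i$ carries \emph{no} accompanying mirroring, $\reduction{i}{0}\BRGC^d=\BRGC^{d-1}$, and with $r=\sigma(r')$ we get $r'=\takeout i r$. (This is exactly why the claim is restricted to front faces: on back faces Lemma~\ref{lem:binarydigitremoval}(ii) introduces a reflection $\mirror{i}{}$ and the correspondence would hold only up to mirroring.) The base case and the permutation-matching step $a'(r')=\takeout i a(\sigma(r'))$ then run parallel to the Meurthe case: the harmonious $\inv a(r)$ is built from a position rule within two digit-classes defined by comparison with the reference digit $r_{d-1}$, so inserting a coordinate at position $i$ and later removing it leaves the relative order of the surviving indices untouched. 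One delicate point I would highlight is that the reference digit $r_{d-1}$ changes under $\takeout{d-1}$; but on the front face $F^0_{d-1}$ we have $r_{d-1}=r_{d-2}$, so the reference \emph{value} is preserved and the argument survives the case $i=d-1$.

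The step I expect to be the main obstacle is the reflection-matching Equation~\ref{eq:matchreflections}, $\takeout i m(\sigma(r'))=m'(r')$. Unlike the ternary curves, whose reflections come from a purely local digit-sum formula, the standard Hilbert reflection rule is \emph{non-local}: $m_i(r)=c_i(\pred r)$ refers to the full-space predecessor $\pred r=r-1$, whereas the face computation requires comparison with the on-face predecessor $\sigma(\pred{r'})$, and in general $\sigma(\pred{r'})\neq\pred{\sigma(r')}$ because off-face regions may be interleaved. Reconciling these two notions of ``previous region'' — either by showing the intervening off-face regions do not disturb the coordinates read off on the face, or by re-deriving a face-local expression for the reflections from Lemmas~\ref{lem:XORconstruction}--\ref{lem:binarydigitremoval} — is the technical crux. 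Finally, for neutral orientation I would apply Lemma~\ref{lem:neutral}, for which it suffices that the permutations $\{a(r)\}$ generate the full symmetric group and that the associated transition matrix is primitive. Here I would flag a genuine difference from the Meurthe proof: the identity permutation does \emph{not} occur among the $a(r)$ once $d\ge 3$ (for $d=2$ it does), so the simple ``padding by the identity'' argument is unavailable. Instead I would exhibit adjacent transpositions among the $a(r)$, which already generate $S_d$, together with at least one even permutation such as a $3$-cycle arising for a suitable $r$; the Cayley graph then carries closed walks of coprime lengths (length $2$ from a transposition, length $3$ from a $3$-cycle), so some power of the transition matrix is strictly positive and Lemma~\ref{lem:neutral} yields neutral orientation.
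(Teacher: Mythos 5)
Your architecture matches the paper's (reduce to ``$f_d$ shows $f_{d-1}$ on faces'' via Lemma~\ref{lem:goal}, induct on refinement level, use Lemmas~\ref{lem:XORconstruction}--\ref{lem:binarydigitremoval}, finish with Lemma~\ref{lem:neutral}), but the consistency proof as planned does not close, for a reason you have half-seen and then argued away. You restrict the induction hypothesis to front faces. In the inductive step, however, the order visible inside a top-level cell $F'(r')$ of a front face $F^0_i$ is the image under $\tau(\sigma(r'))$ of the order that $f$ shows on the face $F^*(r') = F^{|0-m_i(\sigma(r'))|}_{a_i(\sigma(r'))}$ of the unit cube, and since $m_i(r)=c_i(\pred r)$ is frequently $1$ even when $c_i(r)=0$, the face $F^*(r')$ is often a \emph{back} face. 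A hypothesis that says nothing about back faces therefore cannot be propagated; the paper's induction hypothesis must carry clause (ii) (each back face $F^1_j$ with $j<d-1$ shows $f_{d-1}$ mirrored in dimension $j$), and it additionally requires the separate ``never the broken face'' verification that $F^1_{d-1}$, about which nothing is claimed, never occurs as $F^*(r')$ --- a nontrivial check using the interplay of $a$ and $m$ that is absent from your plan. Relatedly, the reflection-matching step you rightly single out as the crux then becomes a statement about mirror-corrected reflection vectors $\mathring m$, and its hardest case (the paper's case~(5), where $\sigma(\pred{r'})\neq\pred{\sigma(r')}$ and one must locate the unique more significant digit $h$ that flips between them) is precisely the part you leave unresolved; diagnosing the non-locality of $m_i(r)=c_i(\pred r)$ is correct but is not a proof.

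Two further points. For vertex-continuity, the ternary picture you transplant (consecutive subregions being mirror images in all coordinates but one, subcurves running corner-to-opposite-corner) does not apply here: the subcurves run from a vertex to an \emph{adjacent} vertex, and the paper instead exhibits the connecting gates explicitly as the face centres $x(r)$ with $x_{d-1}(r)=\frac12$ and checks $\tau(r)((0,\ldots,0))=x(r)$ and $\tau(r)((1,0,\ldots,0))=x(\succ r)$; this is fixable but is a different computation from the one you describe. For neutral orientation, your key ingredient --- adjacent transpositions occurring \emph{among} the $a(r)$ --- is false for $d\ge 4$: every $\inv a(r)$ consists of the ``differing'' indices in reversed order followed by the ``equal'' indices in reversed order (the latter block always containing $d-1$), and no permutation of that shape is an adjacent transposition once $d\ge 4$. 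The paper instead realizes each adjacent transposition as a composition of four of the $a(r)$ and pads to a uniform $\depth=2d(d-1)$ with the order-two reversal $a(\reps 0d)$ (you are right that the identity is unavailable for $d\ge 3$, but the even amount of padding by an involution does the job). Your coprime-closed-walk idea could substitute for the padding in principle, but it rests on generators you have not actually exhibited.
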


\begin{table}
\label{tab:2reg5Dsets}
\caption{Butz-Moore curves and harmonious Hilbert curves in five dimensions.}
\addvspace{.5\baselineskip}\centering\footnotesize
\begin{tabular}{@{}c@{\,}|@{\,}c@{ }c@{/}c@{ }c@{/}c@{ }c@{ }c@{ }||@{ }c@{\,}|@{\,}c@{ }c@{/}c@{ }c@{/}c@{ }c@{ }c@{}}
rank      & loc. & \multicolumn{2}{c}{permutat.} & \multicolumn{2}{c}{inv. perm.} & refl. & exit &
rank      & loc. & \multicolumn{2}{c}{permutat.} & \multicolumn{2}{c}{inv. perm.} & refl. & exit \\
          &      & B-M & HH & B-M & HH & & &
          &      & B-M & HH & B-M & HH & & \\
\hline
00000 & 00000 & 12340 & 43210 & 40123 & 43210 & 00000 & $\frac12(0,0,0,0,1)$ & 10000 & 11000 & 01234 & 04321 & 01234 & 04321 & 01001 & $\frac12(2,2,0,0,1)$ \\
00001 & 00001 & 23401 & 32104 & 34012 & 32104 & 00000 & $\frac12(0,0,0,1,1)$ & 10001 & 11001 & 23401 & 42103 & 34012 & 32140 & 11000 & $\frac12(2,2,0,1,1)$ \\
00010 & 00011 & 23401 & 43201 & 34012 & 34210 & 00000 & $\frac12(0,0,0,2,1)$ & 10010 & 11011 & 23401 & 14302 & 34012 & 30421 & 11000 & $\frac12(2,2,0,2,1)$ \\
00011 & 00010 & 34012 & 21043 & 23401 & 21043 & 00011 & $\frac12(0,0,1,2,1)$ & 10011 & 11010 & 34012 & 41032 & 23401 & 21430 & 11011 & $\frac12(2,2,1,2,1)$ \\
00100 & 00110 & 34012 & 43021 & 23401 & 24310 & 00011 & $\frac12(0,0,2,2,1)$ & 10100 & 11110 & 34012 & 14032 & 23401 & 20431 & 11011 & $\frac12(2,2,2,2,1)$ \\
00101 & 00111 & 23401 & 21403 & 34012 & 31042 & 00110 & $\frac12(0,0,2,1,1)$ & 10101 & 11111 & 23401 & 41302 & 34012 & 31420 & 11110 & $\frac12(2,2,2,1,1)$ \\
00110 & 00101 & 23401 & 43102 & 34012 & 32410 & 00110 & $\frac12(0,0,2,0,1)$ & 10110 & 11101 & 23401 & 24103 & 34012 & 32041 & 11110 & $\frac12(2,2,2,0,1)$ \\
00111 & 00100 & 40123 & 10432 & 12340 & 10432 & 00101 & $\frac12(0,1,2,0,1)$ & 10111 & 11100 & 40123 & 40321 & 12340 & 14320 & 11101 & $\frac12(2,1,2,0,1)$ \\
01000 & 01100 & 40123 & 40321 & 12340 & 14320 & 00101 & $\frac12(0,2,2,0,1)$ & 11000 & 10100 & 40123 & 10432 & 12340 & 10432 & 11101 & $\frac12(2,0,2,0,1)$ \\
01001 & 01101 & 23401 & 24103 & 34012 & 32041 & 01100 & $\frac12(0,2,2,1,1)$ & 11001 & 10101 & 23401 & 43102 & 34012 & 32410 & 10100 & $\frac12(2,0,2,1,1)$ \\
01010 & 01111 & 23401 & 41302 & 34012 & 31420 & 01100 & $\frac12(0,2,2,2,1)$ & 11010 & 10111 & 23401 & 21403 & 34012 & 31042 & 10100 & $\frac12(2,0,2,2,1)$ \\
01011 & 01110 & 34012 & 14032 & 23401 & 20431 & 01111 & $\frac12(0,2,1,2,1)$ & 11011 & 10110 & 34012 & 43021 & 23401 & 24310 & 10111 & $\frac12(2,0,1,2,1)$ \\
01100 & 01010 & 34012 & 41032 & 23401 & 21430 & 01111 & $\frac12(0,2,0,2,1)$ & 11100 & 10010 & 34012 & 21043 & 23401 & 21043 & 10111 & $\frac12(2,0,0,2,1)$ \\
01101 & 01011 & 23401 & 14302 & 34012 & 30421 & 01010 & $\frac12(0,2,0,1,1)$ & 11101 & 10011 & 23401 & 43201 & 34012 & 34210 & 10010 & $\frac12(2,0,0,1,1)$ \\
01110 & 01001 & 23401 & 42103 & 34012 & 32140 & 01010 & $\frac12(0,2,0,0,1)$ & 11110 & 10001 & 23401 & 32104 & 34012 & 32104 & 10010 & $\frac12(2,0,0,0,1)$ \\
01111 & 01000 & 01234 & 04321 & 01234 & 04321 & 01001 & $\frac12(1,2,0,0,1)$ & 11111 & 10000 & 12340 & 43210 & 40123 & 43210 & 10001 & $\frac12(2,0,0,0,0)$ \\
\end{tabular}
\end{table}

\begin{inset}
\caption{How the harmonious Hilbert curves were found.}
\footnotesize
The first non-trivial harmonious Hilbert curve is the three-dimensional curve. This curve was discovered when generating all 10,694,807 different three-dimensional mono-Hilbert curves, including curves that are not order-preserving~\cite{Haverkort3D}. In fact, it turned out that in three dimensions, the harmonious Hilbert curve, showing the two-dimensional Hilbert curve on five faces, is the only mono-Hilbert curve that is extradimensional to the two-dimensional Hilbert curve. There is no three-dimensional mono-Hilbert curve that shows the two-dimensional Hilbert curve on all six faces (even when allowing reflections on some faces).

Having discovered the three-dimensional curve, the question arose whether similar curves would exist in higher dimensions, and if so, how many of them. Unfortunately, the approach to finding interdimensionally consistent generalizations of mono-Wunderlich curves, as explained in Inset~\ref{box:howfoundternary}, could not be applied to mono-Hilbert curves: it would already fail in three dimensions. Instead, I programmed an exhaustive search algorithm that, given a $d$-dimensional curve $f_d$, would search all possible symmetric $(d+1)$-dimensional curves that traverse the subregions in binary reflected Gray code order, reporting those curves that are extradimensional to~$f_d$. Using effective pruning, unique solutions for four-, five- and six-dimensional curves were quickly found. One may regard this as a somewhat independent confirmation that the proofs given in Section~\ref{sec:proofhilbert} are likely to be correct.

It became clear that the challenge was not to run the exhaustive search, but to understand the structure in the solutions. Freek van Walderveen was the first to recognize structure in the permutation patterns, and managed to write an efficient algorithm that correctly predicted the permutations of the seven-dimensional curve, which was subsequently verified by my exhaustive search algorithm. Insights that led to a simpler algorithm began to come when I drew a graph of the subregions whose permutations differ by only one inversion. After drawing the graph for the five-dimensional curve on a geometric drawing of the five-dimensional hypercube, it became clear that inversions in the permutations of subregions $S(r)$ correspond to inversions of adjacent digits with different values in the binary representations of the ranks~$r$. With an analysis of the necessary base cases, this led to the simple definition given in Section~\ref{sec:standard}, which specifies the permutation for any given rank.
\end{inset}

\subsection{Implementation of comparison operators}
\label{sec:implementhilbert}

In this section we discuss how to implement comparison operators for each of the two sets of curves presented above. We will first explain how to do this for points in the unit hypercube, then explain how to extend the operators to points with arbitrary non-negative coordinates, and finally how to extend the operators to points with arbitrary real coordinates. The main purpose of this section is to clarify how such comparison operators can be implemented, and to enable us to discover more properties of the curves by analysing the algorithms. For real implementations, one would have to implement the \Extract function, and one may consider optimizations such as look-up tables for permutations. You may find that many steps of the algorithm can be implemented and quite easily and efficiently by bit operations such as bit shifting and exclusive or. Readers who are not interested in implementations may want to skip to Section~\ref{sec:proofhilbertcontinuous} right away.

\LinesNotNumbered
\RestyleAlgo{plain}

\paragraph{Modifications for all curves}
For points in the unit hypercube we can simply fill in the details of Algorithm~\ref{alg:RGCcurvesoperator}. To start with, we substitute ``1'' for ``$b - 1$''.

Second, we implement the reflections. First, consider the reflections of regions with even rank $r$. Since $r$ is even, we either have $r = 0$, or there is some $i$ such that $\pred{r}_{i-1} \pred{r}_{i}...\pred{r}_{d-1} = 0\reps{1}{d-i}$ and $r_{i-1} r_{i}...r_{d-1} = 1\reps{0}{d-i}$, while the first $i-1$ digits of $\pred{r}$ and $r$ are equal. By Lemma~\ref{lem:XORconstruction}, $i$ is now the largest $i$ such that $c_{i}(r) = 1$. Furthermore, by Lemma~\ref{lem:changingdigit}, we have $c_{i-1}(r) \neq c_{i-1}(\pred{r})$, and thus, by definition, $\inv{a}_0(r) = i-1$.

If $r$ is odd, we either have $r = \reps 1d$, or there is some $i$ such that $r_{i-1} r_{i}...r_{d-1} = 0\reps{1}{d-i}$ and $\succ{r}_{i-1} \succ{r}_{i}...\succ{r}_{d-1} = 1\reps{0}{d-i}$, and we find that $i$ is the largest $i$ such that $c_{i}(r) = 1$, $c_{i-1}(r) \neq c_{i-1}(\succ{r})$, and $\inv{a}_0(r) = i-1$. Finally, observe that $r$ is odd whenever the number of ones in $c(r)$ is odd, since $c(0) = 0$ and, as $r$ increases to $\reps 1d$, the value of $r$ and the number of ones in $c(r)$ both change by one in every step.

This leads to the following simple method to determine the reflections $m(r)$ based on $c(r)$, for $0 < r < \reps1d$.
Go through the digits $c_i(r)$ in order of increasing $i$: if $c_i(r) = 0$, set $m_i(r) = 0$; if $c_i(r) = 1$, set $m_i(r) = 1$ and set $\inv{a}_0(r) = i-1 \pmod d$. This procedure computes $m_i(r)$ correctly except when $c_i(r) \neq c_i(\pred{r})$ or $i < d-1$. Therefore we correct $m(r)$ by changing $m_{d-1}(r)$ and, when the number of ones in $c(r)$ is even, changing $m_i(r)$ for $i = \inv{a}_0(r)$. For $r = 0$, the exact same procedure also works correctly if $\inv{a}_0(r)$ is initialized to $d-1$, as $\inv{a}_0(r)$ will remain unchanged, $m_i(r)$ will be set to zero correctly for all $i$, and in the end, the two corrections (of $m_{d-1}(r)$ and of $m_i(r)$ for $i = \inv{a}_0(r) = d-1$) cancel each other's effect. For $r = \reps 1d$, we have $c(r) = 1\reps0{d-1}$, and the procedure will also set $\inv{a}_0(r)$ and $m_r$ correctly.

Therefore we implement the reflections correctly if we modify Algorithm~\ref{alg:RGCcurvesoperator} as follows. After Line~\ref{alg:rgccurves:initforward}, add:

\medskip
\begin{algorithm}[H]
$\iMinusOne \leftarrow d-1$; $\AltPermutation[0] \leftarrow \Permutation[\iMinusOne]$\;
\end{algorithm}

\medskip\noindent
Replace Line~\ref{alg:rgccurves:odddigit} by:

\medskip
\begin{algorithm}[H]
\If{$\pFirstDigit = 1$}{
$\Forward \leftarrow \Not \Forward$;
$\Reflected[\Permutation[i]] \leftarrow \Not \Reflected[\Permutation[i]]$\;
$\AltPermutation[0] \leftarrow \Permutation[\iMinusOne]$\;
}
$\iMinusOne \leftarrow i$\;
\end{algorithm}

\medskip\noindent
Remove Line~\ref{alg:rgccurves:reflect}. Instead, add the following before Line~\ref{alg:rgccurves:update} (outside the \textbf{for} loop):

\begin{algorithm}[H]
$\Reflected[\Permutation[d-1]] \leftarrow \Not \Reflected[\Permutation[d-1]]$\;
\If{$\Forward$}{
  \tcp{$\pFirstDigit$ was one an even number of times}
  $\Reflected[\AltPermutation[0]] \leftarrow \Not \Reflected[\AltPermutation[0]]$\;
}
\end{algorithm}

\medskip\noindent
Even better, we move Line~\ref{alg:rgccurves:initforward} to just before Line~\ref{alg:rgccurves:digitloop}, and `correct' $\Reflected[\AltPermutation[0]]$ unconditionally. The only effect of this is that we will enter the next iteration of the \textbf{repeat}...\textbf{until} loop with wrong values for $\Forward$ and $\Reflected[\Permutation[0]]$, but the reader may verify that this does not change the return values and $\Rank[i]$ as computed in the first iteration of the large \textbf{for} loop. In contrast, the value of $\pFirstDigit$ will change, and thus the values of $\Forward$ and $\Reflected[\Permutation[0]]$ will be correct again by the end of the first iteration of the \textbf{for} loop. Thus we get Algorithm~\ref{alg:binarycurvesoperator}.

\LinesNumbered
\RestyleAlgo{ruled}
\begin{algorithm}
\lFor{$i \leftarrow 0$ \KwTo $d-1$}{$\Reflected[i] \leftarrow \False$}\;
\lFor{$i \leftarrow 0$ \KwTo $d-1$}{$\Permutation[i] \leftarrow i$\tcp*[r]{maintains which entry in each row of $T$ is non-zero}}
initialize an array $\AltPermutation[0..d-1]$\tcp*[r]{for a new value of \Permutation under construction}\label{alg:binarycurves:altpermutation}
$\Forward \leftarrow \True$\;
\Repeat{all remaining digit strings $p[0],...,p[d-1]$ and $q[0],...,q[d-1]$ are empty}{
  $\iMinusOne \leftarrow d-1$; $\AltPermutation[0] \leftarrow \Permutation[\iMinusOne]$\;\label{alg:binarycurves:initainv0}
  \For{$i \leftarrow 0$ \KwTo $d-1$}{\label{alg:binarycurves:dimloop}
    $\pFirstDigit \leftarrow \Extract(p[\Permutation[i]])$; $\qFirstDigit \leftarrow \Extract(q[\Permutation[i]])$\;
    \If{$\Reflected[\Permutation[i]]$}{
      $\pFirstDigit \leftarrow (1 - \pFirstDigit)$; $\qFirstDigit \leftarrow (1 - \qFirstDigit)$\label{alg:binarycurves:reflect}
    }
    \lIf{$\pFirstDigit < \qFirstDigit$}{\Return \Forward}\ \lElseIf{$\pFirstDigit > \qFirstDigit$}{\Return \Not \Forward}\;
    \lIf{\Forward}{$\Rank[i] \leftarrow \pFirstDigit$}\ \lElse{$\Rank[i] \leftarrow (1 - \pFirstDigit)$}\;\label{alg:binarycurves:updaterank}
    \If{$\pFirstDigit = 1$}{
      $\Forward \leftarrow \Not \Forward$;
      $\Reflected[\Permutation[i]] \leftarrow \Not \Reflected[\Permutation[i]]$\;
      $\AltPermutation[0] \leftarrow \Permutation[\iMinusOne]$\;\label{alg:binarycurves:updateainv0}
    }
    $\iMinusOne \leftarrow i$\;\label{alg:binarycurves:updateiminus1}\label{alg:binarycurves:dimloopend}
  }
  \lFor{$i \leftarrow 1$ \KwTo $d-1$}{$\AltPermutation[i] \leftarrow \Permutation[\inv{a}_i(\Rank)]$}\;\label{alg:binarycurves:rotate}
  $\Reflected[\Permutation[d-1]] \leftarrow \Not \Reflected[\Permutation[d-1]]$;
  $\Reflected[\AltPermutation[0]] \leftarrow \Not \Reflected[\AltPermutation[0]]$\;
  swap $\Permutation$ and $\AltPermutation$\;
}
\Return\False\tcp*[r]{$p$ and $q$ are equal}
\caption{Framework for implementation of Butz-Moore and harmonious Hilbert curves\label{alg:binarycurvesoperator}}
\end{algorithm}

\paragraph{Butz-Moore curves}
The permutations of Butz-Moore curves are always rotations (in the permutation sense of the word), therefore they can be maintained by only keeping track of $\Permutation[0]$, without storing $\Permutation[1..d-1]$. Thus we can simplify Algorithm~\ref{alg:binarycurvesoperator} substantially 
and get Algorithm~\ref{alg:butzmooreoperator}: here we maintain $\Permutation[0]$ in a variable called $\Rotation$. We define the value of $-1 \bmod d$ as $d-1$.

\begin{algorithm}
$\Forward \leftarrow \True$; $\Rotation \leftarrow 0$;
\lFor{$i \leftarrow 0$ \KwTo $d-1$}{$\Reflected[i] \leftarrow \False$}\;
\Repeat{all remaining digit strings $p[0],...,p[d-1]$ and $q[0],...,q[d-1]$ are empty}{
  $i \leftarrow \Rotation$; $\iMinusOne \leftarrow (i - 1) \bmod d$;
  $\NewRotation \leftarrow \iMinusOne$\;
  \Repeat{$i = \Rotation$}{
    $\pFirstDigit \leftarrow \Extract(p[i])$; $\qFirstDigit \leftarrow \Extract(q[i])$\;
    \If{$\Reflected[i]$}{
      $\pFirstDigit \leftarrow (1 - \pFirstDigit)$; $\qFirstDigit \leftarrow (1 - \qFirstDigit)$\label{alg:binarycurves:reflect}
    }
    \lIf{$\pFirstDigit < \qFirstDigit$}{\Return \Forward};
    \lElseIf{$\pFirstDigit > \qFirstDigit$}{\Return \Not \Forward}\;
    \If{$\pFirstDigit = 1$}{
      $\Forward \leftarrow \Not \Forward$;
      $\Reflected[i] \leftarrow \Not \Reflected[i]$;
      $\NewRotation \leftarrow \iMinusOne$\;
    }
    $\iMinusOne \leftarrow i$; $i \leftarrow (i + 1) \bmod d$\;
  }
  $\Reflected[\iMinusOne] \leftarrow \Not \Reflected[\iMinusOne]$\;
  $\Reflected[\NewRotation] \leftarrow \Not \Reflected[\NewRotation]$\;
  $\Rotation \leftarrow \NewRotation$\;
}
\Return\False
\caption{Implementation of the Butz-Moore curves\label{alg:butzmooreoperator}}
\end{algorithm}

\paragraph{Harmonious Hilbert curves}
To implement the harmonious Hilbert curves, remove Lines \ref{alg:binarycurves:initainv0}, \ref{alg:binarycurves:updateainv0} and~\ref{alg:binarycurves:updateiminus1}, and replace Line~\ref{alg:binarycurves:rotate} by Algorithm~\ref{alg:harmoniousrotation}.
The first \textbf{for} loop of Algorithm~\ref{alg:harmoniousrotation} can be avoided by integrating it into the loop on Line~\ref{alg:binarycurves:dimloop}--\ref{alg:binarycurves:dimloopend} of Algorithm~\ref{alg:binarycurvesoperator}.

\begin{algorithm}
\SetKwData{Next}{indexForNext}
\SetKwData{NumberOfOnes}{numberOfOnes}
\SetKw{KwDown}{down}
$\NumberOfOnes \leftarrow 0$; \lFor{$i \leftarrow 0$ \KwTo $d-1$}{$\NumberOfOnes \leftarrow \NumberOfOnes + \Rank[i]$}\;
\If{$\Rank[d-1] = 0$}{
$\Next[1] \leftarrow 0$; $\Next[0] \leftarrow \NumberOfOnes$\;
}
\Else($\Rank[d-1] = 1$){
$\Next[0] \leftarrow 0$; $\Next[1] \leftarrow d - \NumberOfOnes$\;
}
\For{$i \leftarrow d-1$ \KwDown \KwTo $0$}{
  $\AltPermutation[\Next[\Rank[i]]] \leftarrow \Permutation[i]$; increment $\Next[\Rank[i]]$\;
}
\caption{Algorithm to compute the inverse permutation of region \textit{rank} of a harmonious Hilbert curve, and to apply it to the permutation \textit{axis} to obtain \textit{altAxis}.\label{alg:harmoniousrotation}}
\end{algorithm}

\paragraph{Points with non-negative coordinates}
To enable our implementation to compare points outside the unit hypercube, we consider the unit hypercube as the first subregion of a larger cube. Again, as in Section~\ref{sec:implementternary}, for the harmonious Hilbert curves it is easiest to consider the unit hypercube as the first subregion of the first subregion of a larger cube, because the first subregion of the first subregion is neither rotated nor reflected. Thus we get the following algorithm:
\LinesNotNumbered
\RestyleAlgo{plain}

\medskip
\begin{algorithm}[H]
\lWhile{any coordinate of $p$ or $q$ is at least 1}{divide all coordinates of $p$ and $q$ by 4}\;
run Algorithm~\ref{alg:binarycurvesoperator} for harmonious Hilbert curves on $p$ and $q$\;
\end{algorithm}

\paragraph{Points with negative coordinates}
One could try to extend the implementation to negative coordinates in a way similar to what is explained in Section~\ref{sec:implementternary}. Recall that the basic idea is to move the origin of the coordinate system for the original, hypercube-filling curves, to somewhere properly inside the unit hypercubes. By `zooming out', one can then obtain curves that fill the full $d$-dimensional space, rather than covering only points with non-negative coordinates. However, the resulting curves would only be interdimensionally consistent, if the original hypercube-filling curves retain interdimensional consistency with the redefined origin.
Since I have not found any location for the origin that meets this requirement, I omit the details for extensions to negative coordinates.

\subsection{Proof of vertex-continuity}
\label{sec:proofhilbertcontinuous}

In this section we will prove that the Butz-Moore and the harmonious Hilbert curves are indeed curves, that is, they are vertex-continuous. We will do so in two steps: first we prove that standard Hilbert curves are vertex-continuous, and then we proof that the Butz-Moore and the harmonious Hilbert curves are standard Hilbert curves.

\begin{lemma}
Standard Hilbert curves constitute mono-Hilbert curves.
\end{lemma}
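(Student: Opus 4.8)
The curve is a $2$-regular mono-curve by construction, so the only thing left to establish is vertex-continuity, and the plan is to follow the template of Lemma~\ref{lem:ternarycontinuous}: locate the entrance and exit gates of the whole curve, show they are vertices, and then verify that consecutive subregions connect gate-to-gate. First I would pin down the two gates. Since $c(0)=\reps0d$ and $m(0)=\reps0d$, the transformation $\tau(0)$ is $p\mapsto\frac12 M(0)p$ with $o(0)=\reps0d$, whose only fixed point is the origin; as $S(0)$ is always the first subregion, the entrance gate is $f(0)=\reps0d$. For the exit gate I would compute $m(\reps1d)$ and $o(\reps1d)$ from the definitions (using $\BRGC^d(\reps1d)=1\reps0{d-1}$ and Lemma~\ref{lem:XORconstruction}) and check that $(1,0,\dots,0)$ is the fixed point of $\tau(\reps1d)$. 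The key point here is that this fixed point involves only the single permutation entry $\inv{a}_0(\reps1d)=d-1$ together with the reflections, so it is independent of the free part of the permutation; hence \emph{all} standard Hilbert curves share the exit gate $f(1)=(1,0,\dots,0)$, which is exactly why the whole class is vertex-continuous at once.

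Next I would reduce vertex-continuity to a finite gate-matching condition. Because both whole-curve gates are vertices and are fixed points of the first- and last-subregion recursions, the first (resp.\ last) cell of every region, at every refinement depth, carries that region's entrance (resp.\ exit) gate as one of its corners. Thus, by self-similarity, it suffices to prove that for every $1\le r\le\reps1d$ the exit gate of $S(\pred r)$ coincides with the entrance gate of $S(r)$. From $\tau(r)(\reps0d)=o(r)$ I would record that the entrance gate of $S(r)$ has $i$-th coordinate $\tfrac12(c_i(r)+m_i(r))$, and that applying $\tau(\pred r)$ to $(1,0,\dots,0)$ yields a point that agrees with the entrance gate of $S(\pred r)$ in every axis except $i^\dagger:=\inv{a}_0(\pred r)$, where the coordinate is flipped to the far side of $S(\pred r)$, i.e.\ becomes $\tfrac12(c_{i^\dagger}(\pred r)+1-m_{i^\dagger}(\pred r))$.

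Finally I would verify this matching coordinate by coordinate. Let $i^*$ be the unique axis in which $c(\pred r)$ and $c(r)$ differ; by Lemma~\ref{lem:changingdigit} this is the most significant binary digit in which $\pred r$ and $r$ differ, and the two locations differ there by exactly one. The required identities split into three groups: the axes $i\notin\{i^*,i^\dagger\}$, where one needs $m_i(\pred r)=m_i(r)$; the connection axis $i=i^*$; and the exit axis $i=i^\dagger$. I would discharge these using the reflection rule $m_i(s)=c_i(\pred s)$ for $i<d-1$ and $m_{d-1}(s)=1-c_{d-1}(s)$, together with the definition of $\inv{a}_0$, and a case split on the parity of $\pred r$. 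For odd $\pred r$ the rule forces $i^\dagger=i^*$, so the exit direction points straight at $S(r)$; for even $\pred r$ the rule instead sets $i^\dagger$ equal to the axis connecting $S(\pred r)$ to its predecessor $\pred{\pred r}$, which is in general different from $i^*$, and here Lemma~\ref{lem:changingdigit} guarantees that $c(\pred{\pred r})$ and $c(\pred r)$ agree off the exit axis, which is what makes the off-axis reflections match.

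I expect this parity analysis to be the main obstacle: the subtlety is precisely that the exit axis $i^\dagger$ need not be the connection axis $i^*$ for even-rank regions, so one must check that the single flipped coordinate and all the unflipped coordinates nonetheless land exactly on the facet shared by $S(\pred r)$ and $S(r)$. The boundary ranks $\pred r=0$ and $r=\reps1d$, and the special role of the last axis $d-1$ in the reflection rule, need separate inspection, but once the parity bookkeeping is organized the rest reduces to routine $\pm\tfrac12$ coordinate arithmetic. The second part of the lemma's proof, that the Butz--Moore and harmonious Hilbert curves actually satisfy the standard-Hilbert definition, I would treat afterwards by checking that their stated permutation functions have the prescribed $\inv{a}_0$, so that the vertex-continuity just proved applies verbatim.
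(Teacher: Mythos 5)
Your proposal is correct and follows essentially the same route as the paper: fix the whole-curve gates at the origin and $(1,0,\dots,0)$, reduce vertex-continuity to matching the exit gate of $S(\pred{r})$ with the entrance gate of $S(r)$ via $\tau(r)(0)$ and $\tau(r)((1,0,\dots,0))$, and discharge the coordinate identities from the reflection rule $m_i(r)=c_i(\pred r)$ (with the $d-1$ and $r=0$ exceptions). The paper packages the case analysis slightly differently---it posits the closed-form gate $x_i(r)=\tfrac12(c_i(\pred r)+c_i(r))$ for $i<d-1$, $x_{d-1}(r)=\tfrac12$, and uses the observation that $\inv{a}_0(r)$ is the unique $j<d-1$ with $c_j(\pred r)\neq c_j(\succ r)$ to avoid an explicit parity split---but the substance, including your key point that the exit axis $i^\dagger$ of an even-rank region points back toward its predecessor rather than toward its successor, is the same.
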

\begin{proof}
All we have to prove is that the curves are vertex-continuous.

We first establish the location of the entrance and exit gates of the curves.
Recall that the curves start in the region $S(0)$ with location $\reps 0d$, and by definition $m_i(0) = 0$ for all $0 \leq i < d$. As a result, in the recursion within $S(0)$, the curve will always start in the region closest to the origin. Thus the entrance gate of the curve is at the origin. Let $r$ be $2^d - 1 = \reps 1d$. The curves end in the region $S(r)$ with location $c(r) = 1\reps 0{d-1}$, that is, the region touching the point $(1,0,...,0)$. Note that $c(\pred{r}) = 1\reps 0{d-2}1$, and therefore, $m(r) = 1\reps 0{d-2}1$. The permutation $a(r)$ is the reversal permutation. The coordinates of the last subregion within $S(r)$ are obtained by applying the permutation specified by $a(r)$ and the reflections specified by $m(r)$ to the coordinates $c(r)$: we find that the last subregion visited within $S(r)$ is again the region closest to the the point $(1,0,...,0)$. Thus, in recursion, the curve will always end in the region touching the point $(1,0,...,0)$, which is therefore the exit gate of the curve.

I claim that for $0 < r < 2^d$, the connecting gate between subregion $S(\pred{r})$ and $S(r)$ is at the point $x(r) = (x_0(r),...,x_{d-1}(r))$, where $x_i(r) = \frac12(c_i(\pred{r}) + c_i(r))$ for $0 \leq i < d-1$, and $x_{d-1}(r) = \frac12$. For ease of notation, we let $x(0) = (0,0,..,0)$ and $x(2^d) = (1,0,...,0)$ be the entrance and the exit gate, respectively, of the complete curve. (Figure~\ref{fig:3DHarmonious} shows an example in three dimensions). To prove that my claim is correct, we should establish that the curve within each region $S(r)$ starts in $x(r)$ and ends in $x(\succ{r})$. I leave it to the reader to verify this for the special cases of the first subregion ($r = 0$) and the last subregion ($r = 2^d-1$). We will now see how to verify that the curve within $S(r)$ starts in $x(r)$ and ends in $x(\succ{r})$ for $0 < r < 2^d-1$.

Observe that the location of $S(r)$ differs from the location of one of its neighbours in the ordering in coordinate $c_{d-1}$, and it differs from the location of the other neighbour in coordinate $c_j$ where $j = \inv{a}_0(r)$. Thus, $j = \inv{a}_0(r)$ is the unique $j < d-1$ such that $c_j(\pred{r}) \neq c_j(\succ{r})$. The entrance gate of subregion $S(r)$ is at $\tau(r)((0,0,...,0))$, which is the point $p = (p_0,...,p_{d-1})$ such that $p_i = \frac12(c_i(r) + m_i(r))$ for $0 \leq i < d$; the exit gate is at $\tau(r)((1,0,...,0))$, which is the point $q = (q_0,...,q_{d-1})$ such that $q_j = \frac12(c_j(r) + 1 - m_j(r))$ for $j = \inv{a}_0(r)$, and $q_i = \frac12(c_i(r) + m_i(r)) = p_i$ for all $i \in \{0, ..., d-1\} \setminus \{\inv{a}_0(r)\}$. We have to show (i) $p_i = x_i(r)$ and (ii) $q_i = x_i(\succ{r})$ for all $i \in \{0,...,d-1\}$.
For $i < d-1$, we can rewrite (i) as: $c_i(r) + m_i(r) = c_i(\pred{r}) + c_i(r)$, and therefore: $m_i(r) = c_i(\pred{r})$. This corresponds directly to the definition of $m_i(r)$. For $i < d-1$ and $i \neq j$, we can rewrite (ii) as: $c_i(r) + m_i(r) = c_i(r) + c_i(\succ{r}) = c_i(r) + c_i(\pred{r})$; again, this corresponds directly to the definition of $m_i(r)$. For $j = \inv{a}_0(r)$, we rewrite (ii) as: $c_j(r) + 1 - m_j(r) = c_j(r) + c_j(\succ{r}) = c_j(r) + 1 - c_j(\pred{r})$, and therefore: $m_j(r) = c_j(\pred{r})$.
Again, this corresponds directly to the definition of $m_i(r)$. Finally, for $i = d-1$, we rewrite both (i) and (ii) as: $c_i(r) + m_i(r) = 1$. This, too, corresponds directly to the definition of $m_i(r)$, QED.
\end{proof}

\begin{lemma}
The Butz-Moore curves and the harmonious Hilbert curves defined above are standard Hilbert curves.
\end{lemma}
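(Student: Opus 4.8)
The plan is to unpack what it means to be a standard Hilbert curve, as defined in Section~\ref{sec:standard}. Such a curve is pinned down by exactly three ingredients: the location rule $c(r) = \BRGC_d(r)$, the reflection rule $m(r)$, and the single constraint that $\inv{a}_0(r)$ (equivalently, the index $i$ with $a_i(r) = 0$) equals the prescribed connecting-axis value; the rest of the permutation $a(r)$ is left entirely free. Both the Butz-Moore and the harmonious Hilbert curves are defined with $c(r) = \BRGC_d(r)$, and the reflection rule is stated once for all standard Hilbert curves and is a function of $c$ alone, so both curves automatically share the same $m$. Hence the whole lemma reduces to verifying, separately for each curve, that $a(r)$ is a genuine permutation and that its $\inv{a}_0(r)$ agrees with the standard value for every rank $r$; nothing else about $a(r)$ must be checked.

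For the Butz-Moore curves I expect this to be immediate: there $\inv{a}_j(r)$ is defined as the cyclic rotation $(\inv{a}_0(r) + j) \bmod d$ of the standard $\inv{a}_0(r)$, which is plainly a bijection and whose zeroth entry is the prescribed value by construction.

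The substantive part is the harmonious Hilbert curve, whose permutation is given by a self-contained rule that never names $\inv{a}_0$. I would work from its inverse description: $\inv{a}(r)$ first lists the indices of $A = \{i : r_i \neq r_{d-1}\}$ in decreasing order, then the indices of $B = \{i : r_i = r_{d-1}\}$ in decreasing order. This is a rearrangement of $\{0,\dots,d-1\}$, hence a permutation, and its zeroth entry is $\max A$ when $A \neq \emptyset$ and otherwise $\max B = d-1$. The empty-$A$ case occurs exactly for $r = \reps 0d$ and $r = \reps 1d$, matching the standard prescription $\inv{a}_0(r) = d-1$ directly. For $0 < r < \reps 1d$ I would invoke Lemma~\ref{lem:XORconstruction}, $\BRGC^d_i(r) = |r_i - r_{i-1}|$ with $r_{-1} = 0$: letting $t$ be the largest index with $c_t(r) = 1$, we get $r_t \neq r_{t-1}$ while $r_j = r_{j-1}$ for all $j > t$, so $r_t = \dots = r_{d-1}$ and $r_{t-1} \neq r_{d-1}$, which shows $\max A = t-1$. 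A short parity case analysis (using Lemma~\ref{lem:changingdigit} to locate the unique digit in which $c(r)$ differs from $c(\pred r)$ when $r$ is even, or from $c(\succ r)$ when $r$ is odd) then identifies the standard $\inv{a}_0(r)$ as exactly this same $t-1$, as already observed in Section~\ref{sec:implementhilbert}.

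The main obstacle is precisely this last identification for the harmonious curve: it requires translating fluently between a rank $r$ and its Gray code $c(r)$ while keeping the index convention straight (index $0$ most significant, largest index least significant), so that the ``largest index with $r_i \neq r_{d-1}$'' produced by the permutation rule is seen to coincide with the connecting-axis index forced by the reflection rule. Once that coincidence is confirmed for both parities, everything else is routine bookkeeping.
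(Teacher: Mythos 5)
Your proposal is correct and follows essentially the same route as the paper: the Butz--Moore case is dismissed as trivial, and for the harmonious curves one checks the base case $r\in\{\reps 0d,\reps 1d\}$ and then, for $0<r<\reps 1d$, uses Lemmas \ref{lem:XORconstruction} and~\ref{lem:changingdigit} to match the index singled out by the permutation rule with the one forced by the standard definition. The only difference is presentational: you compute $\inv{a}_0(r)$ from the inverse-permutation description (as $\max\{i: r_i\neq r_{d-1}\}$, located via the largest index with $c_t(r)=1$), whereas the paper starts from the index prescribed by the standard definition and verifies $a_i(r)=0$ there directly.
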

\begin{proof}
What we have to prove is that the permutations $a$ in the definition of the Butz-Moore curves and the harmonious Hilbert curves satisfy the definition of $\inv{a}_0$ in standard Hilbert curves.

For the Butz-Moore curves this is trivial, as their permutations are defined in terms of a given $\inv{a}_0$.

For the harmonious Hilbert curves the proof is more involved. First, when $r = 0$ or $r = \reps 1d$, all bits of $r$ are equal. Thus, by the definition of harmonious Hilbert curves, $a_{d-1}(r) = (d - 1) - (d - 1) = 0$, which is indeed consistent with the definition of standard Hilbert curves. Second, consider the case when $0 < r < 2^d - 1$ and $r$ is odd, and let $i$ have the lowest value such that $r_i \neq \succ{r}_i$. Then $i < d-1$, $r_i = 0$, and $r_{i+1},...,r_{d-1} = 1$. By the definition of harmonious Hilbert curves, we will have $a_i(r) = 0$, and by the definition of standard Hilbert curves, we will have $c_i(r) \neq c_i(\succ{r})$, and thus, indeed, $a_i(r) = 0$. Third, consider the case when $0 < r < 2^d - 1$ and $r$ is even, and let $i$ have the lowest value such that $r_i \neq \pred{r}_i$. Then $i < d-1$, $r_i = 1$, and $r_{i+1},...,r_{d-1} = 0$. By the definition of harmonious Hilbert curves, we will have $a_i(r) = 0$, and by the definition of standard Hilbert curves, we will have $c_i(r) \neq c_i(\pred{r})$, and thus, indeed, $a_i(r) = 0$.
\end{proof}

\begin{corollary}
The Butz-Moore curves and the harmonious Hilbert curves defined above are mono-Hilbert curves.
\end{corollary}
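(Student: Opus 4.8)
The plan is to observe that this corollary is an immediate consequence of the two lemmas that precede it, so the proof amounts to a short composition rather than to any new argument. Recall that, by definition (Section~\ref{sec:classes}), a mono-Hilbert curve is simply a 2-regular vertex-continuous mono-curve, and that the Butz-Moore and harmonious Hilbert curves are already specified as 2-regular mono-curves with $c(r) = \BRGC_d(r)$; the only defining property that still requires justification for each of them is therefore vertex-continuity.

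First I would invoke the second lemma above, which establishes that the Butz-Moore and harmonious Hilbert curves both satisfy the constraint on $\inv{a}_0$ (together with the fixed reflection rule $m(r)$) that defines standard Hilbert curves, so that both families fall inside the class of standard Hilbert curves. Then I would invoke the first lemma, which proves that every standard Hilbert curve is vertex-continuous---and hence a mono-Hilbert curve---by locating the entrance gate at the origin, the exit gate at $(1,0,\dots,0)$, and the connecting gates $x(r)$ between consecutive subregions. Composing these two statements yields the claim directly: the two families are standard Hilbert curves, and standard Hilbert curves are mono-Hilbert curves.

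Since all the substance has already been carried out in the two lemmas, there is no genuine obstacle remaining at the level of the corollary itself. If anything, the real difficulty lives in the second lemma, specifically in the combinatorial verification that the harmonious rule---``move the indices $i$ with $r_i \neq r_{d-1}$ to the front, reverse their order, and reverse the remaining indices as well''---forces $a_i(r) = 0$ at exactly the index predicted by the standard-curve definition in each parity case for $r$; the Butz-Moore half is immediate because those permutations are defined through $\inv{a}_0$ to begin with.
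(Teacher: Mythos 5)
Your proposal matches the paper exactly: the corollary is stated without further argument precisely because it is the composition of the two preceding lemmas (Butz-Moore and harmonious Hilbert curves are standard Hilbert curves, and standard Hilbert curves are mono-Hilbert curves). Your identification of where the real work lies---in verifying the $\inv{a}_0$ condition for the harmonious permutation rule---is also accurate.
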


\subsection{Proof of interdimensional consistency}
\label{sec:proofhilbert}
In this section we will prove that the harmonious Hilbert curves are interdimensionally consistent. Let $f_d$ denote the $d$-dimensional harmonious Hilbert curve. As observed in Section~\ref{sec:visibleorders}, to prove that the harmonious Hilbert curves are interdimensionally consistent, it suffices to show that each curve $f_d$ (for $d > 1$) shows $f_{d-1}$ on each front face. In fact, we will prove something stronger, namely that each curve $f_d$ shows $f_{d-1}$ on each front face, and each back face $F^1_i$ with $0 \leq i < d-1$ shows $f_{d-1}$ mirrored in dimension $i$ (note that this dimension corresponds to dimension $i+1$ in the $d$-dimensional space). This leaves $F^1_{d-1}$ as the only face for which we do not prove anything about the visible order.

Below, we write $f$ for $f_d$, and by $c$, $a$, $m$, $o$, $M$, and $\tau$ we denote the location, permutation, reflection, translation, transformation matrix and transformation functions that define $f$.
We write $f'$ for $f_{d-1}$, and by $c'$, $a'$, $m'$, $o'$, $M'$ and $\tau'$ we denote the location, permutation and reflection, translation, transformation matrix and transformation functions that define $f'$.
The proof that each curve $f$ shows $f'$ (sometimes mirrored, as specified above) on all faces except $F^1_{d-1}$ goes by induction on increasing level of refinement. More precisely, for a level of refinement $\depth$, we consider the recursive balanced subdivision of the $(d-1)$-dimensional unit hypercube $U'$ into $2^{\depth(d-1)}$ regions, and prove the following induction hypothesis:\begin{itemize}
\item[(i)] the visible order of $f$ on any face $F' = F^0_i$ with $0 \leq i \leq d-1$ visits these regions in the same order as $f'$;
\item[(ii)] the visible order of $f$ on any face $F' = F^1_i$ with $0 \leq i < d-1$ visits these regions in the same order as $f'$ mirrored in dimension $i$.
\end{itemize}
The proof follows the same general approach as the proof of consistency for the sets of mono-Wunderlich curves discussed in Section~\ref{sec:ternary}.

\paragraph{Base case $\depth = 1$}
As a base case, consider refinement level $\ell = 1$, and a face $F' = F^k_i$. Recall that $f$ is defined based on subdivision of a $d$-dimensional unit hypercube---which we will denote $U$---into $2^d$ regions $S(0),...,S(\reps 1d)$. This induces a subdivision of $F'$ into $2^{d-1}$ regions $F'(0),...,F'(\reps 1{d-1})$, each of which is a $(d-1)$-dimensional face of one of the regions $S(0),...,S(\reps 1d)$. Let $F'(0),...,F'(\reps 1{d-1})$ be indexed in the order in which the $d$-dimensional regions that contain them appear in $S(0),...,S(\reps 1d)$. Thus, there is a monotonously increasing function $\sigma: \{0,...,\reps 1{d-1}\} \rightarrow \{0,...,\reps 1d\}$ such that $F'(r')$ is a face of $S(\sigma(r'))$. Note that for $r' \in \{0,...,\reps 1{d-1}\}$, the locations of the regions $F'(r')$ within $U'$ are given by $\takeout{i}c(\sigma(r')) = \takeout{i}\BRGC^d(\sigma(r'))$. Furthermore, the co-domain of $\sigma$ consists of exactly those values $r$ such that $c_i(r) = \BRGC^d_i(r) = k$. Thus, the locations of the regions $F'(r')$ within $U'$, in order of increasing~$r'$, form the sequence $\reduction{i}{k}\BRGC^d$, which, by Lemma~\ref{lem:binarydigitremoval}, is exactly the sequence $\BRGC^{d-1}$ if $k = 0$, and the sequence $\mirror{i}\BRGC^{d-1}$ if $k = 1$ and $i < d-1$. Hence, the visible order of $f$ on any face $F' = F^k_i$ visits the regions of refinement level $\depth = 1$ in the same order as $f'$, or $f'$ mirrored in dimension $i$, respectively, thus establishing the base case for our induction.

\paragraph{Induction}
For the inductive step, we have to be more careful than with the mono-Wunderlich curves. Before, all we had to do is proving that the transformations that map the curve $f$ through $U$ to the section of $f$ through $S(\sigma(r'))$, also transform the order $f'$ shown on a face $F^*(r')$ into the section of $f'$ that should be visible on $F'(r')$. More concretely, we showed that we had to prove that we have $\takeout{i}m(\sigma(r')) = m'(r')$ (Equation~\ref{eq:matchreflections}) and $\takeout{i}a(\sigma(r')) = a'(r')$ (Equation~\ref{eq:matchpermutations}). However, with the harmonious Hilbert curves, not all faces show $f'$: some faces show a mirrored version, and one face ($F^1_{d-1}$) does not show $f'$ in any way. Denote the face of $U$ that maps to $F'(r')$ by $F^*(r')$. We will use $r$ as a shorthand for $\sigma(r')$. Recall that we now have:\[F^*(r') = F^{|k - m_i(r)|}_{a_i(r)}.\]
Therefore, we have to show:\begin{itemize}
\item that we never have $k \neq m_i(r)$ while $a_i(r) = d-1$ (never use the `broken' face);
\item that the reflections $m(r)$ undo any mirroring of the order on $F^*(r')$ and apply any necessary mirroring on $F'$;
\item that the permutations match: $\takeout{i}a(r) = a'(r')$ (this is no different from the situation with mono-Wunderlich curves in Section~\ref{sec:proofternary}).
\end{itemize}
Recall that, by the definition of $F'$ as $F^k_i$, we have $c_i(r) = k$.

We will first establish that we have $r' = \takeout{i}r$.
If $k = 0$, then, by definition of $r = \sigma(r')$, we have $c(r) = \putin{i}{0}c'(r')$, and hence, $\BRGC^d_i(r) = 0$ and $\takeout{i}\BRGC^d(r) = \BRGC^{d-1}(r')$. By Lemma~\ref{lem:XORconstruction} we now have $r_i = r_{i-1}$, and thus, by Lemma~\ref{lem:binarydigitremoval}, part~(i), case~(a), we have $r' = \takeout{i}r$.
Similarly, if $k = 1$ and $i < d-1$, then, by definition of $r = \sigma(r')$, we have $c(r) = \putin{i}{1}\mirror{i}c'(r')$, and hence, $\BRGC^d_i(r) = 1$ and $\takeout{i}\BRGC^d(r) = \mirror{i}\BRGC^{d-1}(r')$. By Lemma~\ref{lem:XORconstruction} we now have $r_i \neq r_{i-1}$, and thus, by Lemma~\ref{lem:binarydigitremoval}, part~(i), case~(c), we have $r' = \takeout{i}r$.

\paragraph{Never the broken face}
For the sake of contradiction, suppose $k \neq m_i(r)$ and $a_i(r) = d-1$. We distinguish three cases: $r = 0$; $r > 0$ and $i < d-1$; and $i = d-1$.

First, consider the case of $r = 0$. Then $k = c_i(r)$ and $m_i(r) = c_i(r)$ by definition, so we cannot have $k \neq m_i(r)$.

Second, consider the case of $r > 0$ and $i < d-1$. By the definition of the permutations $a(r)$ in harmonious Hilbert curves, it follows from $a_i(r) = d-1$ that $r_i = r_{d-1}$. However, by the definition of the reflections $m_i(r)$ in standard Hilbert curves, we also have $c_i(\pred{r}) = m_i(r)$. Since $c_i(r) = k \neq m_i(r) = c_i(\pred{r})$, this implies (by Lemma~\ref{lem:changingdigit}) that $i$ is the smallest index such that $r_i \neq \pred{r}_i$ and therefore $r_i = 1$ and $r_{d-1} = 0$. This contradicts the observation that $r_i = r_{d-1}$, so we cannot have $k \neq m_i(r)$ and $a_i(r) = d-1$.

Third, consider the case of $i = d-1$. In this case, our induction hypothesis only concerns the case $k = 0$, and thus we have $c_{d-1}(r) = 0$, $a_{d-1}(r) = d-1$ and $m_{d-1}(r) = 1$. By the definition of the permutations $a(r)$ in harmonious Hilbert curves, $a_{d-1}(r) = d-1$ if and only if $r_i \neq r_{d-1}$ for all $i < d-1$. But then, by Lemma~\ref{lem:XORconstruction}, $c_{d-1}(r) = |r_{d-1} - r_{d-2}| = 1$, which contradicts our observation that $c_{d-1}(r) = 0$. So in this case we cannot have $k \neq m_i(r)$ and $a_i(r) = d-1$ either.

\paragraph{Reflections}
We need to prove that the reflections $m(r)$ undo any mirroring of the order on $F^*(r')$ and apply any necessary mirroring on $F'$.
Let $j$ be $a_i(r)$ and let $g$ be $\inv{a}_{j+1}(r)$, that is, $g$ is defined such that $a_g(r) = a_i(r) + 1$. We define $\mathring{m}(r)$ as the reflections of $S(r)$ with application (or equivalently, cancellation) of the mirroring of the order of $f'$ as shown on $F^*(r')$. Note that the order of $f'$ as shown on $F^*(r')$ is mirrored only if $m_i(r) \neq k$, and if so, it is mirrored in coordinate $a_i(r)$, which is coordinate $a_i(r) + 1$ in $d$-dimensional space on $F^*(r')$, and this coordinate is mapped to coordinate $g$ on $F'$. Therefore, we have $\mathring{m}(r) = m(r)$ if $m_i(r) = k$, and $\mathring{m}(r) = \mirror{g}m(r)$ if $m_i(r) \neq k$. Furthermore, we define $\mathring{m}'(r')$ as the reflections of $F'(r)$ with the necessary mirroring of $F'$, that is, $\mathring{m}'(r') = m'(r')$ if $k = 0$, and $\mathring{m}'(r') = \mirror{i}m'(r')$ if $k = 1$. What we need to prove is now: $\takeout{i}\mathring{m}(r) = \mathring{m}'(r')$.

We define $\mathring{c}$ as follows: if $m_i(r) = k$, then $\mathring{c} = c$; if $m_i(r) \neq k$, then $\mathring{c} = \mirror{g}c$.
Thus, $\mathring{m}(r)$ is expressed in terms of $\mathring{c}(r)$ and $\mathring{c}(\pred{r})$ just as $m(r)$ is expressed in terms of $c(r)$ and $c(\pred{r})$ in the definition of standard Hilbert curves.
Similarly, we define $\mathring{c}'$ as follows: if $k = 0$, then $\mathring{c}' = c'$; if $k = 1$, then $\mathring{c}' = \mirror{i}c'$. Thus, $\mathring{m}'(r')$ is expressed in terms of $\mathring{c}'(r')$ and $\mathring{c}'(\pred{r}')$ just as $m(r)$ is expressed in terms of $c(r)$ and $c(\pred{r})$ in the definition of standard Hilbert curves.

Note that $\putin{i}{k}\mathring{c}'$ gives the locations of the subregions of $U'$ that touch $F'$, in the order in which these subregions are visited by $f$; thus we have, for any $q' \in \{0,...,\reps 1{d-1}\}$:\begin{equation}\label{eq:takeoutifromc}
\mathring{c}'(q') = \takeout{i}c(\sigma(q')).
\end{equation}

We are now ready to prove $\takeout{i}\mathring{m}(r) = \mathring{m}'(r')$. We distinguish four cases:\begin{enumerate}
\item[(1)] $m_i(r) = k$ and $r = 0$;
\item[(2)] $m_i(r) = k$ and $r > 0$ and $i < d-1$;
\item[(3)] $m_i(r) = k$ and $r > 0$ and $i = d-1$;
\item[(4)] $m_i(r) \neq k$ and $r' = 0$;
\item[(5)] $m_i(r) \neq k$ and $r' > 0$.
\end{enumerate}
In case (1), we have $\mathring{m}(r) = m(r)$ (because $m_i(r) = k$), we have $c_i(r) = 0$ (because $r = 0$), and thus, $k = 0$ and $\mathring{m}'(r') = m'(r')$. Since $r' = \takeout{i}r = 0$ we have $\mathring{m}(r) = m(r) = c(r) = 0$ and $\mathring{m}'(r') = m'(r') = c'(r') = 0$, and therefore $\takeout{i}\mathring{m}(r) = \mathring{m}'(r')$, QED.

In case (2), we have $\mathring{m}(r) = m(r)$ (because $m_i(r) = k$).
We have $c_i(\pred{r}) = m_i(r) = k$. Thus, the subregion preceding $S(r)$ in the $d$-dimensional order also contributes the region preceding $F'(r')$ on $F'$, of which the location is given by $\mathring{c}'(\pred{r}')$ (note that if $k = 1$, then $f'$ as visible on $F'$ is mirrored in dimension $i$). Therefore we have $\mathring{c}'(\pred{r}') = \takeout{i}c(\pred{r})$ and $\mathring{c}'(r') = \takeout{i}c(r)$. Therefore, for $z < i$ we have: $\mathring{m}'_z(r') = \mathring{c}'_z(\pred{r}') = c_z(\pred{r}) = m_z(r)$; for $i \leq z < d-2$ we have: $\mathring{m}'_z(r') = \mathring{c}'_z(\pred{r}') = c_{z+1}(\pred{r}) = m_{z+1}(r)$; and finally we have $\mathring{m}'_{d-2}(r') = 1 - \mathring{c}'_{d-2}(r') = 1 - c_{d-1}(r) = m_{d-1}(r)$. Thus, $\mathring{m}'(r') = \takeout{i}m(r) = \takeout{i}\mathring{m}(r)$, QED.

In case (3), by the definition of the permutations $m$ in a standard Hilbert curve and the definition of the case $m_i(r) = k$, we have $c_{d-1}(r) = 1 - m_{d-1}(r) = 1 - k$, but this contradicts $c_i(r) = k$, so this cannot occur.

In case (4), we have $m_i(r) \neq k$, and therefore $\mathring{m}(r) = \mirror{g}m(r)$, where $g$ is such that $a_g(r) = a_i(r) + 1$.
Since $r' = 0$, we have that $r$ equals the smallest $q$ such that $c_i(q) = k$. If $k = 0$, this implies $r = 0$, but then, by the definition of reflections in a standard Hilbert curve, we would have $m_i(r) = c_i(r) = 0 = k$, contradicting the definition of case (4). Therefore, $k = 1$, and the induction hypothesis only concerns cases with $i < d-1$. By Lemma~\ref{lem:XORconstruction}, we now have $r = \reps{0}{i}10\reps{0}{d-i-2}$ and $\pred{r} = \reps{0}{i}01\reps{1}{d-i-2}$. Applying the definitions of the permutations $a$ and the reflections in $m$ in standard and harmonious Hilbert curves,
we find $a_i(r) = 0$, $g = \inv{a}_1(r) = d-1$, $c(r) = \reps{0}{i}11\reps{0}{d-i-2}$, $c(\pred{r}) = \reps{0}{i}01\reps{0}{d-i-2}$;
if $i < d-2$ we have $m(r) = \reps{0}{i}01\reps{0}{d-i-3}1$; if $i = d-2$ we have $m(r) = \reps{0}{i}00$; therefore we have $\mathring{m}(r) = \mirror{d-1}m(r) = \reps{0}{i}01\reps{0}{d-i-2}$ in all cases, and $\takeout{i}\mathring{m}(r) = \reps{0}{i}1\reps{0}{d-i-2} = \mirror{i}\reps{0}{d-1} = \mathring{m}'(0)$, QED.

In case (5), we have $m_i(r) \neq k$, and therefore $\mathring{m}(r) = \mirror{g}m(r)$, where $g$ is such that $a_g(r) = a_i(r) + 1$.
Since $r' > 0$, we also have $r > 0$, and both $\pred{r}$ and $\pred{r}'$ exist. We have $c_i(\pred{r}) = m_i(r) \neq k$. Thus, the subregion preceding $S(r)$ in the $d$-dimensional order does not contain the region preceding $F'(r')$ on $F'$. Let $p'$ be $\pred{r}'$ and let $p$ be $\sigma(p')$. Note that we have $c_i(p) = k$, $c_i(\succ{p}),...,c_i(\pred{r}) = 1 - k$, $c_i(r) = k$, and:\begin{equation}\label{eq:nodiffexcepti}
\takeout{i}c(\pred{r}) = \takeout{i}c(r).
\end{equation}
By Lemma~\ref{lem:changingdigit}, as the rank increases from $p$ to $r$, digit $i$ must be the most significant digit that changes from 0 to 1 between $p$ and $\succ{p}$, and the next time it changes from 0 to 1 is between $\pred{r}$ and $r$. Therefore, between $\succ{p}$ and $\pred{r}$, there is a unique more significant digit $h$ that changes from 0 to 1, and it does so exactly once. Thus we get the following values for the digits of $p$, $\succ{p}$, $\pred{r}$ and $r$:

\begin{centering}
\addvspace{.5\baselineskip}
\leavevmode\begin{tabular}{l||ccc|c|c}
$x$ & $x_0,...,x_{h-1}$ & $x_h$ & $x_{h+1},...,x_{i-1}$ & $x_i$ & $x_{i+1},...,x_{d-1}$ \\
\hline
$p$        & anything    & 0 & 1 & 0 & 1 \\
$\succ{p}$ & same as $p$ & 0 & 1 & 1 & 0 \\
$\pred{r}$ & same as $p$ & 1 & 0 & 0 & 1 \\
$r$        & same as $p$ & 1 & 0 & 1 & 0 \\
\end{tabular}

\addvspace{.5\baselineskip}
\end{centering}

\noindent Note that we may have $h = i-1$ or $i = d-1$, so digits $x_{h+1},...,x_{i-1}$ and digits $x_{i+1},...,x_{d-1}$ might not exist.
When rank digit $h$ changes as the rank increases from $p$ to $r$, the corresponding digit in the location changes as well. Thus, $h$ must have the unique value such that $\mathring{c}'_h(p') \neq \mathring{c}'_h(r')$, and we get:\begin{equation}\label{eq:mirrorcprime}
\mathring{c}'(p') = \mirror{h}\mathring{c}'(r')
\end{equation}
Since $c_i(p) = c_i(r)$, we have that $h$ also has the unique value such that $c_h(p) \neq c_h(r)$, so:\begin{equation}\label{eq:mirrorc}
c(p) = \mirror{h}c(r)
\end{equation}
Applying the definition of the permutations $a$ in harmonious Hilbert curves to the numbers in the table, we find that we have $a_h(r) = a_i(r) + 1$. Thus, since $m_i(r) = 1 - k$, we have $\mathring{c} = \mirror{h}c$.
Note that $p' = \takeout{i}p$ (just as $r' = \takeout{i}r$). Using Equations \ref{eq:takeoutifromc}, \ref{eq:nodiffexcepti}, \ref{eq:mirrorcprime} and~\ref{eq:mirrorc} and $h < i$, we get:\begin{equation}
\mathring{c}'(\pred{r}') = \mathring{c}'(p') = \mirror{h}\mathring{c}'(r') = \takeout{i}\mirror{h}c(r) = \takeout{i}\mirror{h}c(\pred{r}) = \takeout{i}\mathring{c}(\pred{r}).
\end{equation}
Thus, for $z < i$ we have: $\mathring{m}'_z(r') = \mathring{c}'_z(\pred{r}') = \mathring{c}_z(\pred{r}) = \mathring{m}_z(r)$; for $i \leq z < d-2$ we have $\mathring{m}'_z(r') = \mathring{c}'_z(\pred{r}') = \mathring{c}_{z+1}(\pred{r}) = \mathring{m}_{z+1}(r)$; and finally we have $\mathring{m}'_{d-2}(r') = 1 - \mathring{c}'_{d-2}(r') = 1 - c_{d-1}(r) = 1 - \mathring{c}_{d-1}(r) = \mathring{m}_{d-1}(r)$. Thus, $\mathring{m}'(r') = \takeout{i}\mathring{m}(r)$, QED.

\paragraph{Permutations}
Recall that we have $r' = \takeout{i}r$. Therefore, if $i < d-1$, we have $r'_{d-2} = r_{d-1}$. If $i = d-1$, the induction hypothesis only concerns the case of $k = 0$, that is, $c_{d-1}(r) = 0$, and therefore, by Lemma~\ref{lem:XORconstruction}, we have $r_{d-1} = r_{d-2} = r'_{d-2}$. Thus, we always have $r'_{d-2} = r_{d-1}$.

Recall that the respective inverses of $a'(r')$ and $a(r)$ (where $r = \sigma(r')$) can be constructed from the identity permutation by moving all indices $h$ such that $r'_h = r'_{d-2}$, or $r_h = r_{d-1} = r'_{d-2}$, respectively, to the front, reversing the order of those indices, and reversing the order of the remaining indices. Inserting a digit in $r'$ at position $i$ to obtain $r = \sigma(r')$, does not affect which of the other digits are moved to the front, since this only depends on the value of $r'_{d-2}$. Thus, inserting a digit in $r'$ at position $i$ to obtain $r$, moving the index $i$ in $\inv{a}(r)$, and subsequently removing index $i$ from $\inv{a}(r)$ (which is what $\takeout{i}a(r)$ does), does not affect the order of the remaining indices relative to each other. Therefore we have $a'(r') = \takeout{i}a(r)$, QED.

\subsection{Orientation of the Butz-Moore and harmonious Hilbert curves}

By Lemma~\ref{lem:neutral}, a $d$-dimensional mono-Hilbert curve has neutral orientation if and only if there is a number~$\depth$, such that each of the $d!$~possible permutations of $d$~numbers can be constructed as the composition of $\depth$~permutations from~$a$. Trivially, each 1-dimensional curve has neutral orientation. Below we analyse whether the $d$-dimensional curves for $d \geq 2$ have neutral orientation.

\paragraph{Butz-Moore curves} In the Butz-Moore curves, all permutations are rotations. Only if $d = 2$, this suffices to create all possible permutations by composition. Therefore, only the two-dimensional Butz-Moore curve, that is, only the original Hilbert curve, has neutral orientation; the higher-dimensional Butz-Moore curves have not.

\paragraph{Harmonious Hilbert curves} In the harmonious Hilbert curves, $a(\reps{0}{d-2}10)$ is the permutation that swaps the first two elements and then reverses the whole sequence, and $a(\reps0{d-i}\reps{1}i)$ is the permutation that rotates the sequence $i$ steps to the right and then reverses the whole sequence. This allows us to construct any swap of two adjacent elements at positions $j-1$ and $j$ by composing the four permutations $a(\reps0{j+1}\reps{1}{d-j-1})$, $a(\reps{0}{d-2}10)$, $a(\reps0{d-j-1}\reps{1}{j+1})$ and $a(\reps0d)$. Since any permutation can be created from any other by at most $d(d-1)/2$ swaps of adjacent elements, we get that any permutation can be created from any other by composing at most $2d(d-1)$ permutations from $a$. However, to satisfy the condition in Lemma~\ref{lem:neutral}, there should be a depth $\depth$ such that each permutation can be created from the identity permutation by composing \emph{exactly} $\depth$ permutations. In the harmonious Hilbert curves, this can indeed be done for $\depth = 2d(d-1)$, since any permutation that is produced from $k < d(d-1)/2$ swaps can be produced by a sequence of $4k$ permutations producing the swaps, followed by $2d(d-1) - 4k$ reversal permutations $a(\reps0d)$. It follows that the harmonious Hilbert curves have neutral orientation.

Note that the analysis given above is by no means tight: in reality much less than $2d(d-1)/2$ permutations from $a$ may suffice to produce all possible permutations. For example, in four dimensions three permutations suffice, rather than $2d(d-1) = 24$.

\section{\Halffix- and diagonal-extradimensional space-filling curves}
\label{sec:composition}

\subsection{Definition of \halffix- and diagonal-extradimensionality}
Recall that a $d$-dimensional space-filling curve $f$ is \emph{extradimensional to} a $d'$-dimensional space-filling curve $f'$ with co-domain $U'$, if the following holds for any valid ordering $\inv{f'}$ of $f'$ and \emph{any} choice of $\mu$: there is a valid ordering $\inv{f}$ of $f$ such that for any pair of points $a,b\in U'$ we have $\inv{f'}(a) < \inv{f'}(b)$ if and only if $\inv{f}(\lift\mu(a)) < \inv{f}(\lift\mu(b))$~(Definition~\ref{def:extradimensional}). Here $\lift\mu$ is a function that defines the correspondence between points in the co-domains of $f$ and $f'$: the sequence of coordinates of $\lift\mu(p)$ is simply the sequence of coordinates of $p$, with zeros inserted in certain places according to $\mu$. In this section we will consider two variants of extradimensionality: \halffix-extradimensionality and diagonal-extradimensionality.

In the case of \halffix-extradimensionality, $d = 2d'$ and $\mu$ is restricted to the identity function $\identity$, defined by $\identity(i) = i$. Thus, the coordinates of $\lift\mu(p)$ are simply the coordinates of $p$ with $d'$ zeros added at the end:

\begin{definition}\label{def:halfextradimensional}
A $2d'$-dimensional space-filling curve $f$ is \emph{\halffix-extradimensional to} a $d'$-dimensional space-filling curve $f'$ with co-domain $U'$, if the following holds for any valid ordering $\inv{f'}$ of $f'$: there is a valid ordering $\inv{f}$ of $f$ such that for any pair of points $a,b\in U'$ we have $\inv{f'}(a) < \inv{f'}(b)$ if and only if $\inv{f}(\lift\identity(a)) < \inv{f}(\lift\identity(b))$.
\end{definition}

Given a $d'$-dimensional point $p = (p_0,...,p_{d'-1})$, we define $\diaglift(p)$ as the $2d'$-dimensional point $(q_0,...,q_{2d'-1})$ such that $q_i = p_{(i \bmod d')}$. In other words, $\diaglift(p)$ is obtained from $p$ by concatenating two copies of the coordinate sequence of $p$. For a set of points $P$, let $\diaglift(P)$ be the set $\bigcup_{p \in P} \diaglift(p)$.

\begin{definition}\label{def:diagextradimensional}
A $2d'$-dimensional space-filling curve $f$ is \emph{diagonal-extradimensional to} a $d'$-dimensional space-filling curve $f'$ with co-domain $U'$, if the following holds for any valid ordering $\inv{f'}$ of $f'$: there is a valid ordering $\inv{f}$ of $f$ such that for any pair of points $a,b\in U'$ we have $\inv{f'}(a) < \inv{f'}(b)$ if and only if $\inv{f}(\diaglift(a)) < \inv{f}(\diaglift(b))$.
\end{definition}

In previous work, we found that certain applications call for space-filling curves that are \halffix- and/or diagonal-extradimensional to a given space-filling curve with good locality-preserving properties~\cite{Haverkort4D}. For details, see Inset~\ref{box:rtrees}.

\begin{inset}
\caption{Building R-trees with extradimensional space-filling curves.}
\footnotesize
An R-tree is a balanced tree structure in which the leaves store bounding boxes of $d$-dimensional objects. Each object is stored exactly once, and typically each leaf contains the bounding boxes of a fair number of objects: enough to fill a page on disk. As the bounding box of an object, the smallest axis-aligned $d$-dimensional box that contains the object is used. Each leaf or other node of the tree also has a bounding box of its own, which is the smallest axis-aligned $d$-dimensional box that encloses all objects stored in the subtree rooted at that node. In addition, each non-leaf node stores the bounding boxes of its children. This structure can be used to answer several types of queries efficiently~\cite{proefschrift}, provided the object bounding boxes are distributed over the leaves in such a way that leaves have small bounding boxes.

One heuristic way to achieve this, is by the use of space-filling curves~\cite{Kamel}. To determine the ordering of two-dimensional axis-parallel rectangles (object bounding boxes) in the tree, we represent the rectangles as four-dimensional points, which are subsequently sorted by the order of their positions along a four-dimensional space-filling curve. Two ways to represent rectangles $[x_{\mathrm{min}},x_{\mathrm{max}}] \times [y_{\mathrm{min}},y_{\mathrm{max}}]$ as four-dimensional points were considered. The \emph{xy-representation} would use the point:\[ \left(x_{\mathrm{min}},y_{\mathrm{min}},x_{\mathrm{max}},y_{\mathrm{max}}\right),\]
whereas the \emph{cd-representation} (centre/dimensions) would use the point:\[
\left(\frac{x_{\mathrm{min}} + x_{\mathrm{max}}}2, \frac{y_{\mathrm{min}} + y_{\mathrm{max}}}2, x_{\mathrm{max}} - x_{\mathrm{min}}, y_{\mathrm{max}} - y_{\mathrm{min}}\right).\]

Since in practice, many rectangles are small, xy-representation leads to many points where the first and the third coordinate are almost identical, and the second and the fourth coordinate are almost identical. The other solution, cd-representation, leads to many points whose last two coordinates are almost zero. In previous work we argued and showed evidence that it is important that such four-dimensional points are ordered as much as possible as if the represented rectangles were simply ordered according to the positions of their centre points along a good two-dimensional space-filling curve~\cite{Haverkort4D}. This can be achieved by using four-dimensional curves that are diagonal- or \halffix-extradimensional to Hilbert's two-dimensional curve.

The results in this paper could be used to try this approach to building R-trees in higher dimensions (for example, using six-dimensional curves for three-dimensional boxes), or to experiment with 3-regular curves instead of 2-regular curves for this application.
\label{box:rtrees}
\end{inset}

Any $2d'$-dimensional Peano curve, coil curve, half-coil curve, Meurthe curve, or harmonious Hilbert curve (see Sections \ref{sec:ternary} and~\ref{sec:hilbert}) is \halffix-extradimensional, but generally not diagonal-extradimensional, to the $d'$-dimensional Peano curve, coil curve, half-coil curve, Meurthe curve, or harmonious Hilbert curve, respectively. In this section we will show how, given any $d'$-dimensional $\base$-regular order-preserving vertex-continuous mono-curve $f'$ with $\base \in \{2,3\}$ and $f'(0) = (0,0,...,0)$, one can construct a $2d'$-dimensional $\base$-regular vertex-continuous mono-curve $f$ that is both \halffix-extradimensional and diagonal-extra-dimensional to $f'$. This may provide a useful alternative to the curves from the previous sections if diagonal-extradimensionality is desired, or if one wants to base the $2d'$-dimensional curve on a $d'$-dimensional curve that is different from those presented in the previous sections.

\subsection{Monotone space-filling curves} 
As a building block in our construction of \halffix- and diagonal-extradimensional curves, we will use two-dimensional curves that have certain special properties. We will denote these curves by $h_2$ and $h_3$, where $h_2$ is the two-dimensional Hilbert curve (Figure~\ref{fig:monotone}, left), and $h_3$ is the {\cyrrm {Ya}}-curve (Figure~\ref{fig:monotone}, right). Lemma \ref{lem:HilbertIsEdgeMonotone} states that $h_2$ and $h_3$ have the properties which we will need to prove the correctness of our construction in Section~\ref{sec:halffixconstruction}.

\begin{figure}
\centering
\hbox to \hsize{\hfill
\includegraphics[width=\hsize]{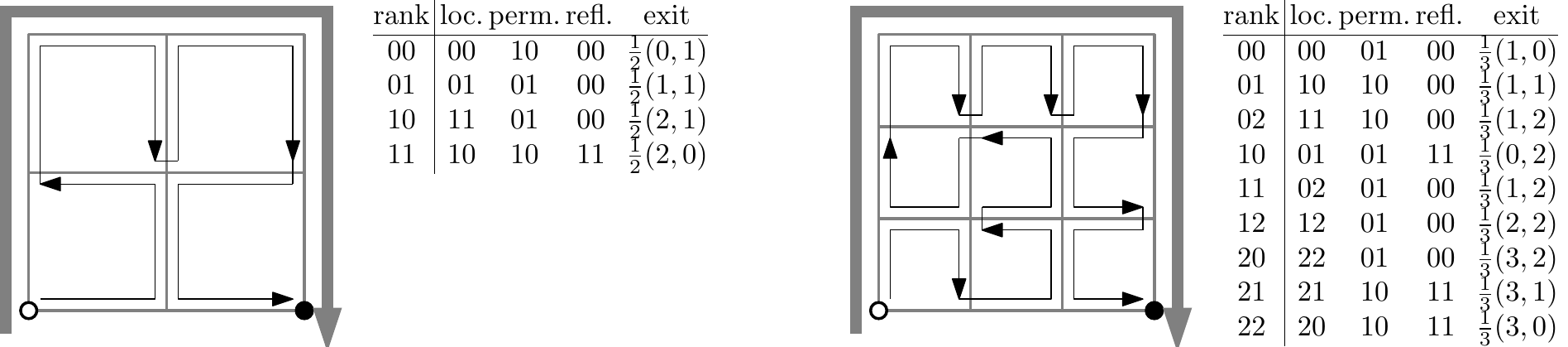}\hfill
}
\caption{Hilbert's curve (left) and the {\cyrrm {Ya}}-curve (right), defined graphically and by means of a table.}
\label{fig:monotone}
\end{figure}

\begin{definition}\label{def:monotone}
An ordering $\inv{f}$ of a space-filling curve $f$ is \emph{monotone} from $a$ to $b$, if $\inv{f}(p)$ increases monotonically as a point $p$ moves in a straight line from $a$ to $b$.
\end{definition}

\begin{lemma}
\label{lem:HilbertIsEdgeMonotone}
\label{lem:HilbertIsDiagMonotone}
The curves $h_2$ and $h_3$ have valid orderings $\inv{h_2}$ and $\inv{h_3}$ that are monotone from $(0,0)$ to $(1,0)$ and from $(0,0)$ to $(1,1)$.
\end{lemma}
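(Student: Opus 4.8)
The plan is to prove, for each curve $h\in\{h_2,h_3\}$, that the natural ordering $\inv{h}$ (with boundary ties resolved consistently) is monotone along every segment in a small family of directed corner-to-corner segments that contains the two segments named in the lemma and is closed under the self-similar structure of the curve. Monotonicity of $\inv{h}$ along a fixed segment $s$ from $a$ to $b$ is equivalent to the statement I would prove by induction on the recursion depth $\ell$: whenever $p,q\in s$ lie in distinct cells of the depth-$\ell$ subdivision and $p$ is closer to $a$ than $q$, then $\inv{h}(p)<\inv{h}(q)$. Letting $\ell\to\infty$ then yields Definition~\ref{def:monotone}, since any two distinct points on $s$ are separated at some depth.

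For the inductive step I would use the fact from Section~\ref{sec:numerical} that $\inv{h}$ restricted to a first-level subregion $S(r)$ equals an affine-increasing reparametrisation of $\inv{h}\circ\inv{\tau}(r)$. Hence monotonicity of $\inv{h}$ along the portion $s\cap S(r)$ is equivalent to monotonicity of the base ordering along the transformed segment $\inv{\tau}(r)(s\cap S(r))$, which is again a directed corner-to-corner segment. The step therefore reduces to two finite checks: first, that the first-level subregions met by $s$ with positive length are visited by $h$ in exactly the order in which $s$ meets them (read off from the Gray-code locations $c(r)$); and second, that each transformed sub-segment $\inv{\tau}(r)(s\cap S(r))$ again lies in my chosen family, so that the induction hypothesis applies inside each subregion.

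Concretely, for $h_2$ I would take the family consisting of the bottom edge $(0,0)\to(1,0)$, the left edge $(0,0)\to(0,1)$, the right edge $(1,1)\to(1,0)$, and the main diagonal $(0,0)\to(1,1)$. Using the four transformations $\tau(0),\dots,\tau(3)$ of the two-dimensional Hilbert rule, one checks that the bottom edge splits (in the order $S(0)$ then $S(3)$) into a left-edge segment and a right-edge segment; the left edge splits into a bottom edge and a left edge; the right edge into a right edge and a bottom edge; and the diagonal splits (in the order $S(0)$ then $S(2)$, with $S(1)$ and $S(3)$ met only at the central corner) into two copies of itself. In every case the subregion visit order agrees with the direction of $s$, so the family is closed and the induction closes. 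For $h_3$ I would run the identical program with the nine subregions and transformations of the Ya-curve of Figure~\ref{fig:monotone}, obtaining the analogous closed family.

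The main obstacle is twofold. The routine but error-prone part is assembling the correct closed family and verifying the finitely many decompositions, which is heavier for the $3$-regular curve $h_3$ with its nine subcells. The genuinely delicate point is the behaviour of $\inv{h}$ at cell corners shared by several subcells---most notably the centre $(\tfrac12,\tfrac12)$, through which the diagonal passes while the curve traverses intervening subregions not met by the diagonal. There I must exhibit a \emph{valid} ordering (using the freedom of $\inv{h}$ on the measure-zero boundary) that assigns such a crossing point a parameter value lying strictly between the values on its two sides; for the diagonal of $h_2$ this is immediate because $h_2$ actually visits $(\tfrac12,\tfrac12)$ at the $S(1)$--$S(2)$ transition, and an analogous verification is needed at the corresponding crossing points for $h_3$.
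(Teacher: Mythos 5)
Your proposal is correct and follows essentially the same route as the paper's proof: an induction on the refinement level over a small family of directed corner-to-corner segments that is closed under the first-level transformations of the curve, with the base case checked by inspection of Figure~\ref{fig:monotone} (the paper carries out the argument for $h_2$ with the family consisting of all four edges plus the ascending diagonal, and leaves $h_3$ as an exercise, exactly as you propose to handle it). Your explicit treatment of the tie-breaking at shared cell corners such as $(\tfrac12,\tfrac12)$ is a point the paper's proof passes over silently, and is a worthwhile addition rather than a deviation.
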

\begin{proof}
We will prove the lemma for the Hilbert curve $h_2$; the proof for the {\cyrrm {Ya}}-curve $h_3$ is very similar and is left as an exercise for the interested reader. In fact, we will prove a stronger claim for $h_2$, namely that $h_2$ has a valid ordering $\inv{h_2}$ that is monotone from $(0,0)$ to $(0,1)$, from $(0,1)$ to $(1,1)$, from $(1,1)$ to $(1,0)$, from $(0,0)$ to $(1,0)$, and from $(0,0)$ to $(1,1)$.

We will prove the lemma by induction on increasing level of refinement of the subdivision of the unit square into subsquares.
More precisely, for a level of refinement $\ell$, we consider the recursive balanced subdivision of the unit square into $4^\ell$ square regions, and prove the following induction hypothesis: there is a valid ordering $\inv{h_2}$ that:\begin{enumerate}
\item[(1)] puts the regions touching the left edge of the unit square in order from bottom $(0,0)$ to top $(0,1)$;
\item[(2)] puts the regions touching the top edge of the unit square in order from left $(0,1)$ to right $(1,1)$;
\item[(3)] puts the regions touching the right edge of the unit square in order from top $(1,1)$ to bottom $(1,0)$;
\item[(4)] puts the regions touching the bottom edge of the unit square in order from left $(0,0)$ to right $(1,0)$;
\item[(5)] puts the regions on the ascending diagonal of the unit square in order from bottom $(0,0)$ to top $(1,1)$.
\end{enumerate}
As a base case, we take $\ell = 1$. For this base case, the induction hypothesis can easily be verified by inspection of Figure~\ref{fig:monotone}.

We will now prove that if the induction hypothesis holds for refinement level $\ell-1$, then it also holds for refinement level $\ell$. Note that, as far as the left edge, the bottom edge and the diagonal are concerned, the induction hypothesis is invariant under the transformation applied in the first (lower left) quadrant of $h_2$, and, as far as the right and the bottom edge are concerned, the induction hypothesis is invariant under the transformation applied in the last (lower right) quadrant of $h_2$. Thus \inv{h_2} places:\begin{itemize}
\item the regions touching the lower half of the left edge of the unit square before the regions touching the upper half of the left edge of the unit square (by the induction hypothesis for level~1);
\item the regions touching the lower half of the left edge of the unit square in order from bottom to top (by the induction hypothesis for level $\ell-1$);
\item the regions touching the upper half of the left edge of the unit square in order from bottom to top (by the induction hypothesis for level $\ell-1$).
\end{itemize}
As a result, $\inv{h_2}$ puts all level-$\ell$ regions touching the left edge of the unit square in order from bottom to top. In a similar way, one can verify the order induced by $\inv{h_2}$ on the regions touching the top, right, and bottom edges and on the ascending diagonal of the unit square.
\end{proof}

\subsection{Constructing \halffix- and diagonal-extradimensional curves}
\label{sec:halffixconstruction}

To be able to present and analyse our construction, we need some notation. When $f: [0,1] \rightarrow [0,1]^d$ is a $d$-dimensional space-filling curve, let $f^{[i]}(t)$ be coordinate $i$ of $f(t)$, where coordinates are numbered from $0$ to $d-1$. In other words, $f(t) = (f^{[0]}(t),f^{[1]}(t),...,f^{[d-1]}(t))$. For two $d'$-dimensional points $p$ and $q$, we let $p|q$ be the $2d'$-dimensional point obtained by concatenating the coordinate sequences of $p$ and $q$, that is, $p|q = (p_0,...,p_{d'-1},q_0,...,q_{d'-1})$. For a $2d'$-dimensional point $p$, we let $\lefthalf{p}$ denote the $d'$-dimensional point obtained by taking the first $d'$ coordinates of $p$, that is, $\lefthalf{p} = (p_0,...,p_{d'-1})$. Similary, $\righthalf{p}$ is the point obtained by taking the last $d'$ coordinates of $p$, that is, $\righthalf{p} = (p_{d'},...,p_{2d'-1})$. We use 0 not only to denote the number zero, but also to denote the origin of the coordinate system, in any number of dimensions.

Now, given a $d'$-dimensional $\base$-regular space-filling curve $f'$ with $\base \in \{2,3\}$, we can define a $2d'$-dimensional space-filling curve $f$ by:\[\begin{array}{ll}
f(t) = f'(h_b^{[0]}(t)) | f'(h_b^{[1]}(t))
\end{array}\]
To put it differently: for $k \in \{0,1\}$ and $i \in \{0,...,d'-1\}$, we define $f^{[kd'+i]}(t)$ as $f'^{[i]}(h_b^{[k]}(t))$.
With a slight abuse of terminology and notation, we can think of $f$ as the composition of $f'$ and $h_\base$, denoted $f' \circ h_\base$. Below we will prove the following:

\begin{theorem}\label{th:composition}
If $f'$ is an order-preserving vertex-continuous mono-curve and $f'(0) = 0$, then $f = f' \circ h_\base$ is a $\base$-regular vertex-continuous space-filling curve, and $f$ is \halffix-extradimensional and diagonal-extradimensional to $f'$. If, in addition, $f'$ is symmetric, then $f$ is an order-preserving mono-curve.
\end{theorem}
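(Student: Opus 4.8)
The plan is to carry the whole argument through a single explicit valid ordering of $f$ built from valid orderings of $f'$ and $h_\base$. First I would record that, for any valid ordering $\inv{f'}$ of $f'$ and the valid ordering $\inv{h_\base}$ of $h_\base$, the map
\[ \inv{f}(p|q) = \inv{h_\base}\bigl(\inv{f'}(p),\inv{f'}(q)\bigr) \]
is a valid ordering of $f$: it is a composition of bijections, and $f(\inv{f}(p|q)) = f'(\inv{f'}(p)) \mid f'(\inv{f'}(q)) = p|q$ since $h_\base(\inv{h_\base}(u,v)) = (u,v)$. Moreover $f$ is continuous, being a coordinatewise composition of the continuous functions $f'^{[i]}$ and $h_\base^{[k]}$, and surjective by the displayed formula, so $f$ is genuinely a space-filling curve.

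Next I would establish the $\base$-regular cell structure, from which vertex-continuity falls out. The bookkeeping is that $d'$ refinement levels of the two-dimensional $\base$-regular curve $h_\base$ consume exactly $2d'$ base-$\base$ digits of the parameter and pin down each of $u = h_\base^{[0]}(t)$ and $v = h_\base^{[1]}(t)$ to $d'$ digits; feeding those $d'$ digits into the $\base$-regular $d'$-dimensional curve $f'$ is one refinement level of $f'$. Hence $\ell$ levels of $f$ correspond to $d'\ell$ levels of $h_\base$ and $\ell$ levels of $f'$: $h_\base$ maps each length-$\base^{-2d'\ell}$ parameter subinterval \emph{onto} a level-$d'\ell$ subsquare, so $(u,v)$ sweeps that whole subsquare, and $f'$ maps the resulting $u$- and $v$-intervals onto two level-$\ell$ subcubes of $U'$, whose concatenation is precisely a level-$\ell$ subcube of $[0,1]^{2d'}$. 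This shows $f$ is $\base$-regular. Vertex-continuity then comes for free from continuity: each cell is the image of a parameter subinterval, so consecutive cells $C_{i-1},C_i$ share the image of their common endpoint, and that shared point stays in the last subcell of $C_{i-1}$ and the first subcell of $C_i$ under any further refinement.

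For the two extradimensionality claims I would invoke Lemma~\ref{lem:HilbertIsEdgeMonotone}, fixing $\inv{h_\base}$ to the valid ordering that is simultaneously monotone from $(0,0)$ to $(1,0)$ and from $(0,0)$ to $(1,1)$. Using $f'(0)=0$ (so $\inv{f'}(0)=0$, as the origin is attained only at parameter $0$), the ordering formula gives $\inv{f}(\lift\identity(a)) = \inv{h_\base}(\inv{f'}(a),0)$ and $\inv{f}(\diaglift(a)) = \inv{h_\base}(\inv{f'}(a),\inv{f'}(a))$. Monotonicity of $\inv{h_\base}$ along the bottom edge and along the ascending diagonal, respectively, makes each strictly increasing in $\inv{f'}(a)$, whence $\inv{f'}(a)<\inv{f'}(b)$ iff $\inv{f}(\lift\identity(a))<\inv{f}(\lift\identity(b))$, and likewise for $\diaglift$. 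Because the same $\inv{h_\base}$ handles both edge and diagonal, one $\inv{f}$ witnesses both \halffix- and diagonal-extradimensionality.

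The hard part will be the final claim, that symmetry of $f'$ upgrades $f$ to an order-preserving mono-curve. Within a level-1 subcube of $f$, the corresponding level-$d'$ subsquare of $h_\base$ contributes a transformation $T_h$ that is a rigid motion of the square (an element of the order-8 dihedral group, with no parameter reversal, since $h_\base$ is an order-preserving mono-curve). Such a $T_h$ may swap $u\leftrightarrow v$, which is harmless (it merely swaps the two $d'$-blocks $p$ and $q$), but it may also spatially reflect $u$ (or $v$), forcing the restriction of $f'$ on that block to be traced \emph{backwards}, i.e.\ introducing $f'(1-x)$ in place of $f'(x)$. This is exactly where I expect the obstacle, and where symmetry is indispensable: symmetry of $f'$ supplies a rigid motion $S$ of $U'$ with $f'(1-x)=S(f'(x))$, so the backward traversal is rewritten as a rigid-motion transform of the forward curve rather than a true reversal. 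Composing $S$ with the (rigid, since $f'$ is order-preserving) transformations of the relevant first-level subcubes of $f'$ and with $T_h$, one finds that the restriction of $f$ to each of its $\base^{2d'}$ subcubes equals $f$ with no parameter reversal, postcomposed with a rigid motion of $[0,1]^{2d'}$; hence every subcube invokes the single rule $f$ using only rotations and reflections. I would carry this out uniformly for an arbitrary subcube rank $r$, reading $T_h$ off the level-$d'$ subsquare of $h_\base$ and the per-block transforms off the first-level subcubes of $f'$, and verify that the block on which $T_h$ acts by a spatial reflection is exactly the block on which $S$ is applied.
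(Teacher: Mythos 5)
Your proposal is correct and follows essentially the same route as the paper: the same composite ordering $\inv{f}(p|q)=\inv{h_\base}(\inv{f'}(p),\inv{f'}(q))$ with the edge- and diagonal-monotone ordering of $h_\base$ for the two extradimensionality claims, the same digit-counting argument identifying level $d'\ell$ of $h_\base$ with level $\ell$ of $f$ for $\base$-regularity, and the same use of the symmetry transformation $S$ (the paper's $\rho$) to convert the reversed traversal $f'(1-x)$ into a rigid motion for the mono-curve claim. The only differences are presentational (you make vertex-continuity and the single-witness remark explicit), not substantive.
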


In Section~\ref{sec:implementhalffix} we will see how to implement a comparison operator for $f$. We will now prove Theorem~\ref{th:composition} in four parts (Lemmas~\ref{lem:compositecurve}--\ref{lem:2regDiagonallyExtradimensional}).

\begin{lemma}\label{lem:compositecurve}
If $f'$ is a order-preserving $\base$-regular vertex-continuous mono-curve,
then $f = f' \circ h_\base$ is a $\base$-regular vertex-continuous space-filling curve.
\end{lemma}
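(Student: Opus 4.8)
The plan is to verify the three defining properties in turn: that $f$ is a continuous surjection onto $[0,1]^{2d'}$ (hence a space-filling curve), that $f$ admits a $\base$-regular rule system, and that this rule system is vertex-continuous. The pleasant feature of the composition construction is that continuity and surjectivity come almost for free, and vertex-continuity then follows from continuity rather than having to be checked combinatorially.

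First I would dispose of continuity and surjectivity. Continuity is immediate, since $f(t) = \bigl(f'(h_\base^{[0]}(t)),\, f'(h_\base^{[1]}(t))\bigr)$ is a composition and concatenation of the continuous maps $h_\base$ and $f'$. For surjectivity, given $(u,v) \in [0,1]^{d'} \times [0,1]^{d'}$, I choose (by surjectivity of $f'$) parameters $s_0, s_1$ with $f'(s_0) = u$ and $f'(s_1) = v$, and then (by surjectivity of $h_\base$) a parameter $t$ with $h_\base(t) = (s_0,s_1)$; this $t$ satisfies $f(t) = u \mid v$.

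Next, the $\base$-regular structure, which is the technical heart. The key observation is that one level of refinement of $f$ corresponds to $d'$ levels of refinement of $h_\base$. Going $d'$ levels deep, $h_\base$ subdivides $[0,1]$ into $\base^{2d'}$ equal subintervals $I_0,\ldots,I_{\base^{2d'}-1}$ and maps $I_n$ onto the grid cell $[\alpha/\base^{d'},(\alpha+1)/\base^{d'}] \times [\beta/\base^{d'},(\beta+1)/\base^{d'}]$ of the $\base^{d'} \times \base^{d'}$ grid, where $n \mapsto (\alpha,\beta)$ is a bijection (a $\base$-regular two-dimensional space-filling curve visits each cell of its grid exactly once). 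For $t \in I_n$ we then have $h_\base^{[0]}(t)$ in the subinterval of $[0,1]$ associated with rank $\alpha$ of $f'$, so $\lefthalf{f(t)} = f'(h_\base^{[0]}(t)) \in S'(\alpha)$, a width-$1/\base$ subcube of $[0,1]^{d'}$ with location $c'(\alpha)$; similarly $\righthalf{f(t)} \in S'(\beta)$ with location $c'(\beta)$. Hence $f(I_n)$ lies in the width-$1/\base$ subcube of $[0,1]^{2d'}$ with location $c'(\alpha)\mid c'(\beta)$, and the surjectivity argument applied on $I_n$ shows $f(I_n)$ equals that subcube. As $n$ ranges over all subintervals, $(\alpha,\beta)$ ranges over all pairs and, since $c'$ is a bijection, $c'(\alpha)\mid c'(\beta)$ ranges over all $2d'$-digit location vectors; the $\base^{2d'}$ subcubes are therefore distinct and tile $[0,1]^{2d'}$. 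To get a genuine rule system I would note that the construction is self-similar: restricting $h_\base$ to $I_n$ yields a scaled, rotated/reflected (possibly reversed) copy of $h_\base$, and restricting $f'$ to each relevant subinterval applies a transformation $\tau'(\alpha)$, resp.\ $\tau'(\beta)$; so $f$ restricted to $I_n$ and rescaled is again a curve of the form $\tilde f' \circ \tilde h_\base$ of exactly the same type. Since the transformations of $h_\base$ and of $f'$ each range over a finite group, only finitely many rules arise, and iterating the subdivision produces a $\base$-regular rule system whose limit map is $f$.

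Finally, vertex-continuity. Rather than checking the shared-vertex condition combinatorially, I would derive it from continuity. At any refinement depth, two consecutive cells $D$ and $D'$ in the scanning order are the images of two consecutive closed subintervals $J$ and $J'$ of $[0,1]$ that share an endpoint $t^*$; by the self-similar structure $f(J) = D$ and $f(J') = D'$, and since $t^* \in J \cap J'$ we get $f(t^*) \in D \cap D'$, so the two cells share at least that point. This holds at every depth, which is exactly vertex-continuity. I expect the main obstacle to be the bookkeeping in the $\base$-regular step: correctly aligning one $f$-level with $d'$ levels of $h_\base$, verifying the location formula $c'(\alpha)\mid c'(\beta)$, and checking that the restriction to each subinterval is self-similar so that the rule system closes up. Once that is in place, continuity delivers both fillingness of each cell and vertex-continuity cleanly.
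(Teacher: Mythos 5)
Your proposal is correct and follows essentially the same route as the paper: surjectivity and continuity come from composing the surjective, continuous maps $f'$ and $h_\base$, and $\base$-regularity is obtained by matching one refinement level of $f$ with $d'$ refinement levels of $h_\base$, so that each subinterval maps onto the Cartesian product of two $f'$-cells, i.e.\ a $2d'$-dimensional subcube of width $1/\base$. Your explicit treatment of vertex-continuity via shared interval endpoints and of the finiteness of the induced rule system merely spells out details the paper leaves implicit.
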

\begin{proof}
For each point $p \in [0,1]^{2d'}$ there are values $a, b \in [0,1]$ such that $f'(a) = \lefthalf{p}$ and $f'(b) = \righthalf{p}$ (because $f'$ is a surjective function from $[0,1]$ to $[0,1]^{d'}$), and there is a value $t$ such that $h_2(t) = (a,b)$ (because $h_2$ is a surjective function from $[0,1]$ to $[0,1]^2$). Therefore $f$ is a surjective function from $[0,1]$ to $[0,1]^{2d'}$. Furthermore, since $f'$ and $h_b$ are continuous, so is $f$.

It remains to prove that $f$ is $\base$-regular. Consider level $d'\depth$ of refinement of $h_b$. At this level of refinement, $h_b$ maps each interval $[r/\base^{2d'\depth},(r+1)/\base^{2d'\depth}]$, for $r \in \{0,...,\base^{2d'\depth}-1\}$, to a square of width $1/\base^{d'\depth}$. The horizontal range of the square, that is, the possible values of $h_b^{[0]}$ that are attained in this square, correspond exactly to the pre-image of one of the $\base^{d'\depth}$ level-$\ell$ regions of $f'$, thus defining the region occupied in $2d'$-dimensional space with respect to the first $d'$ dimensions. Similarly, the vertical range of the square corresponds exactly to the pre-image of a region of $f'$, thus defining the region occupied in $2d'$-dimensional space with respect to the last $d'$ dimensions. Thus, each interval $[r/\base^{2d'\depth},(r+1)/\base^{2d'\depth}]$ is mapped exactly to the Cartesian product of two $d'$-dimensional hypercube regions of width $1/\base^\depth$, which is a $2d'$-dimensional hypercube region of width $1/\base^\depth$. This implies the hierarchical structure of a $b$-regular curve.
\end{proof}

\begin{lemma}\label{lem:compositesymmetric}
If $f'$ is a symmetric $\base$-regular mono-curve,
then $f = f' \circ h_\base$ is an order-preserving $\base$-regular mono-curve.
\end{lemma}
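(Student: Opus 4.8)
The plan is to reuse the scale correspondence already established in the proof of Lemma~\ref{lem:compositecurve}: level~$1$ of $f$, viewed as a $2d'$-dimensional $\base$-regular curve, coincides with level~$d'$ of $h_\base$. Concretely, each of the $\base^{2d'}$ first-level subcubes of $f$ is the image of one interval $[r/\base^{2d'},(r+1)/\base^{2d'}]$, where $r$ ranges over the level-$d'$ subsquares of $h_\base$ in the order of $h_\base$; the horizontal range of subsquare~$r$ is the pre-image under $f'$ of a single first-level region $S'(\rho_0(r))$ of $f'$, and its vertical range the pre-image of $S'(\rho_1(r))$, so the subcube has location $c'(\rho_0(r)) \mid c'(\rho_1(r))$. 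Since Lemma~\ref{lem:compositecurve} already gives that $f$ is $\base$-regular, all that remains is to show that $f$ restricted to each such subcube, after rescaling, is the whole curve $f$ transformed by a rotation and a reflection only, with \emph{no} reversal. Exhibiting such a rule for every subregion simultaneously proves that $f$ is a mono-curve and that it is order-preserving.

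First I would write out the restriction explicitly. Because $h_\base$ is an order-preserving mono-curve, its restriction to the level-$d'$ subsquare~$r$ is $\tau_h(r)\circ h_\base$, where $\tau_h(r)$ is a similarity of the square given by a signed $2\times 2$ permutation matrix $R(r)$, a scaling by $1/\base^{d'}$, and a translation, and crucially with no reversal (a composition of $d'$ reversal-free transformations is reversal-free). Substituting into $f = f'\circ h_\base$, each of the two coordinates $h_\base^{[0]},h_\base^{[1]}$ of the inner curve is, up to scaling and translation, either $+h_\base^{[k]}(\hat t)$ or $-h_\base^{[k]}(\hat t)$ for the index $k\in\{0,1\}$ selected by $R(r)$, where $\hat t$ is the rescaled parameter. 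Feeding a positive, unreflected coordinate into $f'$ reproduces $f'$ on its first-level region, i.e.\ $f'\big((s+\rho)/\base^{d'}\big)=\tau'(\rho)\big(f'(s)\big)$, where $\tau'(\rho)$ is the order-preserving transformation of $f'$ on region $S'(\rho)$; this contributes the rotation/reflection matrix $M'(\rho)$ to the corresponding block of $d'$ coordinates of $f$.

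The step I expect to be the main obstacle, and the only place where symmetry of $f'$ is needed, is a \emph{reflected} coordinate of $h_\base$, i.e.\ a $-1$ entry in $R(r)$. There the argument of $f'$ runs backwards through region $S'(\rho)$, producing $f'(1-s)$ in place of $f'(s)$. Naively this is a reversal, which would spoil order-preservation. The resolution is the symmetry hypothesis: by the definition of a symmetric mono-curve, reversing $f'$ is the same as reflecting it, so there is a fixed reflection $\Sigma$ of the unit hypercube (the one exchanging the entrance and exit gates) with $f'(1-s)=\Sigma\big(f'(s)\big)$. Hence the backward traversal becomes the reflection $\Sigma$, the block contributes $M'(\rho)\,\Sigma$, which is again a signed permutation and still reversal-free. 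Similarly, an axis swap in $R(r)$ merely exchanges the two blocks of $d'$ coordinates of $f$ (it makes the first $d'$ coordinates of $f$ depend on $h_\base^{[1]}$ and vice versa), which is itself a coordinate permutation of the $2d'$-dimensional space.

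Assembling the cases, the transformation carrying $f(\hat t)$ onto $f$ restricted to subregion~$r$ is the product of: the block permutation induced by $R(r)$ (diagonal or anti-diagonal in the two $d'$-blocks), the block reflections induced by the signs of $R(r)$ together with the accompanying copies of $\Sigma$, and the within-block rotations/reflections $M'(\rho_0(r))$ and $M'(\rho_1(r))$. Each factor is a signed permutation of the $2d'$ coordinates, hence so is the product, and none of them reverses the direction of the parameter. Therefore $f$ on every first-level subregion is the whole curve $f$ transformed by a rotation and a reflection only; by the self-similarity of the composition $f'\circ h_\base$ this single rule governs every level of the recursion, so $f$ is an order-preserving $\base$-regular mono-curve, as claimed.
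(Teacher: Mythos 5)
Your proposal is correct and follows essentially the same route as the paper's proof: both decompose the transformation of $f$ on each first-level subregion into the block swap from the $h_\base$-level permutation, the $h_\base$-level reflections (which is exactly where the symmetry of $f'$ is invoked to convert a backward traversal of $f'$ into an isometry of the hypercube), and the within-block permutations and reflections of $f'$, concluding that the composite is a reversal-free signed permutation. The only cosmetic difference is that you call the symmetry isometry a single reflection $\Sigma$ where the paper allows a general rotation-and-reflection $\rho$, but since you only use that it is a signed permutation, nothing is lost.
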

\begin{proof}
Extending our notation from Section~\ref{sec:numerical}, we denote the subregions, locations, permutations, and reflections of a curve $f$ by $S^f$, $c^f$, $a^f$ and $m^f$, respectively. Let $h_{b\cdot \ell}$ be the level $\ell$ of refinement of $h_b$, at which level regions are numbered from 0 to $b^{2\ell}-1$, and the region coordinates $c^{h_{b\cdot \ell}}_i$ are between 0 and $\base^\ell - 1$ (inclusive). The transformation of $f$ in region $r \in \{0,...,\base^{2d'}-1\}$ is the result of the successive effects of (i) the permutation $a^{h_{b\cdot d'}}(r)$, (ii) the reflections $m^{h_{b\cdot d'}}(r)$, (iii) the permutations $a^{f'}(c^{h_{b\cdot d'}}_k(r))$ for $k \in \{0,1\}$, and (iv) the reflections $m^{f'}(c^{h_{b\cdot d'}}_k(r))$. Note that there are no reversals, since $f'$ is symmetric, and therefore, order-preserving.

(i) The permutation $a^{h_{b\cdot d'}}(r)$ is either 01 or 10: the latter results in a rotation of $f$ that consists in swapping coordinates $0,...,d'-1$ with coordinates $d',...,2d'-1$.

(ii) Because $f'$ is symmetric, there is a transformation $\rho$ (consisting of rotation and/or reflections) such that $f'(1-h_b^{[k]}(t)) = \rho(f'(h_b^{[k]}(t)))$. A reflection $m^{h_{b\cdot d'}}_0(r)$ results in replacing the coordinates $\lefthalf{f}(t) = f'(h_b^{[0]}(t))$ by $f'(1-h_b^{[0]}(t)) = \rho(f'(h_b^{[0]}(t)) = \rho(\lefthalf{f}(t))$ for $i \in \{0,...,d-1\}$, which results in a transformation of $f$ that consists in applying $\rho$ to the first $d'$ coordinates while leaving the last $d'$ coordinates unaffected; thus, this transformation of $f$ consists of rotations and reflections in $2d'$-dimensional space. Similarly, a reflection $m^{h_{b\cdot d'}}_1(r)$ results in a transformation of $f$ that consists in applying $\rho$ to the last $d'$ coordinates while leaving the first $d'$ coordinates unaffected.

(iii,iv) Similar to $\rho$, the permutations $a^{f'}(c^{h_{b\cdot d'}}_k(r))$ and the reflections $m^{f'}(c^{h_{b\cdot d'}}_k(r))$ result in transformations of $f$ that consist of rotations and reflections.

Thus, the transformation of $f$ in region $r \in \{0,...,\base^{2d'}-1\}$ consists of rotations and reflections, and $f$ is an order-preserving mono-curve.
\end{proof}

Note that if $f'$ is not symmetric, then $f$ is not necessarily a mono-curve. A reflection of $h_b$ in coordinate $k \in \{0,1\}$ results in replacing $f^{[kd'+i]}(h_b^{[k]}(t))$ by $f^{[kd'+i]}(1-h_b^{[k]}(t))$ for $i \in \{0,...,d'-1\}$. In general, this transformation cannot be expressed as a simple permutation, reflection, or reversal of the $d$-dimensional curve $f$. For example, if $\base = 3$ and $f' = h_3$ (the {\cyrrm {Ya}}-curve of Figure~\ref{fig:monotone}), then $f$ is not a mono-curve. The same might be the case if $f'$ is a Meurthe curve (see Section~\ref{sec:ternary}).

However, if $f'$ is symmetric (for example, a Butz-Moore Hilbert curve or a harmonious Hilbert curve from Section~\ref{sec:hilbert}, or a Peano curve, a coil curve, or a half-coil curve from Section~\ref{sec:ternary}), then $f$ will be an order-preserving mono-curve. For example, if $f'$ is a standard Hilbert curve, then (using the notation from the proof of Lemma~\ref{lem:compositesymmetric}) a reflection $m^{h_{b\cdot d'}}_k(r)$ leads to a reflection in axis $\inv{a}^{f'}_0(c^{h_{b\cdot d'}}_k(r))$. Thus we get:\[\begin{array}{lll}
c^f_{kd'+i}(r) & = c^{f'}_i(c^{h_{2\cdot d'}}_k(r)) \\
a^f_{kd'+i}(r) & = a^{f'}_i(c^{h_{2\cdot d'}}_k(r)) + d' \cdot a^{h_{2\cdot d'}}_k(r) & \\
m^f_{kd'+i}(r) & = m^{f'}_i(c^{h_{2\cdot d'}}_k(r)) & \hbox{if $m^{h_{2\cdot d'}}_k(r) = 0$ or $a^{f'}_i(c^{h_{2\cdot d'}}_k(r)) \neq 0$} \\
m^f_{kd'+i}(r) & = 1 - m^{f'}_i(c^{h_{2\cdot d'}}_k(r)) & \hbox{if $m^{h_{2\cdot d'}}_k(r) = 1$ and $a^{f'}_i(c^{h_{2\cdot d'}}_k(r)) = 0$}
\end{array}\]
Table~\ref{tab:hilberthilbertcomposition} shows the resulting curve definition if $f'$ is the two-dimensional Hilbert curve, and thus, the resulting curve $f$ is $h_2 \circ h_2$. Note that $f$ is not a standard Hilbert curve as defined in Section~\ref{sec:standard}, since the locations of the subregions do not conform to the binary reflected Gray code.

\begin{table}
\caption{The four-dimensional curve $h_2 \circ h_2$, created from the composition of two-dimensional Hilbert curves. This curve is \halffix- and diagonal-extradimensional to the two-dimensional Hilbert curve.}
\label{tab:hilberthilbertcomposition}
\addvspace{.5\baselineskip}\centering\footnotesize
\begin{tabular}{r|cccc||r|cccc}
rank & loc. & perm. & refl. & exit & rank & loc. & perm. & refl. & exit \\
\hline
0000 & 0000 & 1032 & 0000 & $\frac12$(0,1,0,0) & 1000 & 1111 & 2301 & 0000 & $\frac12(1,1,2,1)$ \\
0001 & 0100 & 2310 & 0000 & $\frac12$(0,1,0,1) & 1001 & 1110 & 0132 & 0011 & $\frac12(2,1,2,1)$ \\
0010 & 0101 & 2301 & 0000 & $\frac12$(0,1,1,1) & 1010 & 1010 & 1032 & 1111 & $\frac12(2,0,2,1)$ \\
0011 & 0001 & 1023 & 0110 & $\frac12$(0,0,1,1) & 1011 & 1011 & 3201 & 1010 & $\frac12(2,0,1,1)$ \\
0100 & 0011 & 3201 & 0000 & $\frac12$(0,0,2,1) & 1100 & 1001 & 1023 & 1010 & $\frac12(2,1,1,1)$ \\
0101 & 0010 & 1032 & 0011 & $\frac12$(0,1,2,1) & 1101 & 1101 & 2301 & 1010 & $\frac12(2,1,0,1)$ \\
0110 & 0110 & 0132 & 0011 & $\frac12$(1,1,2,1) & 1110 & 1100 & 2310 & 1001 & $\frac12(2,1,0,0)$ \\
0111 & 0111 & 2301 & 1010 & $\frac12$(1,1,1,1) & 1111 & 1000 & 1032 & 1100 & $\frac12(2,0,0,0)$ \\
\end{tabular}
\end{table}

\begin{lemma}
\label{lem:2regPrefixExtradimensional}
The curve $f$ is \halffix-extradimensional to $f'$.
\end{lemma}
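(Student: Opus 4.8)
The plan is to build the valid ordering $\inv{f}$ required by Definition~\ref{def:halfextradimensional} directly out of the given valid ordering $\inv{f'}$ of $f'$ and the \emph{monotone} ordering $\inv{h_\base}$ guaranteed by Lemma~\ref{lem:HilbertIsEdgeMonotone}. Concretely, I would define
\[\inv{f}(p) = \inv{h_\base}\bigl(\inv{f'}(\lefthalf{p}),\ \inv{f'}(\righthalf{p})\bigr)\]
for every $p \in [0,1]^{2d'}$. The first task is to verify this is a valid ordering of $f$ (which is a genuine space-filling curve by Lemma~\ref{lem:compositecurve}). Bijectivity is immediate: as $p$ ranges over $[0,1]^{2d'}$, the pair $(\inv{f'}(\lefthalf{p}),\inv{f'}(\righthalf{p}))$ ranges bijectively over $[0,1]^2$ because $\inv{f'}$ is a bijection, and $\inv{h_\base}$ maps $[0,1]^2$ bijectively onto $[0,1]$. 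For the identity $f(\inv{f}(p)) = p$, set $t = \inv{f}(p)$; then $h_\base(t) = (\inv{f'}(\lefthalf{p}),\inv{f'}(\righthalf{p}))$, so $h_\base^{[k]}(t) = \inv{f'}(\lefthalf{p})$ and $\inv{f'}(\righthalf{p})$ for $k=0,1$, and substituting into $f^{[kd'+i]}(t) = f'^{[i]}(h_\base^{[k]}(t))$ recovers $\lefthalf{p} \mid \righthalf{p} = p$.

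The second step is to evaluate $\inv{f}$ on the lifted points. Since $\lift\identity(a) = a \mid 0$ (the coordinates of $a$ followed by $d'$ zeros), we have $\lefthalf{(a\mid 0)} = a$ and $\righthalf{(a\mid 0)} = 0$, so
\[\inv{f}(\lift\identity(a)) = \inv{h_\base}\bigl(\inv{f'}(a),\ \inv{f'}(0)\bigr).\]
Here I would invoke that $f'(0) = 0$ forces the preimage of the origin under $f'$ to be exactly $\{0\}$: the origin corner lies in the single first subregion $S(0)$, whose transformation fixes the origin (from $f'(0)=0$ one gets $o(0)=0$), and iterating this confinement through the recursion shrinks the parameter interval containing the origin to $\{0\}$ --- the same argument used to locate the entrance gate in the vertex-continuity proofs. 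Hence every valid ordering satisfies $\inv{f'}(0) = 0$, giving $\inv{f}(\lift\identity(a)) = \inv{h_\base}(\inv{f'}(a),\,0)$.

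The final step is the monotonicity argument. Writing $x = \inv{f'}(a)$ and $y = \inv{f'}(b)$, the points $(x,0)$ and $(y,0)$ both lie on the bottom edge of the unit square, the straight segment from $(0,0)$ to $(1,0)$. By Lemma~\ref{lem:HilbertIsEdgeMonotone} the ordering $\inv{h_\base}$ is monotone along this segment in the sense of Definition~\ref{def:monotone}, and since $\inv{h_\base}$ is a bijection (hence injective on the segment) it is in fact strictly increasing along it; therefore $x < y \iff \inv{h_\base}(x,0) < \inv{h_\base}(y,0)$. Translating back, $\inv{f'}(a) < \inv{f'}(b) \iff \inv{f}(\lift\identity(a)) < \inv{f}(\lift\identity(b))$, which is precisely the condition of Definition~\ref{def:halfextradimensional}.

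I expect the main obstacle to be the cleanest justification that $\inv{f'}(0) = 0$ for \emph{every} valid ordering $\inv{f'}$, i.e.\ that $f'$ attains the origin only at parameter $0$; everything else is bookkeeping about composing bijections. This is exactly where the hypothesis $f'(0)=0$ (together with $f'$ being a mono-curve, which pins the first subregion and its origin-fixing transformation) is used, and it is also what singles out the bottom-edge clause of Lemma~\ref{lem:HilbertIsEdgeMonotone} --- rather than monotonicity along some other line $y = \mathrm{const}$ --- as the relevant one.
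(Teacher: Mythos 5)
Your proof is correct and follows essentially the same route as the paper: the same definition $\inv{f}(p) = \inv{h_\base}(\inv{f'}(\lefthalf{p}),\inv{f'}(\righthalf{p}))$, the same appeal to $\inv{f'}(0)=0$ via the entrance gate at the origin, and the same use of the bottom-edge monotonicity clause of Lemma~\ref{lem:HilbertIsEdgeMonotone}. The extra detail you supply (checking that $\inv{f}$ is a valid ordering, and justifying that \emph{every} valid ordering sends the origin to $0$) only makes explicit what the paper asserts in passing.
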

\begin{proof}
Let $\inv{f'}$ be any valid ordering of $f$. By Definition~\ref{def:halfextradimensional}, we have to prove that there is a valid ordering $\inv{f}$ of $f$ such that for any pair of points $a, b \in U'$ we have $\inv{f'}(a) < \inv{f'}(b)$ if and only if $\inv{f}(a|0) < \inv{f}(b|0)$.

Let $\inv{h_b}$ be a valid ordering of $h_b$ which is monotone from $(0,0)$ to $(1,0)$ (such an ordering exists by Lemma~\ref{lem:HilbertIsEdgeMonotone}). We now define $\inv{f}$ by $\inv{f}(p) = \inv{h_b}(\inv{f'}(\lefthalf{p}),\inv{f'}(\righthalf{p}))$.
Recall that $f'$ is a $b$-regular curve with the entrance gate in the origin, and therefore $\inv{f'}(0)  = 0$. Thus, we have
$\inv{f}(a|0) = \inv{h_b}((\inv{f'}(a),\inv{f'}(0))) = \inv{h_b}((\inv{f'}(a),0))$ and
$\inv{f}(b|0) = \inv{h_b}((\inv{f'}(b),\inv{f'}(0))) = \inv{h_b}((\inv{f'}(b),0))$.
By the monotonicity of $\inv{h_b}$ from $(0,0)$ to $(1,0)$, it follows that
$\inv{f}(a|0) < \inv{f}(b|0)$ if and only if $\inv{f'}(a) < \inv{f'}(b)$, QED.
\end{proof}

\begin{lemma}
\label{lem:2regDiagonallyExtradimensional}
The curve $f$ is diagonal-extradimensional to $f'$.
\end{lemma}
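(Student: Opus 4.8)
The plan is to reuse the construction from the proof of Lemma~\ref{lem:2regPrefixExtradimensional} essentially unchanged, replacing the one-sided lift $a \mapsto a|0$ by the diagonal lift $\diaglift(a) = a|a$, and replacing the edge-monotone ordering of $h_\base$ by a diagonal-monotone one. Concretely, I would start from an arbitrary valid ordering $\inv{f'}$ of $f'$, and use Lemma~\ref{lem:HilbertIsDiagMonotone} to pick a valid ordering $\inv{h_\base}$ of $h_\base$ that is monotone from $(0,0)$ to $(1,1)$. I then define the candidate ordering of $f$ exactly as before, by $\inv{f}(p) = \inv{h_\base}(\inv{f'}(\lefthalf{p}),\inv{f'}(\righthalf{p}))$, and must check both that this is a valid ordering of $f$ and that it witnesses diagonal-extradimensionality in the sense of Definition~\ref{def:diagextradimensional}.

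For the first point, since $\inv{f'}$ is a bijection from $U'$ to $[0,1]$, the map $p \mapsto (\inv{f'}(\lefthalf{p}),\inv{f'}(\righthalf{p}))$ is a bijection from $U$ to $[0,1]^2$, and composing it with the bijection $\inv{h_\base}$ shows that $\inv{f}$ is a bijection from $U$ to $[0,1]$. Moreover, writing $t = \inv{f}(p)$ we have $h_\base(t) = (\inv{f'}(\lefthalf{p}),\inv{f'}(\righthalf{p}))$, so $f(t) = f'(h_\base^{[0]}(t)) | f'(h_\base^{[1]}(t)) = \lefthalf{p} | \righthalf{p} = p$; hence $\inv{f}$ is a valid ordering of $f$. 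Note that, unlike in the \halffix case, I never need the hypothesis $f'(0) = 0$ here.

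The core of the argument is then immediate. For any $a \in U'$ the two $d'$-dimensional halves of $\diaglift(a) = a|a$ are both equal to $a$, so by the definition of $\inv{f}$ we get $\inv{f}(\diaglift(a)) = \inv{h_\base}(\inv{f'}(a),\inv{f'}(a))$. Putting $s = \inv{f'}(a)$ and $u = \inv{f'}(b)$, the points $(s,s)$ and $(u,u)$ both lie on the straight segment from $(0,0)$ to $(1,1)$, along which $\inv{h_\base}$ is monotone; and since $\inv{h_\base}$ is a bijection of the whole square it is injective, so its restriction to this diagonal is strictly increasing. Hence $s < u$ if and only if $\inv{h_\base}(s,s) < \inv{h_\base}(u,u)$, which is exactly $\inv{f'}(a) < \inv{f'}(b)$ if and only if $\inv{f}(\diaglift(a)) < \inv{f}(\diaglift(b))$, as required.

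The step I expect to matter most is the choice of the diagonal-monotone ordering: because the diagonal lift feeds the \emph{same} value into both coordinates of $\inv{h_\base}$, the argument genuinely relies on monotonicity of $\inv{h_\base}$ along the main diagonal of the square (the clause of Lemma~\ref{lem:HilbertIsDiagMonotone} concerning $(0,0)$ to $(1,1)$), rather than along an edge as in Lemma~\ref{lem:2regPrefixExtradimensional}. The only subtlety worth flagging is that we need an ``if and only if'' rather than a one-way implication, which is why I insist on strictness of the monotonicity; this strictness comes for free from $\inv{h_\base}$ being a bijection.
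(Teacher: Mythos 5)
Your proof is correct and follows essentially the same route as the paper: define $\inv{f}(p) = \inv{h_\base}(\inv{f'}(\lefthalf{p}),\inv{f'}(\righthalf{p}))$ using the diagonal-monotone ordering from Lemma~\ref{lem:HilbertIsDiagMonotone}, then read off the equivalence along the main diagonal. Your added checks (that $\inv{f}$ is a valid ordering, that strictness follows from bijectivity, and that $f'(0)=0$ is not needed here) are sound refinements of details the paper leaves implicit.
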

\begin{proof}
Let $\inv{f'}$ be any valid ordering of $f$. By Definition~\ref{def:diagextradimensional}, we have to prove that there is a valid ordering $\inv{f}$ of $f$ such that for any pair of points $a, b \in U'$ we have $\inv{f'}(a) < \inv{f'}(b)$ if and only if $\inv{f}(a|a) < \inv{f}(b|b)$.

Let $\inv{h_b}$ be a valid ordering of $h_b$ which is monotone from $(0,0)$ to $(1,1)$ (such an ordering exists by Lemma~\ref{lem:HilbertIsDiagMonotone}). We now define $\inv{f}$ by $\inv{f}(p) = \inv{h_b}(\inv{f'}(\lefthalf{p}),\inv{f'}(\righthalf{p}))$. Thus, we have
$\inv{f}(a|a) = \inv{h_b}((\inv{f'}(a),\inv{f'}(a)))$ and
$\inv{f}(b|b) = \inv{h_b}((\inv{f'}(b),\inv{f'}(b)))$.
By the monotonicity of $\inv{h_b}$ from $(0,0)$ to $(1,1)$, it follows that
$\inv{f}(a|a) < \inv{f}(b|b)$ if and only if $\inv{f'}(a) < \inv{f'}(b)$, QED.
\end{proof}

\subsection{Implementing a \halffix- and diagonal-extradimensional curve}
\label{sec:implementhalffix}

In Section~\ref{sec:halffixconstruction} we saw how to one can construct a \halffix- and diagonal-extradimensional curve $f$, given a $\base$-regular order-preserving vertex-continuous mono-curve $f'$ that satisfies certain conditions. We also saw that if $f'$ is symmetric, then $f$ is an order-preserving mono-curve, which could be implemented using Algorithm~\ref{alg:generaloperator}.

We will now see another method to implement a comparison operator for $f$, which can also be applied if $f'$ is not symmetric and which may also be more efficient. We illustrate this method by implementing \halffix- and diagonal-extradimensional curves based on the Butz-Moore generalization of Hilbert curves. Our goal is to compare points according to a valid ordering $\inv{f}$ of $f$. Considering the construction of $f$, we can implement this by computing $\inv{f}(p)$ as $\inv{h_2}(\inv{f'}(\lefthalf{p}),\inv{f'}(\righthalf{p}))$. To do so, we rewrite Algorithm~\ref{alg:binarycurvesoperator} as a comparator object for Butz-Moore curves that is initialized with the coordinates of two $d'$-dimensional points $p'$ and~$q'$, and provides a method to compute $\inv{f'}(p')$ and $\inv{f'}(q')$ one digit a time. A comparison operator for $\inv{f}$ is then implemented by initializing two comparator objects, one with $\lefthalf{p}$ and~$\lefthalf{q}$, and one with $\righthalf{p}$ and~$\righthalf{q}$, and then simply computing $\inv{h_2}$, extracting digits from the comparator objects rather than directly from the coordinates of $p$ and $q$.

\SetKwData{Localp}{pLocal}
\SetKwData{Localq}{qLocal}
\SetKw{Empty}{empty}
\LinesNumbered
\RestyleAlgo{ruled}
The details are as follows. Each object has member variables \Forward, \Rotation, \Reflected, $i$, \hbox{\iMinusOne}, \NewRotation, \Localp and \Localq. To initialize an object, we run Algorithm~\ref{alg:initobject}, calling it with two $d'$-dimensional points $p$ and $q$; to extract a rank digit (one for $p$ and one for $q$), we run Algorithm~\ref{alg:extractdigit}. The extraction method works correctly up to and including the first time the rank digit differs between $p$ and~$q$ (digits extracted afterwards may be wrong). The algorithm for the $2d'$-dimensional curve is given as Algorithm~\ref{alg:diagcomparator}.

\begin{algorithm}
$\Localp \leftarrow p$; $\Localq \leftarrow q$;
$\Forward \leftarrow \True$; $\Rotation \leftarrow 0$;
\lFor{$i \leftarrow 0$ \KwTo $d'-1$}{$\Reflected[i] \leftarrow \False$}\;
$i \leftarrow \Rotation$; $\iMinusOne \leftarrow d'-1$;
$\NewRotation \leftarrow \iMinusOne$\;
\caption{Initialization of a comparator object for a Butz-Moore curve\label{alg:initobject}}
\end{algorithm}

\begin{algorithm}
\SetKwData{pRank}{pRank}
\SetKwData{qRank}{qRank}
$\pFirstDigit \leftarrow \Extract(\Localp[i])$; $\qFirstDigit \leftarrow \Extract(\Localq[i])$\;
\If{$\Reflected[i]$}{
  $\pFirstDigit \leftarrow (1 - \pFirstDigit)$; $\qFirstDigit \leftarrow (1 - \qFirstDigit)$
}
\If{\Forward}{
  $\pRank \leftarrow \pFirstDigit$; $\qRank \leftarrow \qFirstDigit$
}
\Else{
  $\pRank \leftarrow (1 - \pFirstDigit)$; $\qRank \leftarrow (1 - \qFirstDigit)$
}
\If{$\pFirstDigit = 1$}{
  $\Forward \leftarrow \Not \Forward$;
  $\Reflected[i] \leftarrow \Not \Reflected[i]$;
  $\NewRotation \leftarrow \iMinusOne$\;
}
$\iMinusOne \leftarrow i$; $i \leftarrow (i + 1) \bmod d'$\;
\If{$i = \Rotation$}{
  $\Reflected[\iMinusOne] \leftarrow \Not \Reflected[\iMinusOne]$\;
  $\Reflected[\NewRotation] \leftarrow \Not \Reflected[\NewRotation]$\;
  $\Rotation \leftarrow \NewRotation$\;
  $i \leftarrow \Rotation$; $\iMinusOne \leftarrow (i - 1) \bmod d'$;
  $\NewRotation \leftarrow \iMinusOne$\;
}
\Return $(\pRank, \qRank)$
\caption{Extracting a single rank digit from a comparator object for a Butz-Moore curve\label{alg:extractdigit}}
\end{algorithm}

\begin{algorithm}
\SetKwData{Compare}{compare}
initialize a comparator object $\Compare[0]$ with points $\lefthalf{p}$ and $\lefthalf{q}$\;
initialize a comparator object $\Compare[1]$ with points $\righthalf{p}$ and $\righthalf{q}$\;
$\Forward \leftarrow \True$; $\Rotation \leftarrow 0$;
\lFor{$i \leftarrow 0$ \KwTo $1$}{$\Reflected[i] \leftarrow \False$}\;
\Repeat{all remaining digit strings $p[0],...,p[2d'-1]$ and $q[0],...,q[2d'-1]$ are empty}{
  $i \leftarrow \Rotation$; $\iMinusOne \leftarrow (i - 1) \bmod 2$;
  $\NewRotation \leftarrow \iMinusOne$\;
  \Repeat{$i = \Rotation$}{
    extract $(\pFirstDigit,\qFirstDigit)$ from $\Compare[i]$\;
    \If{$\Reflected[i]$}{
      $\pFirstDigit \leftarrow (1 - \pFirstDigit)$; $\qFirstDigit \leftarrow (1 - \qFirstDigit)$\label{alg:binarycurves:reflect}
    }
    \lIf{$\pFirstDigit < \qFirstDigit$}{\Return \Forward};
    \lElseIf{$\pFirstDigit > \qFirstDigit$}{\Return \Not \Forward}\;
    \If{$\pFirstDigit = 1$}{
      $\Forward \leftarrow \Not \Forward$;
      $\Reflected[i] \leftarrow \Not \Reflected[i]$;
      $\NewRotation \leftarrow \iMinusOne$\;
    }
    $\iMinusOne \leftarrow i$; $i \leftarrow (i + 1) \bmod 2$\;
  }
  $\Reflected[\iMinusOne] \leftarrow \Not \Reflected[\iMinusOne]$\;
  $\Reflected[\NewRotation] \leftarrow \Not \Reflected[\NewRotation]$\;
  $\Rotation \leftarrow \NewRotation$\;
}
\Return\False
\caption{Algorithm to compare the position of two points $p$ and $q$ along a $2d'$-dimensional curve that is \halffix- and diagonal-extradimensional to a $d'$-dimensional Butz-Moore curve.\label{alg:diagcomparator}}
\end{algorithm}

\section{Remaining questions for theory and practice}
\label{sec:discussion}

\paragraph{Building R-trees}
As explained in Inset~\ref{box:rtrees}, the curves in this paper could be used to build R-trees in higher dimensions (for example, using six-dimensional curves for three-dimensional boxes), or to experiment with 3-regular curves instead of 2-regular curves for this application. Some first, rough experiments, indicate that it may depend a lot on the distribution of the data whether the new curves offer any advantages over previously known curves---but it seems unlikely that the new curves have serious disadvantages in this application. Further experiments would need to be done.

\paragraph{Fast implementations}
To get most out of the space-filling curves presented in this paper, one may want to design highly-optimized implementations of comparison operators. As compared to the 3-regular curves, the 2-regular curves have the advantage that many calculations can be implemented by means of bit shifting and/or exclusive-or operations on many bits at once. A possible advantage of interdimensionally consistent curves, is that a highly-optimized (possibly hard-wired) implementation of a $d$-dimensional curve, for some fixed value of $d$, can also be used as an implementation of each $d'$-dimensional curve with $d' \leq d$. If the cost of working with more dimensions than necessary is mitigated by parallel circuitry, then, being able to use a single hard-wired $d$-dimensional implementation for any $d' \leq d$, may reduce overhead.

\paragraph{Proving that the harmonious Hilbert curves are interdimensionally consistent}
At this point, I would like to pose the main result of this paper as an open question. In Section~\ref{sec:proofhilbert}, I proved that the harmonious Hilbert curves consitute an interdimensionally consistent set. However, I believe the proof is hard to understand and to verify. In particular, I do not find the current proof more convincing than several earlier attempts at a proof, which I only dismissed as erroneous because they led to the conclusion that the harmonious Hilbert curves could not be interdimensionally consistent. Indeed, these earlier attempts had to be wrong, or my exhaustive search algorithm to find the curves and the algorithms I used to test the implementations all had to be wrong. I implemented comparison operators for all curves from Section \ref{sec:ternary} and~\ref{sec:hilbert} for $d \in \{2,3,4,5\}$ and verified interdimensional consistency with respect to the ordering of subregions down to at least the third level of refinement. Although now, everything seems to fall into place and I am convinced that my implementations and Section~\ref{sec:proofhilbert} do not contain any mistakes that cannot be repaired, it would still be good if we could find a simpler proof.

\paragraph{Clearing up interdimensional consistency under tie-breaking}
Tie-breaking is a subtle issue which was touched on in Sections \ref{sec:intro} and \ref{sec:definitions}, but not resolved completely. Space-filling curves are surjective, but not bijective---in fact, they cannot be bijective. Often this issue is dealt with by considering the level-$\ell$ approximation of the space-filling curve---say, the Hilbert curve $h_2$, for concreteness---which is the polygonal curve whose vertices are the lower left corners of the $4^\ell$ successive squares at the $\ell$-th level of refinement. Clearly, for any $\ell$, the level-$\ell$ approximation visits each point of the unit square at most once, and in the limit for $\ell \rightarrow \infty$, the approximation visits all points of the unit square. Unfortunately, this does not imply that each point is visited only once: this approach offers no way around the fact that, for example, $h_2(1/6) = h_2(1/2)$. This can be seen by writing out $h_2(1/6)$ and $h_2(1/2)$ in binary notation:
$h_2(1/110) = h_2(0.00101010...) = (0.0111...,0.0111...) = (0.1,0.1)$, and $h_2(1/10) = h_2(0.1) = (0.1,0.1)$. This makes the inverse of $h_2$ ill-defined.

In the definitions and proofs in this paper, we dealt with the ambiguity as follows. We allow any `inverse' $\inv{h_2}$ such that for all points $p$ in the unit hypercube, we have $h_2(\inv{h_2}(p)) = p$; and we prove that no matter which inverse $\inv{h_2}$ we choose, we can choose at least one valid inverse for each higher-dimensional harmonious Hilbert curve so that we get an interdimensionally consistent set.

However, our implementations of comparison operators are essentially based on level-$\ell$ approximations. The idea is that this implements the tie-breaking method that always assigns each point $p$ to the quadrant that lies to its upper right. In fact, this approach would not work correctly if we would accept coordinates that end with an infinite sequence $111...$: it would put the points $(0.0111,0.0111)$, $(0,1)$ and $(0.1,0.1)$ in this order, even though the first and the last point are the same. Our algorithms do work correctly as long as the coordinates of each point $p$ are represented as binary numbers with a \emph{finite} number of digits. Recall that our algorithms do not accept the number~1 for a coordinate.

The above observations imply that the proofs of interdimensional consistency for the curves as such, are not necessarily valid for the curves with the specific choice of inverses as implemented in our algorithms. The fact that our implementations of mono-Wunderlich curves are interdimensionally consistent could still be established easily by inspecting the algorithms (see Section~\ref{sec:implementternary}). For harmonious Hilbert curves the situation is less clear. The algorithm for harmonious Hilbert curves is much harder to analyse. The proof that harmonious Hilbert curves are interdimensionally consistent relies on the proper handling of points with coordinates equal to 1, but the implementation does not accept the number 1 and has not been verified to work correctly when given the alternative representation 0.111... of the same number. Although everything seems to indicate that this is ultimately just a matter of presentation, it is one that would be good to resolve.

\paragraph{Interdimensionally consistent sets of mono-Hilbert curves with interior origins?}
In Section~\ref{sec:implementhilbert} I had to give up on extending the harmonious Hilbert curves to filling the full $d$-dimensional space (rather than only the space of points with non-negative coordinates) while retaining interdimensional consistency. The reason was that I could not find a way to move the origin into the interior of the $d$-dimensional unit hypercube, while maintaining that the $d$-dimensional harmonious Hilbert curve shows the $(d-1)$-dimensional curve on the intersection of the unit hypercube with any axis-parallel $(d-1)$-dimensional plane through the origin. I conjecture that this is, in fact, impossible.
One could also try to find out if an interdimensionally consistent set of 2-regular vertex-continuous space-filling curves---not necessarily mono-curves---could exist.

\paragraph{How unique are sets of interdimensionally consistent curves?}
The fact that the three-, four-, five-, six- and seven-dimensional harmonious Hilbert curves emerged as the unique results of an automated search, leads me to conjecture that the harmonious Hilbert curves actually constitute the \emph{only} interdimensionally consistent set of mono-Hilbert curves. The catch is that, for $d$-dimensional curves with $d \geq 4$, the search was restricted to symmetric curves in which the subregions are traversed in the order of binary reflected Gray codes. Although this seems to be the most natural choice, and although this is indeed the only solution for $d = 2$ and $d = 3$, I cannot be sure that no alternative solutions for $d \geq 4$ were ruled out by these restrictions.

While I believe that the mono-Hilbert curves may contain only one interdimensionally consistent set, we already saw that the mono-Wunderlich curves contain at least four interdimensionally consistent sets. There must be many more, if only because the subregion in the centre can be rotated freely without affecting the curves shown on the faces of the unit hypercube. One could try to find a full characterization of all interdimensionally consistent sets of mono-Wunderlich curves---possibly with the added condition that the back faces also show the $(d-1)$-dimensional curves.

\subsection*{Acknowledgements}
I thank \emph{Freek van Walderveen} for finding the first efficient algorithm to generate the permutations of the $d$-dimensional harmonious Hilbert curve, thus providing a strong indication that the work presented in this paper would indeed be possible. The idea of composing space-filling curves to produce \halffix- and diagonal-extradimensional curves came to me while analysing the implications of a conjecture by \emph{Arie Bos}.

\bibliographystyle{abbrv}

\end{document}